\renewcommand{\baselinestretch}{1.2}
\renewcommand{\thefootnote}{\fnsymbol{footnote}}
\newcommand{\ccolhd}[1]{\multicolumn{1}{c}{#1}}
\newcommand{\Gcon}{$G^N_{\mathrm{con}}\,$}
\newcommand{\Gcov}{$G^N_{\mathrm{cov}}\,$}
\newcommand{\Gdag}{$G^N_{\mathrm{dag}}\,$}
\newcommand{\Greg}{$G^N_{\mathrm{reg}}\,$}
\newcommand{\Elin}{E_{\mathrm{\,lin}}}
\newcommand{\E}{{\it E}}
\newtheorem{prop}{Proposition}
\newtheorem{coro}{Corollary}
\newtheorem{defn}{Definition}
\newtheorem{lemma}{Lemma}
\newtheorem{theorem}{Theorem}
\newcommand{\ci}{\mbox{\protect{ $ \perp \hspace{-2.3ex}
\perp$ }}}
\newcommand{\n}[0]{\hspace*{.35em}}
\newcommand{\nn}[0]{\hspace*{.7em}}
\newcommand{\fla}{\mbox{$\hspace{.05em} \prec
\!\!\!\!\!\frac{\nn \nn}{\nn}$}}
\newcommand{\fra}{\mbox{$\hspace{.05em} \frac{\nn
\nn}{\nn
}\!\!\!\!\! \succ \! \hspace{.25ex}$}}
\newcommand{\dal}{\mbox{$  \frac{\n}{\n}
\frac{\; \,}{\;}  \frac{\n}{\n}$}}
\renewcommand\section{\@startsection{section}{1}{\z@}%
{-3.25ex\@plus -1ex \@minus -.2ex}{1.5ex \@plus .2ex}%
{\normalfont\large\bfseries}}
\renewcommand\subsection{\@startsection{subsection}{2}{\z@}%
{-3.25ex\@plus -1ex \@minus -.2ex}%
{1.5ex \@plus .2ex}%
{\normalfont\normalsize\bfseries}}
\renewcommand\subsubsection{\@startsection{subsubsection}{3}{\z@}%
{-3.25ex\@plus -1ex \@minus -.2ex}%
{1.5ex \@plus .2ex}%
{\normalfont\normalsize\bfseries}}
\renewcommand\paragraph{\@startsection{paragraph}{4}{\parindent}%
{3.25ex \@plus1ex \@minus .2ex}%
{-1em}%
{\normalfont\normalsize\bfseries}}
\newcommand{\Gregone}{$G^N_{\mathrm{reg1}}\,$}
\newcommand{\Gregtwo}{$G^N_{\mathrm{reg2}}\,$}
\newcounter{repeat}
\newtheorem{repeatthm}[repeat]{Theorem}
\renewcommand{\baselinestretch}{1.2}
\begin{document}

\markright{}
\markboth{\hfill{\footnotesize\rm Kayvan Sadeghi and Nanny Wermuth
}\hfill}
{\hfill {\footnotesize\rm  Sequences of regressions} \hfill}
\renewcommand{\thefootnote}{}
$\ $\par
\fontsize{12}{14pt plus.8pt minus .6pt}\selectfont
\vspace{0.8pc}
\noindent{\Large \bf  Sequences of regressions and their independences\\[6mm]}
{\large  NANNY WERMUTH\\[2mm]}
{\it Department of Mathematics, Chalmers Technical University, Gothenburg, Sweden, and International Agency of Research on Cancer, Lyon, France; email: wermuth@chalmers.se\\[4mm]}
\noindent{\large KAYVAN SADEGHI\\[2mm]}
{\it Department of Statistics, University of Oxford, UK; email: kayvan.sadeghi@jesus.ox.ac.uk\\[6mm] }

\noindent{\bf ABSTRACT}: Ordered sequences of univariate or multivariate regressions provide
statistical models for analysing  data from randomized, possibly sequential  
interventions, from cohort or multi-wave panel studies, but also from 
cross-sectional  or retrospective studies.  Conditional independences are 
captured by what we name regression graphs, provided the generated distribution 
shares some properties with a  joint Gaussian distribution.  Regression graphs 
extend purely directed,  acyclic graphs by 
two types of undirected graph, one type   for   components of  joint responses 
and the other for  components of the  context vector variable. We review the special features and the history of regression 
graphs, prove   criteria  for Markov equivalence  and discuss the notion of a simpler statistical covering model. 
Knowledge 
of Markov equivalence provides alternative interpretations of a given 
sequence of regressions, is essential for machine learning strategies and 
permits to use the simple graphical  criteria of regression graphs 
on graphs for which the corresponding criteria are in general more complex.
Under the known conditions that   a Markov equivalent directed acyclic 
graph exists for any given regression graph,
we give a polynomial time algorithm to find one such graph. 
 \\[-1mm]


\noindent{\it Key words}:  Chain graphs, Concentration graphs,  Covariance graphs, Graphical Markov models, Independence graphs, Intervention models, Labeled trees, Lattice conditional independence  models, Structural equation models.

\section{Introduction} \label{intro}

A common framework  to model, analyse and interpret  data for several, partially ordered  joint or single  responses
is a sequence of multivariate or univariate regressions where the responses may be continuous or discrete or of both types.  Each response is to be generated by  a set of its regressors, called its {\bf \em directly explanatory variables}. Based on prior knowledge or on statistical analysis, one is to decide which of  the variables in a set of potentially explanatory ones are  needed for the generating process.  Thus,  for each response, a first ordering determines what is potentially explanatory, named the past of the response, and  what  can never be directly explanatory, named the future.  Furthermore,  no variable is taken to be explanatory for itself.

Corresponding {\bf \em regression graphs}  consist of  {\bf \em nodes} and  of {\bf \em edges  coupling distinct nodes}.
The {\bf \em nodes  represent the variables} and the {\bf \em edges stand for conditional dependences},
directed or undirected. The directly explanatory variables for an individual response variable $Y_i$ show in the graph as the set of nodes from which
arrows start and point to node $i$. These nodes are  commonly named the {\bf \em parents of node  i}.

Every missing edge corresponds to a conditional  independence statement.  Edges are  {\bf \em  arrows  for
directed dependences}  and   {\bf \em lines  for undirected dependences} among {\bf \em  variables on equal standing}, that is among components of joint responses
or of context variables. Undirected dependences are often also called associations.
A given regression graph reflects a particular  type of study which may be a simple experiment,
a more complex sequence of interventions or an observational study.

  One of the common features  of pure
 experiments and  of sequences of  interventions with randomized, proportional  allocation of individuals to treatments,   is that,  by study design,  some  variables can be  regarded  to act just like  independent random variables. For instance, in an experiment
with proportional numbers of individuals assigned randomly to each  level combination of  several experimental conditions,  the set of explanatory variables contains no edge in  the corresponding regression graph, reflecting  a situation like mutual independence. Similarly, with fully randomized interventions, each treatment variable has  exclusively arrows starting from its node but no incoming arrow.  After statistical analysis, some conditional independences may be  appropriate additional simplifications which show as further missing
 edges.

Sequences of interventions  give a time ordering   for some of the variables. A time order  is also  present in cohort or  multi-wave panel studies and  in
retrospective studies which focus on investigating  effects of variables at  one  fixed time point in the past,  without the chance of  intervening. By contrast, in a strictly cross-sectional study, in which observations for all variables are  obtained at the same time, any particular  variable ordering is only assumed  rather than implied by  actual time.

The node set  is at the planning stage of empirical studies ordered into ordered sequences  of single or joint responses, $Y_a,$ $Y_b$, $Y_c\dots$ that we call {\bf \em blocks of variables on equal standing} and draw them  in figures as {\bf \em boxes}.  This determines for  the following statistical analyses that within each block there are undirected edges and between blocks there are directed edges,  the arrows. The first block  on the left  contains the {\bf \em primary responses} of  $Y_a$ and  the last block on the right contains  {\bf \em context variables, also named the background variables}. After statistical analyses,
arrows may start  from nodes within any block but  always end  at a node in one of the blocks in the future. Thus,  there are no arrows pointing to  context variables and all arrows point in the same direction, from right to left.  An {\bf \em intermediate variable} is a response to some variables
and  also  explanatory for other variables so that it has  both  incoming and  outgoing arrows in the regression graph.

As an  example, we take  data from a retrospective study with 283  adult females answering questions about their childhood when visiting their general practitioner, mostly for some minor health problems; see \cite{Hardt2008}.
A well-fitting graph is shown  in Figure \ref{childadv}. It contains two binary variables, $A,B$ and six quantitative variables.  Except for the directly recorded feature age in years, all other variables are derived  from answers to questionnaires,
coded so that high values correspond to high scores.

 The three blocks $a,b,c$ reflect here  a time-ordering of vector variables, $Y_a,Y_b,Y_c$ with  $Y_a$ representing  the joint response of primary interest, $Y_b$ an intermediate vector variable and $Y_c$ a context vector variable.
The  three individual components of  the primary response $Y_a$ are different aspects of how the respondent recollects aspects of her relationship to the mother. The intermediate variable $Y_b$ has two components that reflect severe distress during childhood. The three components of the context  variable $Y_c$ capture background information about the respondent and
about her family.

The graph of Figure \ref{childadv}, derived after  statistical analyses,  shows among other independences that $Y_a$ is conditionally independent of $Y_c$ given $Y_b$, written compactly in terms of sets of nodes as $\bm{a\ci c|b}$.  None of the components of
$Y_c$ has an arrow pointing  directly to a component of $Y_a$, but sequences of arrows   lead  indirectly from $c$ to $a$ via $b$.
\begin{figure} [H]
\centering
 \includegraphics[scale=.49]{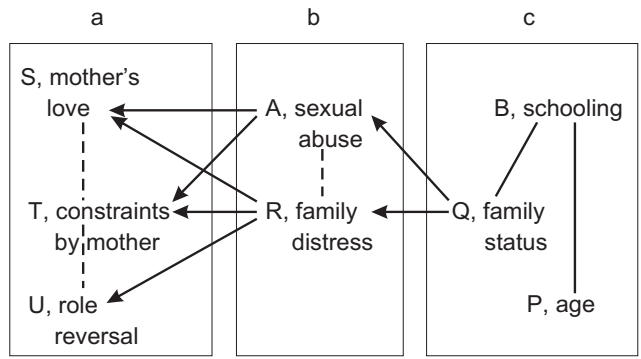}
 \caption{A well-fitting regression graph for data on $n=283$  adult females; within boxes are $Y_a, Y_b, Y_c$;  corresponding ordered partitioning of the node set on top of the boxes.} \label{childadv}
 \end{figure}
This  says, for instance,  that prediction of $Y_a$ is not improved by knowing the context variable $Y_c$ if information on the more recent  intermediate variable $Y_b$ is available. More  interpretations of the independences  are given later.
When some edges are missing and  each edge present corresponds to a substantial dependence,
 the graph  may also be viewed  as a research hypothesis on which variables are needed to generate the joint distribution;  see \cite{WerLau90}.  The goodness-of-fit  of such a hypothesis  can be tested  in future studies.

Two models are {\bf \em Markov equivalent} whenever   their associated graphs capture the same {\bf \em independence structure}, that is the graphs lead to  the same set of  implied  independence statements.   Markov equivalent  models  cannot be distinguished on the basis of statistical goodness-of-fit tests for any given set of data. This may pose a problem in machine learning contexts.  More precisely, knowledge about Markov equivalent models is essential for designing search
procedures that converge with an increasing  sample size to a true  generating graph; see \cite{CaKo03} for searches within the class of  {\bf \em directed acyclic graphs}, which consist exclusively of arrows and  capture independences  of ordered sequences in  single response  regressions.

\begin{figure} [H]
\centering    \includegraphics[scale=.49]{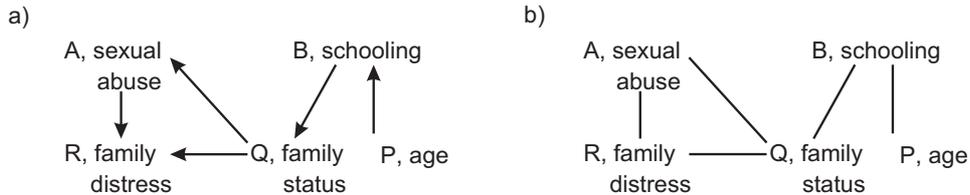}
 \caption[]{\small{Two Markov equivalent graphs to the one of $Y_b,Y_c$ of Figure \ref{childadv}.}}  \label{MEQch}
 \end{figure}

 More importantly though,  Markov equivalent models  may offer alternative interpretations of a given well-fitting model or open
the possibility of using different types of fitting algorithms.

As we shall see in Section 7, the graph for nodes $A,R,B,P,Q$  in blocks $b$ and $c$ of Figure  \ref{childadv} is Markov equivalent to both graphs of  Figure \ref{MEQch}.
 From knowing the Markov equivalence to  the graph in Figure  \ref{MEQch}a),
  the joint response model for $Y_b$ given $Y_a$ may also be fitted in terms of univariate regressions and  from the Markov equivalence to the graph in Figure  \ref{MEQch}b), one knows for instance  directly, using  Proposition 1 below, that  sexual abuse is independent of age and schooling given knowledge about family distress and family status.

Regression graphs are a subclass of {\bf \em the maximal ancestral graphs}  of  \cite{RichSpir02} and  these  are a subclass of  the {\bf \em summary graphs} of  \cite{Wer10}.  The two types  are called {\bf \em corresponding graphs} if they result after marginalising over a node set $m$ and conditioning on a disjoint node set  $c$ from a given  directed acyclic graph.  Both are  {\bf \em independence-preserving graphs}  in the sense that they give the  independence structure implied by the generating  graph for all the remaining nodes and further conditioning or marginalising can be carried out just as if the possibly much  larger generating graph were used.
The summary graph permits,  in addition, to trace possible distortions of generating dependences as they arise in conditional dependences among the remaining variables, for instance in parameters  of the maximal ancestral graph models.

In the following  Section 2, we introduce further  concepts and the notation needed to state at the  end of Section 2, some of  the main results of the paper
and  related results in the literature. In Section 3,  a well-fitting regression graph is derived for data of chronic pain patients.
Sections 4, 5  and 6 may be skipped if one wants to  turn directly to formal definitions, new results  and proofs in Section 7. Section 4 reviews linear recursion relations  that are mimicked by graphs and lead to the standard and to  special ways of combining probability statements, summarized here in Section 5. In Section 6, some of the previous results in the literature for graphs and for Markov equivalences are highlighted. The Appendix contains details of the regressions analyses in Section 3.

\section{Some further concepts and notation}

Figure \ref{hypfigbef} shows  five ordered blocks,   to introduce the notion of connected  components of the graph to represent conditionally independent
responses given their common past.
\begin{figure} [H]
\centering
 \includegraphics[scale=.49]{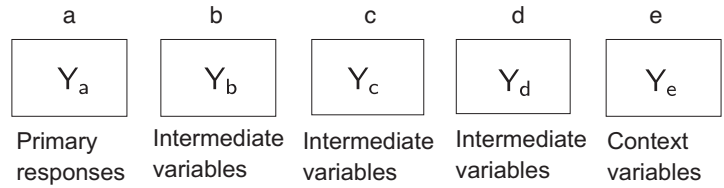}
 \caption[]{\small{A typical  first ordering: here of five vector variables, $Y_a, \ldots Y_e$; primary response $Y_a$ listed on the left, context variable $Y_e$ on the right, intermediate variables in between.}} \label{hypfigbef} \vspace{-2mm}
 \end{figure}
 \begin{figure} [H]
\centering
 \includegraphics[scale=.47]{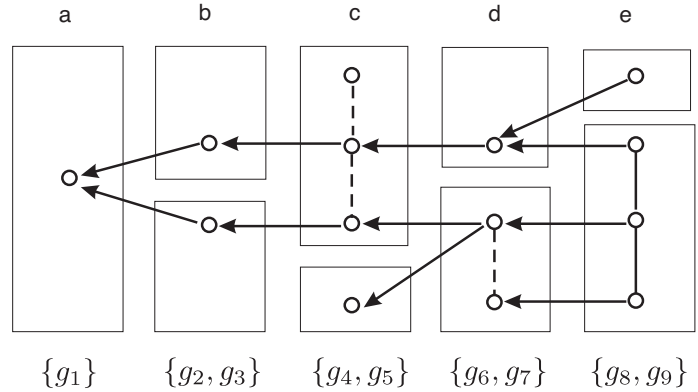}
 \caption[]{\small{A  regression graph for  14 variables  corresponding to  blocks $a$ to $e$ of Figure \ref{hypfigbef}.}}
  \label{hypfigaft} \vspace{-2mm}
 \end{figure}

 In the example of a  regression graph in Figure \ref{hypfigaft} corresponding to   Figure \ref{hypfigbef}, $Y_a$ is a single response, $Y_b$ has two component variables, both of $Y_c$ and $Y_e$  have four and $Y_d$ has three.
Each of the blocks $b$ to $e$ shows  two {\bf \em stacked boxes}, that is subsets of nodes  that are without  any  edge joining them. This is to indicate that disconnected  components of a given response  are conditionally independent given their past and that disconnected components of the context variables are completely independent.

 Graphs with dashed lines are {\bf \em covariance graphs denoted by  $\bm{G^N_{\mathrm{cov}}}$}, those with full lines are {\bf \em concentration graphs denoted by  $\bm{G^N_{\mathrm{con}}}$}; see \cite{WerCox98}. The names are to remind one
of their parametrisation in {\bf \em regular joint Gaussian distributions}, for  which the covariance  matrix is invertible and gives the {\bf \em  concentration matrix}.  A zero $ik$-element in   \Gcov means $i\ci k$ and a zero $ik$-element in  \Gcon means $i\ci k|\{1, \dots, d\}\setminus\{i,k\}$; see
 \cite{Wer76a} or \cite{CoxWer96},  Section 3.4.

  The regression graph  of  Figure \ref{hypfigaft} is consistent with the first ordering in Figure \ref{hypfigbef}  since no additional ordering is introduced, as it would have been  by arrows within blocks $a$ to $e$.  After statistical analysis, blocks of the first ordering
 are often subdivided into  the  connected components of the graph, $g_j$, shown here  in Figure \ref{hypfigaft} with  the help of the stacked boxes.
  For  several nodes in $g_j$,  each pair of nodes  $(i,k)$ is connected by at least one  undirected $ik$-path within $g_j$.  An {\bf \em  $\bm{ik}$-path} connects its endpoint nodes $i,k$  via a sequence of  edges coupling distinct other nodes along the path, named  {\bf \em the path's inner nodes}.

  For a regression graph,  \Greg, the node set $N$  has an ordered  partitioning into two subsets, $N=(u,v)$ distinguishing response nodes within $u$ from context nodes within $v$. The
   {\bf \em  connected components $\bm g_j$}, for $j=1, \dots J$, are  the disconnected, undirected  graphs  that remain after removing all arrows   from the graph.  Thus, the  displayed, stacked  boxes in Figure \ref{hypfigaft} are just a visual aid.  We say that  there is {\bf \em an edge  between subsets $\bm a$ and $\bm b$} of $N$ if there is an edge with one node in  $a$ and the other node in $b$. Then,  the subgraph induced by nodes $a\cup b$  {\bf \em is said to connected  in $a$ and $b$}.

   For  any one block  of stacked boxes,   different  orderings are possible.    We speak of a {\bf \em  compatible ordering} if each {\bf \em arrow}   starting  at a node in any $g_j$   points  to a node  in  $g_{<j}=g_1 \cup \dots \cup g_{j-1}$, but never to a node in $g_{>j}=g_{j+1}\cup \dots \cup g_{J}$, the {\bf \em past of $\bm{g_j}$}.

   {\bf \em   Full lines} are  edges  coupling context variables within $v$.
{\bf \em    Dashed lines} couple  joint responses within $u$. The regression graph is {\bf \em  complete} if every node pair is coupled. In this case, the statistical model is   {\bf \em saturated} as it is unconstrained
for some  given family of distributions.


Let   $g_1, \dots g_J$ denote  any   compatible  ordering of the connected components
of  \Greg, then  a corresponding   joint density factorises as
\begin{equation}  f_N={\textstyle \prod _{j=1}^{J}} f_{g_j|g_{>j}},  \label{factdens}\end{equation}
into sequences regressions for the joint responses $g_j$ within $u$ and for separate concentration graph models
in disconnected $g_j$ within $v$.

In a {\bf \em generating process of $\bm{f_N}$ over a regression graph}, one starts with the density of  $g_{J}$ continues with the one of  $g_{J-1}$ given $g_{J}$ up to the density of $g_1$ given $g_{>1}$ so that  \eqref{factdens} is used for one given compatible ordering of the node set $N$.
Every  $ik$-edge present denotes  a  non-vanishing  conditional  dependence of $Y_i$ and $Y_k$ given some vector variable $Y_c$, written as $i\pitchfork k|c$ so that the graph is said to represent a {\bf \em dependence base} or to capture a dependence structure. The generating process attaches  the following  meaning to each $ik-$edge present in \Greg
 \begin{eqnarray} \label{pairw}& (i)&  \n  i\pitchfork k | g_{>j}   \quad \text{ for } i, k \text{ both in  a response component } g_j \text{ of } u \nonumber\\
 &(ii)& \n  i \pitchfork k| g_{>j}\setminus \{k\}  \nn  \text{ for } i \text{ in } g_j  \text{ of  }  u  \text{ and  } k \text{  in } g_{>j}\\
 &(iii) &\n     i\pitchfork k| v \setminus\{i,k\}  \text { for }  i, k \text{ both in  a context component }  g_{j} \text{ of  } v. \nonumber
 \end{eqnarray}

Notice that only for context variables, conditioning is on all other context variables  while for responses
conditioning is  exclusively on variables in their past.  When the dependence sign $\pitchfork$ is replaced
by the independence sign $\ci$, equations (2) give with missing edges for node pairs  $i,k$ the {\bf pairwise independence statements defining  the
independence structure of {\bm \Greg},} given the composition and the intersection property  discussed below.

An equivalent,  more compact description of the set of defining pairwise independences
and a proof of  equivalence of this {\bf \em pairwise Markov property} to the global Markov property has been given for the class of mixed loopless graphs,
which contain regression graphs as a subclass; see Sadeghi and Lauritzen (2011); see also  \cite{KangTian09}, \cite{PeaPaz87}, \cite{MarLup11} for    relevant, previous results.
 A {\bf \em global Markov property} permits to read off the graph all independence statements implied by the graph.

 Equation  \eqref{pairw}$(i)$ holds for the  conditional covariance graphs of  joint  responses
$g_j$ within $u$ having dashed lines as edges,   \eqref{pairw}$(ii)$ for the  dependences  of the single
responses within $g_j$ on variables in the past of $g_j$ having arrows as edges and equation  \eqref{pairw}$(iii)$   for the   concentration graph  of the context variables within $v$ having full lines as edges.
For instance, from the definition of the missing edges corresponding to \eqref{pairw}, one can  derive for Figure \ref{childadv},  $S\ci U|bc$  by \eqref{pairw}$(ii)$,   $P\ci Q|B$  by \eqref{pairw}$(iii)$, and both $A\ci B|PQ$ and  $A\ci P|BQ$ by  \eqref{pairw}$(i)$ using first principles and the two special properties of the generated distributions named
composition and intersection.

Notice that each missing edge of a regression graph corresponds to  an independence statement for the uncoupled node pair; see also Lemma \ref{lem:21} and Lemma \ref{lem:22} below. Therefore, regression graphs represent one special class of the so-called {\bf \em independence graphs}. Whenever
a regression graph \Greg consists  of {\bf \em two disconnected graphs}, for $Y_a$ and $Y_b$ say, since no path leads from a node in $a$ to a node in $b$, and $a\cup b=N$, then $a\ci b$ or $f_N=f_a f_b$, and the two vector variables
may be  analysed separately. Therefore,  we treat in Section 7 of this paper only connected   regression graphs.

All {\bf \em graphs discussed}  in this paper   {\bf \em have no loops}, that is no edge  connects a node to itself and  they have {\bf \em at most one edge between two different  nodes}.  Recall that an  $ik$-path in such a graph  can be described   by a sequence of its nodes.
By convention,   an $ik$-path without inner nodes is an edge. For every $ik$-edge,  the endpoints  differ, $i\neq k$.
An $ik$-path with $i=k$ has at least three nodes and is called a {\bf \em cycle}.

A three-node path  of arrows
may contain only one of  the  three types of inner nodes shown in  Figure \ref{dagV}, called {\bf \em transition,  source and sink node}, respectively.
 \begin{figure}[H]
 \centering
           {\includegraphics[scale=.47]{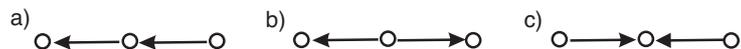}}
            \caption[]{\small{The three types of three-node paths in directed acyclic graphs with inner nodes named  a) transition, b) source, c) sink node (or in directed acyclic graphs: collision node).}}
  \label{dagV}      \end{figure}
 A  {\bf \em path  is  directed} if all its inner nodes are transition nodes. In a  {\bf \em directed  cycle}, all edges  are arrows
pointing   in the same direction and one returns to a starting node following the direction of the arrows. A regression graph contains  no directed cycle and no  {\bf \em semi-directed cycles}, which have at least one undirected edge in an otherwise directed cycle.  If  an arrow starts on a directed $ik$-path at  $k$ and points to  $i$ then  node $k$ has been named  an {\bf \em ancestor} of node $i$ and  node $i$ a  {\bf \em descendant} of node  $k$.

The {\bf \em subgraph induced by  a subset $\bm a$ } of the node set  $N$ consists of  the nodes within $a$ and of the edges present in the graph within $a$.
A special type of induced subgraph,  needed here, consisting of three nodes and two edges, is named a  {\bf \em {\sf V}-configuration} or just a
{\sf V}. Thus, a three-node path forms a $\sf V$  if the induced subgraph has two edges.

An {\bf \em $\bm{ik}$-path is  chordless}  if for each of its  three  consecutive nodes $(h,j,k)$, coupled by an $hj$- edge and and $jk$-edge,  there is no additional $hk$-edge present in the graph.
In a  {\bf \em chordless cycle}  of  four or more nodes, the subgraph induced by every consecutive three  nodes  forms  a $\sf V$ in the graph. An {\bf \em undirected graph is   chordal}  if it contains no chordless cycle  in four or more nodes.

In regression graphs, there may occur the   three types of {\bf \em collision {\sf V}s} of Figure \ref{collV}.  \begin{figure}[H] \centering
           {\includegraphics[scale=.47]{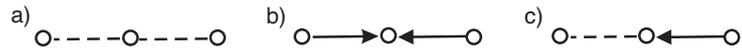}} \caption[]{\small{The three types of collision {\sf V}s in regression graphs: a) undirected, b) directed or sink-oriented, c) semi-directed;  for uncoupled path endpoints, the inner node is excluded from every independence statement that the graph implies for these endpoints.}}
\label{collV}
 \end{figure}
 Notice that in a directed acyclic graph, the only possible collision {\sf V} is directed and coincides with  the sink {\sf V} of Figure \ref{dagV}c).

An important  common feature of the three {\sf V}s of Figure \ref{collV} is that   the inner node is excluded from every  independence statements for its endpoints; see \eqref{pairw} and Lemma  \ref{lem:21}.
 In all other five possible types of {\sf V}-configurations of a regression graph, named {\bf \em transmitting {\sf V}s}, the inner node is instead included in the independence statement for the endpoints;  see \eqref{pairw} and  Lemma  \ref{lem:22} below. Notice that  for uncoupled endpoints, both paths a) and b) of Figure \ref{dagV} are transmitting {\sf V}s.
 Similarly, the definition of transmitting and collision nodes  remains unchanged if the {\sf V}s in Figure \ref{collV}
 are interpreted as $ik$-paths for which there may be an additional $ik$-edge  present in the graph.


A {\bf \em collision path}  has as inner nodes exclusively collision nodes, while a {\bf \em  transmitting path} has as inner nodes exclusively transmitting  nodes.  A chordless collision path in four nodes contains at least one dashed  line. In particular, it is impossible to replace all the edges  in such a four-node path by arrows and not generate at least one  transmitting {\sf V}. Thereby, the meaning of this  missing edge would  be changed and hence contradict its unique definition given from the generating process. The {\bf \em skeleton} of a graph results by replacing each edge present by a full line.
Now, two of the main new results of this paper can be  stated.

\begin{theorem}\label{thm:1}
Two  regression   graphs are Markov equivalent if and only if they have the same skeleton and the same sets of collision {\sf V}s,
irrespective of the type of edge.
\end{theorem}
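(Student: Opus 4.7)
The plan is to prove both directions by exploiting the equivalence of the pairwise property \eqref{pairw} with the global Markov property, the latter being available for the class of mixed loopless graphs to which regression graphs belong. In the global criterion, a path is active given a conditioning set precisely when every transmitting inner node lies outside the set and every collision inner node lies in the set or has a descendant in it, so the criterion sees only the skeleton, the collision-versus-transmitting classification of inner nodes of paths, and certain ancestral information.

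For the ``only if'' direction, suppose \Gregone and \Gregtwo are Markov equivalent. I would first show that their skeletons coincide: if some $ik$-edge were missing in one but present in the other, the pairwise property applied to the graph with the missing edge produces an explicit independence $i\ci k\mid C$, whereas in the graph where $i$ and $k$ are adjacent no conditioning set separates them, contradicting equivalence. Given a common skeleton, I would next examine each unshielded {\sf V}-configuration at $j$ with endpoints $i,k$: by the remarks following Figure~\ref{collV}, made precise by Lemmas~\ref{lem:21}--\ref{lem:22} below, every implied independence for $(i,k)$ excludes $j$ from the conditioning set in a collision {\sf V} and includes $j$ in a transmitting {\sf V}. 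If the classification of the {\sf V} differed between \Gregone and \Gregtwo, an independence implied by one graph would then fail in the other, again contradicting equivalence.

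For the ``if'' direction, suppose the two regression graphs share a skeleton and the same set of collision {\sf V}s. By the global Markov property, it suffices to show that the separation criterion yields the same verdict in both graphs for every pair $(i,k)$ and every conditioning set $C$. Since the skeletons agree, the $ik$-paths viewed as node sequences are identical; since the collision {\sf V}s agree, the classification of each inner node on any such path as collision or transmitting is also identical. What remains is to check that the descendant-based activation of collision nodes likewise agrees between the two graphs.

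This last point is the main obstacle, because two regression graphs with the same skeleton and the same collision {\sf V}s could in principle differ in the orientation of edges that are incident to no collision, so descendant sets could disagree. I would handle it by exploiting the layered block structure of regression graphs and the prohibition of (semi-)directed cycles: any reorientation preserving the set of collision {\sf V}s must be compatible with a common compatible ordering of the connected components, so a short case analysis over the three collision {\sf V} types of Figure~\ref{collV} shows that any directed path reaching a collision node in one graph is matched by such a path in the other. Combining these observations, the two separation criteria agree on every path and every conditioning set, so the implied independence structures coincide.
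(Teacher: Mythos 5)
Your ``only if'' direction matches the paper's: equality of skeletons follows from Lemma~\ref{lem:1} and equality of the collision {\sf V} sets from Lemmas~\ref{lem:21} and \ref{lem:22}, so that half is fine. The ``if'' direction, however, has two genuine gaps. First, the step ``since the collision {\sf V}s agree, the classification of each inner node on any such path as collision or transmitting is also identical'' is false: agreement of collision {\sf V}s constrains only \emph{unshielded} triples. If three consecutive nodes $(k_{n-1},k_n,k_{n+1})$ on a path form a triangle, the triple is not a {\sf V}, and the two graphs may orient it so that $k_n$ is a collision node on the path in one graph and a transmitting node in the other without disturbing any {\sf V} (e.g.\ the complete graph on three nodes oriented as $a\fra b \fla c$, $a \fra c$ versus $a\fra b\fra c$, $a\fra c$). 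This is exactly why the paper does not argue over arbitrary paths but introduces a \emph{shortest active path} $\pi$ with respect to $c$ (Definition~\ref{def:shortestpi}), shows that all triangle configurations on $\pi$ other than the two of Figure~\ref{trian} would yield a strictly shorter active path and hence cannot occur, and then, for those two surviving types, uses the surrounding \emph{unshielded} triples --- which are {\sf V}s and therefore do agree between the graphs --- to force the same orientation of the triangle, and hence activity of $\pi$, in the second graph.

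Second, your proposed repair of the descendant problem rests on the claim that any reorientation preserving the collision {\sf V}s ``must be compatible with a common compatible ordering of the connected components.'' That is false: the directed paths $1\fra 2\fra 3$ and $1\fla 2\fla 3$ have the same skeleton and the same (empty) set of collision {\sf V}s and are Markov equivalent, yet they admit no common compatible ordering and their ancestor relations are reversed. So you cannot conclude in general that descendant sets of collision nodes coincide across the two graphs. The paper's shortest-path device again does the work here: it argues that a collision node sitting in a triangle on a shortest active path must lie in $c$ itself, since being merely an ancestor of $c$ would produce a shorter active path via the chord, which localises the descendant question instead of requiring global agreement of ancestor relations. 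Without the shortest-active-path machinery, both of your key intermediate claims fail, so the ``if'' direction as written does not go through.
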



\begin{theorem} \label{thm:2}
A regression graph with a chordal graph  for the context  variables  can be oriented to be   Markov equivalent to a directed acyclic graph in the same skeleton,  if and only if it does not contain any  chordless collision path in four nodes.
\end{theorem}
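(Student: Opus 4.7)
The plan is to reduce both directions to Theorem~\ref{thm:1}, which characterises Markov equivalence by coincidence of skeleton and of collision \textsf{V}s. It therefore suffices to decide when a regression graph $G$ admits an orientation of its lines as arrows producing a DAG with the same skeleton and the same collision \textsf{V}s.

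For necessity, suppose such a DAG $D$ exists and, for contradiction, that $G$ contains a chordless collision path $a-b-c-d$. Both inner nodes must remain collision in $D$, so $D$ contains $a\to b\leftarrow c$ and $b\to c\leftarrow d$; the arrows $b\to c$ and $c\to b$ then form a $2$-cycle, contradicting acyclicity. This is essentially the remark preceding the theorem.

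For sufficiency I build the DAG in two phases. \emph{Phase 1} orients the full lines of the context block $v$ by a perfect elimination ordering of its chordal concentration subgraph, producing a DAG on $v$ without immoralities. Because equation~\eqref{pairw}(iii) places the inner node of a \textsf{V} of full lines in its conditioning set, no such \textsf{V} is collision in $G$, and Phase~1 neither destroys nor creates collision \textsf{V}s. \emph{Phase 2} orients every dashed edge $b-c$ inside a response component. Call $b$ \emph{pinned} at $b-c$ when some \textsf{V} $(x,b,c)$ is a collision \textsf{V} of $G$, and symmetrically for $c$. The hypothesis precludes simultaneous pinning, for otherwise $x-b-c-y$ would be a chordless collision path on four nodes: the edges $x-c$ and $b-y$ are forbidden by the two \textsf{V} conditions, and $x-y$ is not counted as a chord of a four-node path. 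I orient $b-c$ towards the unique pinned endpoint when it exists, and freely otherwise.

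The key bookkeeping observation is that orienting a dashed edge only \emph{removes} an arrowhead --- at the chosen tail, while the arrowhead at the head is already present --- so the orientation can only destroy collision \textsf{V}s, never create new ones; pinning preserves every collision \textsf{V} of $G$. The main obstacle is therefore acyclicity of the whole orientation. Acyclicity across blocks follows from the compatible ordering $g_J,\ldots,g_1$, and inside the context from the PEO, so the nontrivial part is the dashed subgraph of a single response component. Here I would first show that the partial orientation given by pinned edges is itself acyclic: a directed cycle $b_0\to b_1\to\cdots\to b_k=b_0$ among pinned orientations, traced through the \textsf{V}s that do the pinning, should be shown to produce a chordless four-node collision path and so contradict the hypothesis. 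Granted this, any topological extension orients the remaining free edges acyclically. Turning the existence of a pinning cycle into an explicit chordless four-node collision path is the delicate case analysis where I expect the proof to spend the most effort.
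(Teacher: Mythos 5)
Your necessity argument is complete and correct: by Theorem~\ref{thm:1} a Markov equivalent directed acyclic graph in the same skeleton must reproduce both collision {\sf V}s of a chordless four-node collision path as sink {\sf V}s, which forces the middle edge to be oriented both ways. This is, if anything, tidier than the corresponding sentences in the paper. Your sufficiency architecture also matches the paper's: Algorithm~1 performs exactly your Phase~1 (maximum cardinality search on the chordal context block) and your Phase~2 (first sink-orienting the collision {\sf V}s, then the remaining dashed lines), and your ``bookkeeping observation'' that orienting a dashed line only deletes an arrowhead at the chosen tail, hence can destroy but never create a collision {\sf V}, is the same reason the paper gives in the proof of Lemma~\ref{leml} for preservation of the collision {\sf V}s.

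The genuine gap is the step you explicitly defer: that the orientations forced by pinning cannot form a directed cycle inside a response component. This is the actual mathematical content of the ``if'' direction, and announcing it as ``the delicate case analysis where I expect the proof to spend the most effort'' leaves the proof unestablished at its crux. The difficulty is real, not notational: already for a complete dashed triangle $b_0,b_1,b_2$ whose three edges are pinned cyclically by external witnesses $x_0,x_1,x_2$, the natural four-node path through two triangle vertices and two witnesses need not be chordless (a witness may be adjacent to the far triangle vertex), so one must iterate over the adjacencies of the witnesses before a chordless four-node collision path materialises; forced cycles of greater length, or with chords, generate further cases. The paper discharges precisely this point inside Lemma~\ref{leml}: Algorithm~1 attaches numerical labels as it sink-orients, and the assertion that every within-block arrow ends up pointing from a higher to a lower label --- equivalently, that no labelling conflict can arise --- is justified by the absence of chordless four-node collision paths. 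Your plan is viable and parallel to the paper's, but until the implication ``pinning cycle $\Rightarrow$ chordless four-node collision path'' is actually proved, or Lemma~\ref{leml} is invoked in its place, the sufficiency direction remains open.
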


Sequences of regressions were  introduced and studied, without specifying  a concentration graph model for the context variables,  by \cite{CoxWer93}, \cite{WerCox04},  under the name of multivariate regression  chains,  reminding one of the  sequences of unconstrained models  that the class contains for Gaussian joint responses. An extension to graphs including a concentration
graph had already been proposed for directed acyclic graphs by \cite{KiiSpeCar84}. By this type of extension, the global Markov property of the  graph remains unchanged.

A criterion  for Markov equivalence of summary graphs has been derived by
\cite{Sadeghi09} who also shows that two different criteria  for maximal ancestral graphs are equivalent, those due to
\cite{ZhaZheLiu05} and  to \cite{AliRicSpi09}.  These available Markov equivalence results and the associated proofs increase  considerably in complexity, the larger the model class.
On the other hand, the Markov equivalence criterion of Theorem  \ref{thm:1} is simple and includes  as special cases all available equivalence results for directed acyclic graphs, for covariance graphs and for concentration graphs,  as set out in detail in Sections 6 and 7 here.

For  context variables taken as given, Gaussian regression graph models coincide with  a large subclass of  structural equation models  (SEMs), those permitting local modeling due to the factorisation property \eqref{factdens} and they are
without any  {\bf \em endogenous responses}.  Such responses have  residuals that are correlated with some of its regressors so that  the so-called endogeneity problem is generated,   by which,  for joint Gaussian distributions, a zero equation parameter need not correspond to any  conditional independence statement and a nonzero equation parameter is not a measure of conditional dependence. The consequence is  that ordinary least squares estimates of such equation parameters are typically strongly distorted.  This was recognized by \cite{Havelm43} who received a Nobel prize in economics for this insight in 1989.

For traditional uses of SEMs   see,  for instance,  \cite{Jor81}, \cite{Boll89}, \cite{Kline06}, while  \cite{Pea09} advocates SEMs as a framework  for causal inquiries. In the econometric literature forty years ago,
independences were always regarded as `overidentifying' constraints.

 For discrete
variables, more attractive features of regression graph models were derived  by \cite{Drton09}, who speaks of chain graph models of type IV
for multivariate regression chains in the case all variables on equal standing have covariance graphs. He   proves that each member in this class belongs to a curved exponential family, for a discussion of this notion see, for instance,  \cite{Cox06}, Section 6.8.
Discrete type IV models  form also a  subclass of marginal models; see   \cite{RudBerNem10}, \cite{BerRud02}.
Local independence statements that involve only variables in the past    are equivalent to more complex local independences
used by  \cite{Drton09}; see \cite{MarLup11}. These local definitions imply the pairwise independence formulation for missing edges corresponding to  equation \eqref{pairw} for any regression graph,  \Greg.

Two other types of chain graph have been studied as joint response models in statistics, the so-called
{\bf \em  AMP chain graphs} of \cite{AndMadPer97}, and the  {\bf \em LWF chain graphs} of \cite{LauWer89} and   \cite{Fryd90}.  They use the same factorisation as in equation \eqref{factdens}, but they are  suitable for modeling
data from intervention studies only when they  are Markov equivalent to a regression graph. The reason is  that
the conditioning set    for  pairwise independences  of responses includes in general other nodes within the same connected component. For AMP graphs,  the independence form of equation \eqref{pairw}  $(i)$   is replaced by
$$(i')  \nn \nn  i\ci k | g_{>j-1}\setminus{\{i,k}\}   \quad \text{ for } i, k \text{ both within a response component } g_{j} $$ while \eqref{pairw}  $(ii)$ and \eqref{pairw}  $(iii)$ remain unchanged.
For LWF graphs,  $(i)$ is also replaced   by $(i')$ and the independence form of $(ii)$ by
$$ (ii') \nn \nn  i \ci k| g_{>j-1}\setminus{\{i,k} \} \nn  \text{ for } i \text{ within a } g_j   \text{ and  } k \text{  in } g_{>j}.$$
As a consequence, each undirected subgraph in  an AMP chain graph is a concentration graph, and  an LWF chain
graph consists of sequences of concentration graphs. For the corresponding different types of parametrisations
of joint Gaussian distributions see \cite{WerWieCox06}.

Not yet systematically approached is the search for  {\bf \em covering models that capture most but not all
independences}  in a more complex graph but which may be easier to fit than the reduced  model;  see \cite{CoxWer90}. For regression graphs, details are explained here for a small example in Section 4, and
in Section 7,  first results are  given in Propositions \ref{MEQtoAMP} to \ref{MEQtoCON} and discussed using
Figures \ref{intersect} and \ref{indconc}.

Before we turn to  the different types of missing edges in more detail, we derive  a well-fitting regression graph for  data given by  \cite{Kappesser97}.

\section{Deriving and interpreting a regression graph}

 For  201 chronic pain patients, the role of the site of pain  during  a three week stay in a chronic
 pain clinic was to be examined.
  In this study, it was of main interest to investigate the  changes
  in two main symptoms before and after stationary treatment and  to understand  determinants of  the  overall  treatment success as
  rated by the patients, three months after they had left the clinic.
Figure \ref{figfopain} shows a first ordering of the variables derived in discussions between psychologists,
physicians and statisticians.

 The first ordering of the variables gives for each single or joint response a list of its possible explanatory variables,
 shown in  boxes to the right, but in Figure \ref{figfopain}  only those variables  are displayed that remained after  statistical analyses relevant for the responses of main interest.

 Selecting for each response all its directly explanatory variables from this list  and checking for remaining dependences among  components of joint responses, provides enough
 insight to derive  a  well-fitting regression graph model. With this type of local modeling, the
 reasons for the model choice are made transparent.

Of the available  background variables,  age, gender, marital status and others, only the binary variables,
 level of formal schooling (1:=less than ten years, 2:= ten or more years) and the number of previous illnesses in years
(min:=0, max:=16) are displayed in the far right box as the relevant context variables. The response of primary
interest, self-reported success of treatment, is listed in the box to the far left. It  is a score that
ranges between 0 and 35,  combining a  patient's  answers to a specific questionnaire.

 \begin{figure}[H]
\begin{center}
\includegraphics[scale=.49]{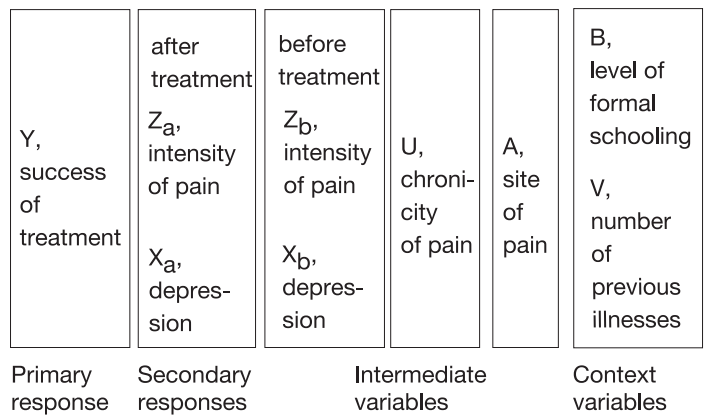}
\end{center}
\caption{\small First ordering of variables in the chronic pain study. There  are two joint responses, intensity of pain and depression. They   are the   main symptoms
 of chronic pain, measured here  before and after treatment. The components of each  response  are to be modeled conditionally
given the variables  listed  in boxes to their right.  }
\label{figfopain}
\end{figure}

There are a number of intermediate variables. These are  both explanatory for some variables and responses to others.
 Of these,  two are regarded as joint responses since they represent
 two symptoms  of a patient, intensity of pain  and depression.  Both are measured before treatment and directly after the three-week stationary stay.  Questionnaire scores are available  of depression
 (min:=0, max:=46) and
of the self-reported intensity of pain (min:=0, max:=10).
Chronicity of pain  is a score (min:=0, max:=8) that
 incorporates different aspects,  such as the frequency and duration of pain attacks,  the spreading of pain and the use of pain relievers.  
  In this study,  the  patients have  one of  two main sites of pain,  the pain is either
 on their  upper body, `head, face, or neck' or on their `back'.

 A well-fitting regression  graph is shown in  Figure \ref{figregpain}.
The   graph summarizes some important aspects of the results of the statistical analyses
for which details are  given in the Appendix.
In particular, it tells  which of the variables are directly explanatory, that is
which are important  for generating and predicting  a response,  by  showing arrows that start from each of these directly explanatory variables and
point to the response.

  \begin{figure}[H]
\begin{center}
\includegraphics[scale=.49]{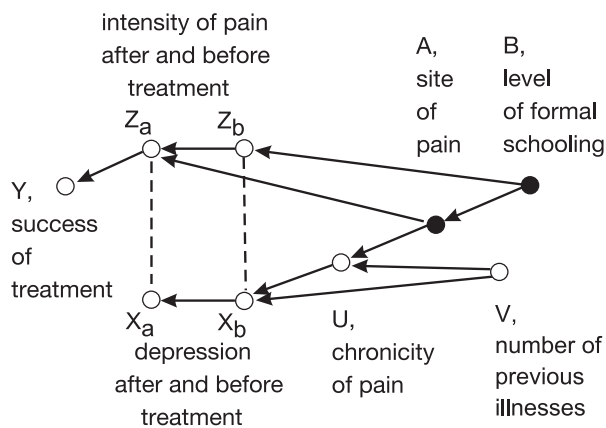}
\end{center}
\caption{\small Regression graph, well compatible with the data,  that results from the reported statistical analyses.  Discrete variables
are drawn as dots, continuous ones as circles.}
  \label{figregpain}
\end{figure}

Variables listed  to the right  of a response but without an arrow ending at this  response
do  not substantially improve the prediction of the response when used in addition to the directly
explanatory  variables. For instance, for treatment success,  only the pain intensity after the clinic stay is directly
explanatory and this pain intensity is an important  mediator (intermediate variable) between treatment success and site of pain.

 \begin{figure}[H]
\vspace{-4mm}
\begin{center}
\includegraphics[scale=.39]{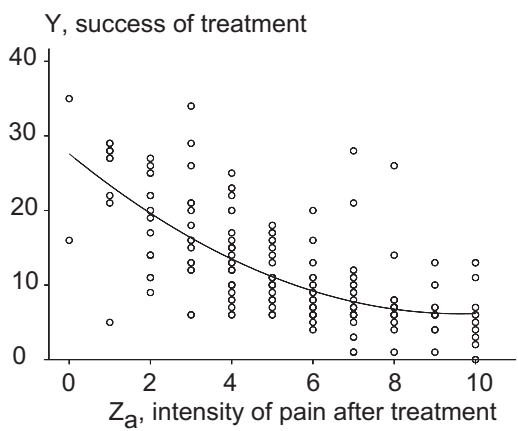}
\end{center}
\caption{\small Form of dependence of primary response $Y$ on $Z_{a}$.
}
\label{figpainquad}
\vspace{-4mm}
\end{figure}

 Scores of self-reported treatment  success are low  for almost  all patients with high pain scores after treatment that is for scores higher than 6; see Figure   \ref{figpainquad}.   Otherwise,
treatment success is typically judged to be higher  the lower the intensity of pain after treatment. This explains the nonlinear dependence of $Y$ on $Z_a$.

As mentioned before,  for back pain patients, the chronicity scores  are on average higher than for head-ache
patients and connected with a higher  chronicity of the pain are higher scores
of depression.
 These patients may possibly have tried too late,  after the acute pain had started,  to  get well focused help.
Both before and after treatment, highly depressed patients tend  to report higher intensities of pain
than  others.


The study provides no information  on which variables may explain these dependences between the symptoms that remain after having taken the available explanatory variables into account. However,  hidden  common explanatory  variables may exist in both  cases since these remaining dependences  between the symptoms do not
depend systematically on any other observed  variable.

Some  variables are  {\bf \em indirectly explanatory}.  An arrow starts from an indirectly explanatory variable,
  and points via a sequence of arrows   and intermediate variables to the response variable. For instance, the level of formal schooling
 and the site of pain  are both indirectly  explanatory for each of the symptoms  after treatment and  for the overall
 treatment success.

 Once the types and directions of the direct dependence are taken into account,  the regression graph
 helps to trace the development of chronic pain, starting from the context information on the level of schooling
 and the number of previous illnesses of a  patient. Thus,  patients with more
years of formal schooling are more likely to be chronic  head-ache patients. Patients  with a lower level of formal schooling
are more likely to be back-ache patients, possibly because more of them have jobs involving hard physical work.
 Back-ache patients  reach higher stages of  the chronicity of pain and report higher intensity of pain
 still after treatment and are therefore  typically less satisfied with the treatment they had received.

{\bf \em Graphical screening for nonlinear relations} and interactive effects (Cox and Wermuth, 1994) pointed to the nonlinear dependence of treatment
success  on intensity of pain after treatment but to no  other such relations.
The regression graph model is said to fit the data well because for each single response  separately, there is  no indication
that adding a further variable would substantially change the generated conditional dependences.  The seemingly unrelated dependences  of the symptoms
after treatment on those before treatment agree so well with the observations that they differ also little
from regressions computed separately, see the appropriate tables in the Appendix.

Had there been no nonlinear relation and no categorical variables as responses, the overall model fit  could also  have
been tested within the framework of  structural equation models once the  regression graph is available.
This graph is derived  here with the  local modeling steps that use the first ordering of the variables, just
in terms of univariate, multivariate and seemingly unrelated regressions. The regression graph  provides
a hypothesis that may be tested  locally and/or globally in future studies that include the same
 set of  nine variables.  In this case, no variable selection strategy would be used or needed.

 The available results for changes of the regression graph (Wermuth, 2011) that result after marginalising and conditioning
 provide a  solid basis for comparing the results of  any  sequence of regressions with studies that contain the same
 set of core variables but which have some of the variables omitted or which consider subpopulations,
 defined by levels or level combinations of other  variables. For instance for  comparisons with the current study,   the same chronicity score may not be recorded  in another  pain clinic or data may be available only for
  patients with pain in the upper body.

  {\bf \em The main substantive results of this empirical study} are that site of pain needs to be taken into account also in future studies since it is an important mediator between the intrinsic characteristics of a patient, measured here by the given context variables,  for both the overall treatment success and for the symptoms after treatment. For back-ache patients, the chronicity of pain and the depression score is higher than for the head-ache patients and the treatment is less successful since the intensity of pain remains high after the treatment  in the clinic.

In the following section we give three-variable examples of a Gaussian joint response regression and of  the three  subclasses
of regression graphs that have only one type of edge, of the covariance, the concentration and the directed acyclic graph to  discuss the different types of conditional dependences and the possible types of independence constraints associated with the corresponding regression graphs.

\section{Regressions, dependences and recursive relations}

For a quantitative response with linear dependences, the simple  regression model dates
back at least several centuries.  The fitting of
a least-squares regression line  had been developed separately by Carl Friedrich Gauss (1777--1855),
Adrien-Marie Legendre (1752--1833)   and Robert Adrain (1775 --1843).  The method extends  directly to models with several
explanatory variables.

The  most studied regression models are  for joint Gaussian distributions.
Regression graphs mimic important features of these linear models but represent also relations  in other distributions of continuous  and discrete variables, which  permit in particular nonlinear and interactive dependences.
In a  regular joint Gaussian distribution, 
let the mean-centered vector variable $Y$ have dimension three, then we write the covariance matrix, $\Sigma$, and the concentration matrix $\Sigma^{-1}$, with graphs shown in  Figure \ref{figcovcon3}, as
$$ \Sigma=\begin{pmatrix} \sigma_{11} & \sigma_{12} &\sigma_{13} \\
. & \sigma_{22 } &\sigma_{23} \\ .&.& \sigma_{33} \end{pmatrix}, \nn \Sigma^{-1}=\begin{pmatrix} \sigma^{11} & \sigma^{12} &\sigma^{13} \\
. & \sigma^{22 } &\sigma^{23} \\ .&.& \sigma^{33} \end{pmatrix},$$
where the  dot-notation indicates entries in a symmetric matrix. 
\begin{figure} [H] \vspace{-4mm}
\begin{center}
 \includegraphics[scale=.47]{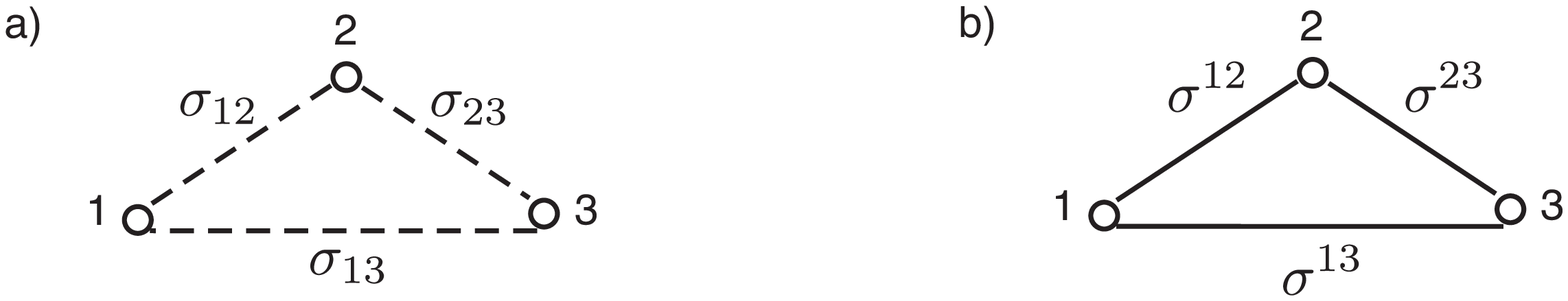}
 \caption[]{\small{For unconstrained trivariate Gaussian distributions, the parameters attached to the edges are those corresponding to  a) a covariance graph, b) a concentration graph. }} \vspace{-4mm}
        \label{figcovcon3}
 \end{center}
 \end{figure}
With  the edge of node pair  $(1,2)$ removed,  both graphs turn into a {\sf V} but have different interpretations.
The resulting independence constraints are for    Figures \ref{figcovcon3} a) and b), respectively,
$$  1\ci 2 \iff (\sigma_{12}=0) \nn \text{ and }  \nn  1\ci 2|3  \iff  (\sigma^{12}=0),$$
where the latter derives as an  important property of concentration matrices; for proofs see  \cite{CoxWer96},  Section 3.4 or \cite{WerCoxMar06}, Section 2.3.
For other distributions,  the independence interpretation of these two types of undirected   graph remains unchanged, but not the parametrisation. A similar statement holds for directed acyclic graphs and, more generally, for regression graphs.

For the linear equations that lead to a complete directed acyclic graph   for a trivariate  Gaussian distribution with mean zero, one  starts with three mutually independent Gaussian residuals $\varepsilon_i$ and takes the following
 system of equations, in which for instance $\beta_{1|3.2}$ is a regression coefficient for the dependence of response $Y_1$ on $Y_3$ when $Y_2$ is an additional regressor. Because of the form of the equations, one speaks of   triangular systems  also when the distribution of the residuals
is not Gaussian, but the residuals are just uncorrelated, or expressed equivalently, if each residual is uncorrelated with the regressors in its equation:
$$ Y_1=\beta_{1|2.3}Y_2+\beta_{1|3.2} Y_3+\varepsilon_1$$
\begin{equation}Y_2=\beta_{2|3} Y_3+\varepsilon_2  \label{triangeq3}\end{equation}
$$Y_3=\varepsilon_3.$$
 When the residuals do not follow Gaussian distributions, the probabilistic independence interpretation is lost,
but  the lack of a  linear relation can be inferred with any vanishing regression coefficient.

In econometrics,  Hermann Wold (1908--1992) introduced such systems  as linear recursive equations with uncorrelated residuals.
Harald Cram\'er  (1893--1985) used the term linear  least-squares equations for residuals  in a population being uncorrelated with the regressors  and
the notation for the regression coefficients is an adaption of the one introduced by Udny Yule  (1871--1951)   and William Cochran  (1909--1980).

 In  joint Gaussian distributions, independence constraints  on  triangular systems mean  vanishing equation parameters and missing edges in directed acyclic graphs, such as
$$  1\ci 2|3  \iff   (\beta_{1|2.3}=0) \nn \text{ and }  \nn  2 \ci 3 \iff  (\beta_{2|3}=0).$$ The  complete directed acyclic graph defined implicitly with  equations \eqref{triangeq3}  is displayed in Figure  \ref{figdagreg3}a).
\begin{figure} [H]
\begin{center}
 \includegraphics[scale=.45]{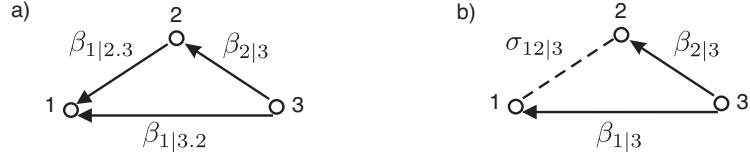}
 \caption[]{\small{Parameters of a  Gaussian distribution in: a) a complete \Gdag, b) a complete \Greg.}}
        \label{figdagreg3} \vspace{-3mm}
 \end{center}
 \end{figure}
For the smallest  joint response model with the complete graph shown in Figure \ref{figdagreg3}b), we take  both Gaussian variables $Y_1$ and $Y_2$ to depend on a Gaussian  variable $Y_3$, to get  equations \eqref{multreg3} with residuals having zero means and being uncorrelated with $Y_3$:
\begin{equation}  Y_1=\beta_{1|3}Y_3+ u_1, \nn \nn
Y_2=\beta_{2|3} Y_3+u_2,   \nn \nn Y_3=u_3.   \label{multreg3}\end{equation}
Here,  $\sigma_{12|3}=\E(u_1 u_2)$.  The generating processes and hence the interpretation differs for  the two
models in equations \eqref{triangeq3} and \eqref{multreg3}. In the corresponding graphs of Figures \ref{figdagreg3}a) and \ref{figdagreg3}b),  the vanishing of  the edges for pairs (1,2) and (2,3)
mean the same independence constraints  since
$$  1\ci 2|3  \iff   (\sigma_{12|3}=0)  \iff (\beta_{1|2.3}=0) \nn \text{ and }  \nn  2 \ci 3   \iff  ( \beta_{2|3}=0),$$
but the edges  for pair (1,3)  capture different  dependences, $1\pitchfork 3$ and $1\pitchfork 3|2$, respectively.
Again, taking away any edge generates a {\sf V}. Taking away any two edges means to combine  two independence statements.  This is discussed further in  the next section.

One of the  special important features of the
linear least-squares regressions is that the residuals  are uncorrelated with the regressors. The effect is that the model part
coincides with a conditional linear expectation as illustrated here with a model for response $Y_1$ and regressors $Y_2,Y_3$, which we take, as mentioned before,  as   measured in deviations from their means.  For instance, one gets for
 $$   Y_1=\beta_{1|2.3}Y_2+\beta_{1|3.2}Y_3+\varepsilon_1, $$
 \begin{equation}  \Elin(Y_1|Y_2,Y_3)=\beta_{1|2.3}Y_2+\beta_{1|3.2}Y_3 \n.\end{equation}

 There is a recursive relation for least-squares regression coefficients; see  \cite{Coch38}, \cite{CoxWer03}, \cite{MaXieGeng06}.
 It  shows for instance with
 \begin{equation} \beta_{1|3}=\beta_{1|3.2}+\beta_{1|2.3}\beta_{2|3} \label{rreg}\end{equation}
 that $\beta_{1|3.2}$, the partial coefficient of $Y_3$ given also $Y_2$ as a regressor for $Y_1$,  coincides with the marginal
 coefficient, $\beta_{1|3}$, if and only if $\beta_{1|2.3}=0$ or $\beta_{2|3}=0$.

The method of maximizing the likelihood was recommended by  Sir Ronald  Fisher  (1890--1962) as a general estimation technique  that applies also  to  regressions with  categorical or quantitative responses. One of the most attractive features of the method concerns properties of the estimates.
Given two  models with parameters
that are in  one-to-one correspondence, the same one-to-one transformation leads from   the maximum-likelihood estimates under one model to those of the other.

Different single response regressions, such as   logistic,  probit, or  linear regressions, were described as special cases of the generalized linear model by
\cite{NelWed72}; see also \cite{McCullNeld89}.  In all of these  regressions, the vanishing of the  coefficient(s)  of  a regressor  indicates conditional independence of the response given all directly explanatory  variables for this response.

The general linear model with a vector response, also called multivariate linear regression, has  identical sets of regressors  for each component variable of a response vector variable. Maximum-likelihood estimation of regression coefficients for a joint Gaussian distribution reduces to linear-least squares fitting for each component separately; see \cite{And58}, Chapter 8.

With different sets of regressors for the components of a vector response, seemingly unrelated regressions (SUR)    result
and   iterative methods are needed for estimation; see \cite{Zell62}.
 For small sample sizes,   a given solution of the likelihood equations of a Gaussian SUR model may not  be  unique; see \cite{DrtRich04},  \cite{Sund10}, while for exclusively discrete variables this will never happen; see \cite{Drton09}. For mixed variables, no corresponding results are available yet.

 In general, there  often exists {\bf \em a covering model with nice estimation properties.} For instance,  one of the  above described  Gaussian SUR models  that requires iterative fitting has regression graph  $$\circ \fra\circ \dal\circ\fla\circ \n,$$ A generating process starts with independent explanatory variables, each of which relates
 only  to one of the two response components, but these are correlated given both regressors. There is a simple covering model, in which two missing arrows  are added to the graph to obtain a general linear model. In that case,
 the new graph does not provide a dependence base, but closed form maximum-likelihood estimates are available.

For a vector variable of categorical responses only, the multivariate logistic regression of \cite{GlonMcCul95}  reduces to separate main effect logistic regressions for each component of the response vector provided that certain   higher-order interactions
vanish; see \cite{MarLup11}. In the context of structural equation models (SEMs), dependences of binary categorical variables are modeled in terms of probit regressions. These do not differ substantially from logistic regressions whenever
the   smallest and largest events occur at least with probability 0.1; see \cite{Cox66}.

Multivariate linear regressions as well as SUR models belong to
the framework of SEMs even though this general class had been developed in econometrics to deal appropriately
with endogenous responses. 
Estimation methods for SEMs were discussed in the  Berkeley  symposia on mathematical statistics  and probability from 1945 to 1965, but some identification issues  have been settled only recently; see
 \cite{FoyDraDrt11} and for relevant previous results  \cite{BriPea02},  \cite{StaWer05}.

In statistical models that treat all variables on equal standing, the variables are not assigned roles of responses or regressors  and undirected measures of dependence are used instead of coefficients of directed dependence. In the  concentration graph models, the undirected  dependences are conditional  given all remaining variables  on equal standing.

 For instance, for categorical variables, these models  are better  known as  graphical log-linear models; see \cite{Birch63},  \cite{Causs66}, \cite{Goodm70}, \cite{BisFieHol75},  \cite{Wer76a}, \cite{DaLauSpeed80}. For  Gaussian random variables,  these had been introduced as  covariance selection models; see \cite{Dem72}, \cite{Wer76b}, \cite{SpeedKiv86},   \cite{DrtonPer04}, and  for mixed variables as graphical models for conditional Gaussian (CG) distributions; see  \cite{LauWer89}, \cite{Edw00}.

For a mean-centered vector variable $Y$,  the elements of the covariance matrix $\Sigma$ are  $\sigma_{ij}=\E(Y_i Y_j)$.
If $\Sigma$ is invertible, the covariances  $\sigma_{ij}$ are in a one-to-one relation with  the concentrations $\sigma^{ij}$, the elements  of the concentration matrix $\Sigma^{-1}$. There is a recursive relation for concentrations; see \cite{Dem69}. For a trivariate distribution
\begin{equation} \sigma^{23.1}=\sigma^{23}-\sigma^{12}\sigma^{13}/\sigma^{11},  \label{rrcon}\end{equation}
where $\sigma^{23.1}$ denotes the concentration of $Y_2,Y_3$ in their  bivariate marginal distribution. Thus,  the overall concentration  $\sigma^{23}$ coincides with $\sigma^{23.1}$
if and only if $\sigma^{12}=0$ or $\sigma^{13}=0$.

Alternatively  in covariance graph models, the undirected measures  for variables on equal standing are   pairwise marginal dependences.  For Gaussian variables, these models had been introduced as   hypotheses
linear in covariances; see \cite{And73}, \cite{Kau96},    \cite{Kii87}, \cite{WerCoxMar06},  \cite{ChaDrtRich07}.  For categorical variables, covariance graph models  have been studied  only more recently; see \cite{DrtRich08}, \cite{LupMarBer09}. Again, no similar estimation results are available for general mixed variables yet.

There is also a recursive relation for covariances; see \cite{And58}, Section 2.5.
 It shows for instance, for  just three components  of $Y$ having a Gaussian distribution, with
\begin{equation}  \sigma_{12|3}=\sigma_{12}-\sigma_{13}\sigma_{23}/\sigma_{33},  \label{rrcov}\end{equation}
where $\sigma_{12|3}$ denotes  the  covariance of $Y_1,Y_2$ given $Y_3$.  Therefore, $\sigma_{12|3}$ coincides with $\sigma_{12}$
if and only if $\sigma_{13}=0$ or $\sigma_{23}=0$. By equations \eqref{rreg}, \eqref{rrcon}, \eqref{rrcov}, a
 unique independence statement  is associated with the endpoints of any {\sf V} in a trivariate Gaussian distribution.

In the context of multivariate  exponential families of distributions, concentrations are special canonical parameters and   covariances are special moment parameters with estimates of  canonical and moment parameters being asymptotically independent; see
\cite{Barn78}, page 122. Regression graphs  capture independence structures for more general types of distribution,  where operators
for transforming graphs mimic  operators for transforming  different  parametrisations of joint Gaussian distributions; see \cite{WerWieCox06},  \cite{WieWer10}, \cite{Wer10}.

 In particular,  by removing an edge from any $\sf V$ of a regression graph, one introduces an additional independence
 constraint just as in a regular  joint Gaussian distribution. For this, the generated distributions have to satisfy the composition and intersection property in addition to the general properties, as discussed in the next section.

\section{Using graphs to combine  independence statements}

 We now state the four standard properties of independences of any multivariate distribution; see e.g. \cite{Daw79}, \cite{Stu05}, as well as  two  special  properties of joint Gaussian distributions. The six taken together,  describe the combination and decomposition of  independences in  regression graphs, for instance  those resulting by removing  edges.  We  discuss when these six properties apply also to regression graph models.

Let $X, Y,Z$ be  random (vector) variables, continuous, discrete or mixed.  By using the same compact notation, $f_{XYZ}$  for a  given joint  density,  a probability
distribution or a mixture and by denoting the union of  say $X$ and $Y$ by $XY$, one has
 \begin{equation}  X\ci Y|Z   \iff    (f_{XYZ}=f_{XZ} f_{YZ}/f_{Z}), \label{eqci1} \end{equation}
 where for instance $f_{Z} $ denotes  the marginal density or probability distribution of $Z$. Since the order of listing variables for a given density is irrelevant, {\bf \em symmetry of conditional independence} is one of the standard properties,  that is

 $$(i)  \n  X\ci  Y|Z   \iff    Y\ci X|Z .$$
 Equation \eqref{eqci1} restated for instance for the   conditional distribution of $X$ given $Y$ and $Z$,
 $f_{X|YZ}=f_{XYZ}/f_{YZ}$,  is
\begin{equation}  X\ci Y|Z   \iff  (f_{X|YZ}=f_{X|Z}). \label{eqci2} \end{equation}

When two edges are removed from  a  graph in Figures \ref{figcovcon3}  and \ref{figdagreg3}, just one coupled pair remains, suggesting that the single node is  independent of the pair.

For instance  in  Figure \ref{figdagreg3}a),  with nodes $1,2,3$ corresponding in   this order to $X,Y,Z$,  removing the arrows for (1,2) and (2,3), leaves  (1,3) disconnected from
node 2. For any joint density,  implicitly
generated as  $f_{XYZ}=f_{X|YZ}f_{Y|Z}f_{Z} $, one has  equivalently,
$$ (X \ci Y|Z  \text{ and } Y\ci Z)  \iff   XZ\ci  Y. $$ In general, the {\bf \em contraction property} is  for $a,b,c,d$ disjoint subsets of $N$:
$$(ii) \n   (a\ci b|cd   \text{ and } b\ci c|d)  \iff ac\ci b|d.$$
It has become common to say that a {\bf \em distribution is generated over a given \Gdag} if the distribution factorizes as specified by the graph for  any  compatible ordering. For instance, for a trivariate distribution generated over the collision  ${\sf V}$ of Figure \ref{figdagreg3}b) obtained by removing the edge for (2,3), both orders $(1,2,3)$ and $(1,3,2)$ are compatible with the graph and
$f_{XYZ}=f_{X|YZ}f_{Y}f_{Z} $.

Conversely, suppose that $XZ \ci Y$ holds, then this implies  $X \ci Y$ and $Z \ci Y$ so that for instance  the same two
 edges as in Figure \ref{figdagreg3}b) are missing  in the corresponding covariance graph of Figure \ref{figcovcon3}a). In general, the {\bf \em decomposition property} is for $a,b,c,d$ disjoint subsets of $N$:
$$(iii) \n   a\ci bc|d     \implies (a\ci b|d \text{ and } a\ci c|d).$$

In addition,  $XZ \ci Y$  implies   $X \ci Y|Z$ and $Z \ci Y|X$ so that for instance
 the same two  edges  as in Figure \ref{figdagreg3}a)  are missing in the corresponding concentration graph of Figure \ref{figcovcon3}b).  In general, the {\bf \em weak union property} is for $a,b,c,d$ disjoint subsets of $N$:
$$(iv) \n   a\ci bc|d     \implies (a\ci b|cd \text{ and } a\ci c|bd).$$
Under some regularity conditions, all joint distributions  share     the four
properties $(i)$ to $(iv)$.

Joint distributions, for which the reverse implication of the decomposition  property $(iii)$ and of the weak union property $(iv)$ hold such as a regular joint Gaussian distribution,
are said to have, respectively,  the {\bf \em composition property} $(v)$ and the {\bf \em intersection property}  $(vi)$, that is for $a,b,c,d$ disjoint subsets of $N$:
 $$(v) \n   (a\ci b|d \text{ and } a\ci c|d) \implies  a\ci bc|d,$$
$$(vi) \n   (a\ci b|cd \text{ and } a\ci c|bd)  \implies a\ci bc|d.$$

The standard graph theoretical separation criterion has different consequences for
the two types of undirected graph corresponding for Gaussian distributions to concentration and to
covariance matrices. We say  {\bf \em a   path intersects  subset set $\bm c$} of node set $N$  if it has an inner node in $c$ and let $\{a, b, c, m\}$ partition  $N$   to  formulate known Markov properties. The notation is to remind one that with any independence statement $a\ci b|c$, one implicitly has marginalised over the remaining nodes in $m=V\setminus\{a\cup b\cup c\}$, i.e.\ one considers the marginal joint distribution of $Y_a,Y_b, Y_c$.
\begin{prop}\label{prop:31}{\em \cite{Lau96}}.
A  concentration graph, \Gcon, implies  $a\ci b|c$ if and only if every path  from $a$ to $b$ intersects $c$.
\end{prop}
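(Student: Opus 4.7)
The plan is to establish both directions of the equivalence, relying on the intersection property that the paper assumes for all distributions under consideration. Throughout I write $\mathrm{ne}(i)$ for the neighbours of node $i$ in \Gcon.

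For the sufficiency direction (separation implies independence), I would follow the classical route from the pairwise Markov property to the global one. The defining equation \eqref{pairw}(iii) gives the pairwise statements $i\ci k\mid N\setminus\{i,k\}$ for every missing edge. Repeated application of the intersection property merges these into the local Markov property $i\ci N\setminus(\{i\}\cup\mathrm{ne}(i))\mid\mathrm{ne}(i)$: for each non-neighbour $k$ of $i$, one can successively absorb the other non-neighbours into the conditioning set while the pairwise statements remain valid, and then combine the resulting pieces via intersection. I would then prove the global property by induction on $|N\setminus(a\cup b\cup c)|$. For the inductive step, pick a node $v\notin c$ that has a neighbour in $a$; since $c$ separates $a$ from $b$, such a $v$ cannot lie in $b$. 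Applying the local statement at $v$ allows me to absorb $v$ into $a$, invoke the inductive hypothesis in the strictly smaller problem, and reassemble using the intersection property to conclude $a\ci b\mid c$.

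For the necessity direction, I would argue contrapositively via an explicit construction. If some path from a node $\alpha\in a$ to some node $\beta\in b$ avoids $c$, I would exhibit a regular joint Gaussian distribution whose concentration matrix has a zero in position $(i,k)$ exactly when the edge $ik$ is missing from \Gcon. This distribution satisfies every pairwise independence encoded by the graph, so it is consistent with the graph. By choosing the nonzero entries generically, the partial correlation $\mathrm{corr}(Y_\alpha,Y_\beta\mid Y_c)$ transmitted along the connecting path does not vanish in the marginal over $m=N\setminus(a\cup b\cup c)$. Hence $a\ci b\mid c$ cannot be a logical consequence of the graph.

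The main obstacle is the bookkeeping in the sufficiency induction: one must ensure that the node $v$ selected at each step is always available and that every invocation of the intersection property is applied to genuinely disjoint sets. A transparent alternative, valid when the density is strictly positive, is to appeal to the Hammersley--Clifford factorisation of $f_N$ over the cliques of \Gcon: since $c$ separates $a$ from $b$, no clique contains nodes from both $a$ and $b$, so the clique potentials can be grouped into a product of one factor depending only on variables in $a\cup c\cup m_a$ and another depending only on $b\cup c\cup m_b$ for an appropriate split $m=m_a\cup m_b$. Summing over the nuisance variables in $m$ then yields $a\ci b\mid c$ directly.
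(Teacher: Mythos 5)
The paper itself offers no proof of this proposition: it is quoted as a known result from Lauritzen (1996), so there is no internal argument to compare against, and your proof is essentially the standard one from that source. Both halves are sound in outline: the sufficiency half (pairwise $\to$ local $\to$ global under the intersection property) is the Pearl--Paz argument, the necessity half is the usual completeness construction via a regular Gaussian whose concentration matrix vanishes exactly at the missing edges, and the Hammersley--Clifford alternative you mention is also correct for strictly positive densities. Two points in the sufficiency induction deserve more care than your sketch gives them. First, absorbing a node $v\in m$ that is reachable from $a$ without meeting $c$ into $a$ needs only the inductive hypothesis plus \emph{decomposition} (since $c$ still separates $a\cup\{v\}$ from $b$); the intersection property is not what does the work there, and you must also handle the nodes of $m$ that are reachable from $b$, or from neither, by absorbing them into $b$ or into either side. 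Second, the place where intersection is genuinely indispensable is the base case $m=\emptyset$ with $|a|+|b|>2$: there one splits $a=a_1\cup a_2$, obtains $a_1\ci b\mid c\cup a_2$ and $a_2\ci b\mid c\cup a_1$ by backward induction on the size of the conditioning set, and only then combines them by intersection. Your sketch conflates these two stages; once separated, the bookkeeping you worry about resolves itself. For the necessity half, ``generic entries'' should be backed by one explicit witness (e.g.\ a concentration matrix supported on a single $a$--$b$ path avoiding $c$, with off-diagonal entries small enough to keep it positive definite), after which the vanishing locus is a proper algebraic subset and genericity is legitimate.
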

\begin{prop}\label{prop:32}{\em \cite{Kau96}.}
A covariance graph, \Gcov,  implies  $a\ci b|c$ if and only if every path  from $a$ to $b$ intersects $m$.
\end{prop}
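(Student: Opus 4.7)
The plan is to prove Proposition~\ref{prop:32} by specializing the global m-separation criterion for mixed loopless graphs (e.g.\ Sadeghi and Lauritzen, 2011) to the bi-directed case. The key structural observation is that in $G^N_{\mathrm{cov}}$ every edge is dashed, so every inner node of every path is a collision node; moreover, since the graph has no arrows, the only descendant of any node is itself. Under these two restrictions, the usual active/blocked dichotomy of m-separation collapses to a single rule: a path from a node in $a$ to a node in $b$ is active given $c$ if and only if every one of its inner nodes lies in $c$.

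With this observation, the proof splits into soundness and completeness. For the soundness direction, if every $ab$-path has an inner node in $m=N\setminus(a\cup b\cup c)$, then every $ab$-path contains an inner collider outside $c$ and is therefore blocked; applying the composition and intersection properties to the pairwise independences attached to missing edges of $G^N_{\mathrm{cov}}$ then yields $a\ci b\mid c$. For the completeness direction, I would argue contrapositively: if some $ab$-path does not intersect $m$, its inner nodes all lie in $c\cup a\cup b$, and iteratively shortening the path at any inner occurrence of a node in $a$ or $b$ produces a sub-$ab$-path whose inner nodes lie entirely in $c$. This sub-path is m-active by the collapsed rule, so a standard faithful Gaussian construction --- a centred $N(0,\Sigma)$ with $\sigma_{ij}=0$ exactly at missing edges of $G^N_{\mathrm{cov}}$ and generic nonzero values at present edges --- realizes a distribution compatible with $G^N_{\mathrm{cov}}$ in which $a\ci b\mid c$ fails.

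The main obstacle is the completeness step, specifically verifying faithfulness: one must show that an m-active $ab$-path induces a non-vanishing conditional covariance $\sigma_{ab\mid c}$ in the constructed Gaussian, which requires a genericity argument on the entries of $\Sigma$ restricted to the edges of $G^N_{\mathrm{cov}}$. The path-shortening reduction is routine once phrased carefully, and the soundness direction is essentially an immediate consequence of the collapsed m-separation rule together with composition and intersection.
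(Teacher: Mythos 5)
The paper does not prove Proposition~\ref{prop:32}: it is quoted as a known result of Kauermann (1996), so there is no in-paper argument to compare yours against. Judged on its own, your two-part outline is the standard and essentially correct route: the ``if'' half is the specialization of the pairwise-to-global Markov equivalence for mixed loopless graphs to the purely dashed case (where every inner node is a collision node and every node is its own only descendant, so a path is active given $c$ exactly when all its inner nodes lie in $c$), and the ``only if'' half is a completeness argument via a Gaussian distribution that is faithful to the graph. Your path-shortening step --- replacing an $ab$-path whose inner nodes avoid $m$ by a sub-$ab$-path whose inner nodes all lie in $c$ --- is correct and is exactly what is needed to produce an active path.

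Two points deserve correction or more care. First, the intersection property plays no role here: the defining pairwise statements of \Gcov are marginal independences $i\ci k$, and combining them into $a\ci b|c$ uses decomposition, weak union, contraction and \emph{composition} only; the paper's own remark immediately after Propositions~\ref{prop:31} and~\ref{prop:32} singles out composition as the property required for covariance graphs (intersection being the one needed for concentration graphs). Invoking both is not wrong, but it imposes a hypothesis the result does not need. Second, in the completeness half the construction ``generic $\sigma_{ij}$ on present edges, zero elsewhere'' must also guarantee that $\Sigma$ is positive definite (take off-diagonal entries sufficiently small relative to the diagonal), and the genericity claim that an active path forces $\sigma_{ij|c}\neq 0$ is the real content of that direction; for a single chordless active path with inner nodes in $c$ this can be made explicit by the covariance-chain computations of Wermuth, Cox and Marchetti (2006) rather than left as an unproved genericity assertion. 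As written, both halves of your argument ultimately defer to cited machinery (the global Markov property on one side, faithfulness of a generic Gaussian on the other), which is acceptable in the framework of this paper but should be stated as such rather than presented as steps still to be verified.
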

Notice that  Proposition \ref{prop:31}  requires the intersection property, otherwise one could not conclude for three distinct nodes $h,i,k$
e.g. that ($h\ci i|k$ and $h\ci k|i$)  implies $h\ci ik$ while
Proposition \ref{prop:32}  requires the composition property, otherwise one could  conclude  e.g.
that ($h\ci i$ and $h\ci k$) implies   $h\ci ik$. 

\begin{coro}\label{ImplicCov}
A covariance graph, \Gcov, or a concentration graph,  \Gcon,  implies $a\ci b$  if and only if in  the subgraph induced by $a \cup b$, there is no edge between  $a$ and $b$.
\end{coro}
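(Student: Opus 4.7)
My plan is to obtain the corollary directly from Propositions~\ref{prop:31} and~\ref{prop:32} by specialising the conditioning set so that both graph types yield the same combinatorial criterion on paths, and then to translate that criterion into the statement about the induced subgraph via a short observation about undirected graphs.

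For the covariance graph \Gcov I would apply Proposition~\ref{prop:32} with empty conditioning set, so that the partition of $N$ collapses to $\{a,b,N\setminus(a\cup b)\}$ with $m=N\setminus(a\cup b)$. The proposition then asserts that the graph implies $a\ci b$ iff every $ij$-path with $i\in a$, $j\in b$ has an inner node in $N\setminus(a\cup b)$. For the concentration graph \Gcon I would instead apply Proposition~\ref{prop:31} with conditioning set $c=N\setminus(a\cup b)$, which produces the identical combinatorial condition on paths; thus the pairwise-like independence encoded by missing edges between $a$ and $b$ reduces, in both graph types, to the same graph-theoretic question.

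The remaining step is the following undirected-graph lemma: every $ij$-path with $i\in a$ and $j\in b$ has an inner node in $N\setminus(a\cup b)$ if and only if the subgraph induced by $a\cup b$ contains no edge with one endpoint in $a$ and the other in $b$. The $(\Leftarrow)$ direction is immediate, since an $ab$-edge in the induced subgraph is itself a path with no inner nodes and so has no inner node outside $a\cup b$. For the $(\Rightarrow)$ direction, suppose some $ij$-path had all inner nodes inside $a\cup b$; then walking along it from $i\in a$ toward $j\in b$ one meets a first edge whose two endpoints straddle the partition, and that edge lies entirely in the induced subgraph. I do not anticipate a real obstacle; the only mildly delicate bookkeeping is matching the independence statement with the conditioning set that makes the separating set equal to $N\setminus(a\cup b)$ in each of the two graph types, so that Propositions~\ref{prop:31} and~\ref{prop:32} deliver exactly the same path criterion.
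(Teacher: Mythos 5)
The paper states this corollary without giving a proof, so there is no argument of the authors' to compare routes with; what matters is whether your derivation is sound. Your covariance-graph half is: applying Proposition~\ref{prop:32} with $c=\emptyset$ gives $m=N\setminus(a\cup b)$, and your crossing-edge observation correctly converts ``every $a$--$b$ path has an inner node outside $a\cup b$'' into ``no edge between $a$ and $b$ in the subgraph induced by $a\cup b$''. (A cosmetic point: your $(\Leftarrow)$ and $(\Rightarrow)$ labels in that lemma are interchanged, though both implications are in fact proved.)

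The concentration-graph half has a genuine gap. The corollary asserts that \Gcon implies the \emph{marginal} independence $a\ci b$, and in Proposition~\ref{prop:31} that statement corresponds to the conditioning set $c=\emptyset$, for which the criterion reads ``every path from $a$ to $b$ intersects the empty set'', i.e.\ there is no $a$--$b$ path at all. By substituting $c=N\setminus(a\cup b)$ you have silently replaced $a\ci b$ by $a\ci b\,|\,N\setminus(a\cup b)$, which is a different statement, and a concentration graph typically implies the latter without implying the former. Concretely, let $N=\{1,2,3\}$ with full lines for the pairs $(1,3)$ and $(2,3)$ only, and take $a=\{1\}$, $b=\{2\}$: the subgraph induced by $\{1,2\}$ has no edge, and Proposition~\ref{prop:31} yields $1\ci 2|3$ but not $1\ci 2$ (a zero concentration $\sigma^{12}$ does not force a zero covariance $\sigma_{12}$). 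So the ``identical combinatorial condition'' you invoke is obtained by changing the probabilistic statement, not by deriving it. The two readings coincide exactly when $a\cup b=N$: then $N\setminus(a\cup b)=\emptyset$, every $a$--$b$ path lies inside $a\cup b$ and must contain a crossing edge, and both propositions reduce to ``no $a$--$b$ path'', which equals ``no $a$--$b$ edge''. You should therefore either restrict the concentration-graph case to $a\cup b=N$ (the situation in which the paper actually uses such statements, e.g.\ for disconnected graphs) or state explicitly that for \Gcon you are proving the version conditioned on $N\setminus(a\cup b)$; as written, the argument does not establish the equivalence for the independence statement that appears in the corollary.
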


\begin{coro}\label{indep}
 A regression graph, \Greg, captures an  independence structure for  a  distribution with  density $f_N$ factorizing as \eqref{factdens}  if  the composition  and intersection property  hold for $f_N$, in addition to the standard properties of each density.
\end{coro}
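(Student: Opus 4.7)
The plan is to verify that, under the factorisation \eqref{factdens} together with the six properties $(i)$–$(vi)$, every missing edge of \Greg encodes the pairwise independence stated in \eqref{pairw}, and then to lift this pairwise Markov property to a global one. Since the corollary is a bridge between the factorisation and the separation-based reading of the graph, the proof naturally splits along the block decomposition $N=(u,v)$.

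First, I would handle the context block. By \eqref{factdens} the marginal density $f_v$ is a concentration graph model over the induced subgraph on $v$. The intersection property $(vi)$ is precisely what is required in Proposition \ref{prop:31} to equate the pairwise independences $i\ci k\mid v\setminus\{i,k\}$ with the global concentration-graph Markov property on $v$; this establishes \eqref{pairw}$(iii)$ and all further independences within $v$ obtained via the separation criterion. Next, I would turn to each response component $g_j\subseteq u$. Conditional on $g_{>j}$, the factor $f_{g_j\mid g_{>j}}$ is a covariance-graph model in $g_j$, so Proposition \ref{prop:32}, applied under the composition property $(v)$, delivers \eqref{pairw}$(i)$ for every missing dashed edge within $g_j$. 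For missing arrows from $g_j$ to $g_{>j}$, statement \eqref{pairw}$(ii)$ follows from the fact that each single response node appears in its own univariate regression on its parents inside \eqref{factdens}, together with \eqref{eqci2}.

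To pass from these pairwise statements to the full independence structure implied by the graph, I would invoke the equivalence of the pairwise and global Markov properties for mixed loopless graphs proved by Sadeghi and Lauritzen (2011), which is valid precisely when the distribution satisfies the four standard properties augmented by composition and intersection. Since regression graphs are a subclass of mixed loopless graphs and we have assumed exactly these six properties, the pairwise independences derived above propagate to all independences encoded by the regression graph's separation criterion, so the graph does capture an independence structure for $f_N$.

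The main delicate point I expect will be the interface between the response part $u$ and the context part $v$. The factorisation \eqref{factdens} handles the $u$–$v$ interface only through conditioning, whereas the pairwise statements in \eqref{pairw}$(ii)$ must hold for arrows whose tail lies arbitrarily far back in $v$. Showing that the concentration-graph structure in $v$, together with the decomposition and weak union properties, is compatible with the residual structure governing $g_j\mid g_{>j}$ requires one to check that conditioning and marginalising can be interchanged across the block boundary without loss; this is where the intersection property does extra work beyond its role inside $v$. Once this compatibility is confirmed, the remainder of the argument reduces to bookkeeping on top of Propositions \ref{prop:31}, \ref{prop:32} and the Sadeghi–Lauritzen equivalence.
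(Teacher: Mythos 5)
Your proposal is correct and follows essentially the same route as the paper's own (much terser) proof: the key step in both is that the intersection property lets the pairwise independences of missing full-line edges in the concentration part combine into joint statements, while the composition property does the same for missing dashed edges in the covariance parts of the responses, with the lift from pairwise to global handled by the Sadeghi--Lauritzen equivalence that the paper cites in Section 2 but does not restate inside the proof. Your additional attention to the $u$--$v$ interface and to \eqref{pairw}$(ii)$ is a sensible elaboration of what the paper leaves implicit rather than a different argument.
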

\begin{proof}  Given the intersection property $(vi)$, any node $i$ with missing edges to nodes $k,l$ in a concentration graph of node set $N$ implies $i\ci \{k,l\}| N\setminus\{i,k,l\}$ and given the composition property $(v)$, any node $i$ with missing edges to nodes $k,l$ in a covariance  graph given $Y_c$  implies $i\ci \{k,l\}| c$. 
\end{proof}

For purely discrete and for Gaussian distributions, necessary and sufficient conditions for the intersection property $(vi)$ to hold are known; see \cite{SanMMouRol05}. Too strong sufficient
conditions are  for  joint Gaussian distributions that they are regular and for  discrete variables,  that the  probabilities are strictly positive.

The  composition property $(v)$ is satisfied in  Gaussian distributions and for
 triangular binary distributions with at most main effects in symmetric $(-1, 1)$ variables; see
\cite{WerMarCox09}.

  Both properties $(v)$ and $(vi)$ hold, whenever  a distribution may   have  been generated over a possibly larger  parent graph;  see \cite{Wer10}, \cite{MarWer09}, \cite{WerWieCox06}.
{\bf \em Parent graphs} are directed acyclic graphs that do not only capture an independence structure but are also  a dependence base with a  unique  independence statement assigned to each {\sf V} of the graph.  {\bf A distribution generated over a parent graph} mimics these properties of the parent graph.

It is known that
every regression graph can be generated by  a larger directed acyclic graph but not necessarily every statistical regression graph model can be generated in this way; see \cite{RichSpir02}, Sections 6 and 8.6.

One needs similar  properties  for distributions generated  over  a regression graph. {\em \bf A graph is edge-minimal
for the generated distribution} if  the distribution has a pairwise independence for each edge missing and a non-vanishing dependence for each edge present in the graph. For the generated distribution to have a unique independence
statement assigned to each missing edge, it  has to be
  {\bf \em singleton transitive} that is, for  $h,i,k,l$ distinct nodes of  $N$,
$$ (i\ci k|l   \text{ and } i\ci k| lh) \implies (i \ci h| l \text{ or } k\ci h | l).$$
this says, that  in order to have both a conditional independence of $Y_i, Y_k$ given $Y_l$ and given $Y_l, Y_h$, there has
to be at least one additional independence involving the variable $Y_h$,  the additional variable  in the conditioning set.
For  graphs representing a dependence structure, this can be expressed equivalently, as
$$ (i  \pitchfork h|l   \text{ and } k \pitchfork h| l  \text{ and } i\ci k|l) \implies  i   \pitchfork  k | \{l,h\}$$
and
$$ (i \pitchfork  h|l   \text{ and }   k \pitchfork  h| l  \text{ and } i\ci k|\{l,h\}) \implies  i  \pitchfork  k |  l ,
$$ which says that  in the  distribution  there is a  unique independence statement that corresponds  to each {\sf V} in the graph.
For a $2\times 2\times 3$ contingency table,  an example  violating singleton-transitivity has been given with equation (5.4) by \cite{Birch63}.

There exist these  peculiar  types of
incomplete families of distributions; see \cite{LehSch55}, \cite{Brown86},  \cite{ManRue87},  in which   independence statements  connected with  a  $\sf V$  may  have the inner node both within  and outside the conditioning set; see  \cite{WerCox04}, Section  7, \cite{Darr62}.
Such independences have also  been characterized as being not representable in joint Gaussian distributions; see \cite{LnenMatus07}. These distributions  and those that are faithful to graphs are of  limited interest in application in which one wants to interprete sequences of regressions.

 {\bf \em Distribution are said to be  faithful to a graph} if every  of its independence constraints  is  captured by a given independence graph; see \cite{SpiGlySch93}. As is proven in a forthcoming paper,
this requires for regression graphs that (1) the graph represents  both an independence and a dependence structure, and that  (2)  the distribution satisfies the composition and the intersection property and is
{\bf \em weakly transitive}, a property that is the following extension of singleton transitivity for  node $h$
replaced by a subset  $d$ of $N\setminus\{i,k,l\}$ that may contain several  nodes: $$ (i\ci k|l   \text{ and } i\ci k| \{l,d\}) \implies (i \ci d| l \text{ or } k\ci d| l).$$


This faithfulness property imposes strange constraints on parameters whenever  more than two nodes induce  a complete subgraph in the graph; see for instance Figure 1 in \cite{WerMarCox09} for three binary variables.
An early example of a regular Gaussian distribution that does not satisfy weak transitivity  is due to \cite{CoxWer93},  equation (8).

Notice that in general, the extension of singleton transitivity to weak transitivity excludes parametric cancelations  that  result from several paths
connecting the same node pair. This  the only type of a possible parametric cancelation  in regular Gaussian distributions; see \cite{WerCox98}.

However, the constraints are mild for  distributions corresponding to regression graphs that form a dependence base and that  are forests.  {\bf \em Forests} are the union of disjoint trees and
 a {\bf \em tree} is  a connected undirected graph with one unique path  joining every node pair.
 \begin{lemma} A positive distribution is faithful to a forest  representing both a an independence and a dependence structure if it is singleton transitive.
 \end{lemma}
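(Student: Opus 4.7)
The plan is to invoke the general faithfulness criterion stated just before the lemma: a distribution is faithful to its regression graph provided the graph represents both an independence and a dependence structure and the distribution satisfies composition $(v)$, intersection $(vi)$, and weak transitivity. The graph-theoretic clause is given by hypothesis, and positivity yields $(vi)$ by the classical result of \cite{SanMMouRol05}. The task therefore reduces to deducing composition and weak transitivity from singleton transitivity together with the forest structure.

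The decisive feature is the unique-path property of a forest: any two nodes are joined by at most one path. As emphasised in the text, the generic failure of weak transitivity in regular Gaussian distributions is a parametric cancelation between contributions of distinct paths joining a common node pair, and this obstruction is absent here. I would establish weak transitivity by induction on $|d|$, with base case $|d|=1$ given by singleton transitivity itself. For $|d|\ge 2$, pick $h\in d$, set $d'=d\setminus\{h\}$, and use $(vi)$ to split $i\ci k\mid l\cup d$ into statements on which singleton transitivity at $h$ and the inductive hypothesis at $d'$ can operate. These produce disjunctions ``$i\ci h\mid l$ or $k\ci h\mid l$'' and ``$i\ci d'\mid l$ or $k\ci d'\mid l$''. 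The combinatorial crux is to rule out the mixed-endpoint case: the simultaneous occurrence of, say, $i\ci h\mid l$ and $k\ci d'\mid l$ would, by the dependence-base hypothesis, require two internally disjoint paths between $i$ and $k$ avoiding $l$, contradicting the uniqueness of paths in a forest. With the endpoints thus aligned, the two pieces combine into $i\ci d\mid l$ or $k\ci d\mid l$.

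Composition I would then handle by a parallel but simpler forest argument: each pairwise distributional independence $i\ci k\mid c$ corresponds, under the dependence-base hypothesis and the unique-path property, to the unique $ik$-path being intersected by $c$, so the sets of blocking conditionings are closed under union on the right; decomposition together with the Markov property for positive distributions factorising over the forest (Hammersley--Clifford) then upgrade the pairwise statement to $a\ci bc\mid d$ from $a\ci b\mid d$ and $a\ci c\mid d$. The main obstacle is the coherence step in the weak-transitivity induction, where the acyclicity of the forest is indispensable; an analogous argument would break down on a regression graph containing a chordless cycle, which is consistent with the central role such cycles play elsewhere in the paper (cf.\ Theorem \ref{thm:2}).
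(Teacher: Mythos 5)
Your core idea is the same as the paper's: positivity supplies the intersection property, and the unique-path property of a forest removes the only remaining source of weak-transitivity failure (parametric cancelation between several paths joining one node pair), so weak transitivity reduces to singleton transitivity. The paper's proof is essentially just that observation stated in two sentences, together with the remark that for a concentration graph --- and an undirected forest is one --- the composition property is simply irrelevant, since the global Markov property of Proposition 1 needs only intersection.

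Where you go beyond the paper, two concrete problems appear. First, your derivation of composition is circular: you assert that each pairwise independence $i\ci k|c$ corresponds to the unique $ik$-path being intersected by $c$, but that correspondence is exactly the (pairwise case of the) faithfulness being proved; the dependence-base hypothesis only tells you that edges present are dependences, not that every distributional independence is a graph separation. This step is best deleted rather than repaired, since composition is not needed for an undirected forest. Second, the induction for weak transitivity is underspecified at its key step: property $(vi)$ \emph{combines} two independence statements into one rather than splitting one into two, so ``use $(vi)$ to split $i\ci k|l\cup d$'' does not yield the intermediate premises $i\ci k|l\cup d'$ and $i\ci k|l\cup\{h\}$ that singleton transitivity and the inductive hypothesis require; those would have to be obtained separately from the Markov structure of the forest (i.e., from the no-cancelation observation you state at the outset). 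Neither issue is fatal --- trimming the composition step and justifying the intermediate conditional independences via the unique-path argument recovers the paper's proof --- but as written the elaboration contains gaps that the paper's terser argument avoids.
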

 \begin{proof} Positive distributions satisfy the intersection property and for concentration graphs, the composition property is irrelevant, Given the above characterizations of faithfulness and of weak transitivity, there are in a forest no cancelations due to several paths  connecting the same node pair. Hence, weak transitivity will be violated only if the singleton transitivity fails.
 \end{proof}

 \begin{coro} A regular Gaussian distribution  is faithful to a forest  representing both an independence and a dependence structure.
 \end{coro}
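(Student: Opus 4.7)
The plan is to reduce the corollary to the preceding lemma by verifying its two hypotheses for a regular joint Gaussian distribution, namely positivity of the density and singleton transitivity of its conditional independence structure; the forest topology is already handled inside the lemma itself.

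Positivity is immediate: a regular Gaussian has density proportional to $\exp\{-\tfrac{1}{2}(y-\mu)^T \Sigma^{-1}(y-\mu)\}$ with $\Sigma$ positive definite, so the density is strictly positive on all of $\mathbb{R}^{|N|}$. Hence the intersection property $(vi)$ holds, as noted in the paper, and the positivity hypothesis of Lemma 3 is met.

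The substantive step is singleton transitivity. Given distinct nodes $i,k,h,l$, suppose $i \ci k \mid l$ and $i \ci k \mid \{l,h\}$. I would pass to the conditional distribution of $(Y_i, Y_k, Y_h)$ given $Y_l$, which is again a regular Gaussian. Since Gaussian conditional independences coincide with vanishing partial covariances, the two hypotheses read $\sigma_{ik|l} = 0$ and $\sigma_{ik|lh} = 0$. Applying the recursive identity \eqref{rrcov} inside this conditional distribution gives
$$ 0 \;=\; \sigma_{ik|lh} \;=\; \sigma_{ik|l} - \frac{\sigma_{ih|l}\,\sigma_{hk|l}}{\sigma_{hh|l}} \;=\; -\,\frac{\sigma_{ih|l}\,\sigma_{hk|l}}{\sigma_{hh|l}}, $$
and regularity ensures $\sigma_{hh|l} > 0$, so $\sigma_{ih|l}\,\sigma_{hk|l} = 0$, i.e., $i \ci h \mid l$ or $k \ci h \mid l$, which is exactly singleton transitivity.

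With both hypotheses of Lemma 3 verified, the corollary follows at once. I do not expect any serious obstacle: the only conceivable one, namely parametric cancellations along competing paths that can spoil weak transitivity in a general regular Gaussian, is neutralised by the forest assumption, since in a forest every node pair is joined by a unique path. That is precisely the ingredient used in the preceding lemma to reduce weak transitivity to singleton transitivity, so verifying the latter for Gaussians is all that remains to do.
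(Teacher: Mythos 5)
Your proposal is correct and follows exactly the route the paper intends: the corollary is stated without proof because it is an immediate application of the preceding lemma, with positivity supplying the intersection property and singleton transitivity of regular Gaussians following from the recursive relation for covariances, equation \eqref{rrcov}, precisely as you compute. Your write-up simply makes explicit the two hypotheses the paper leaves implicit, so there is nothing to add.
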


 Notice that forests include trees and Markov chains as special cases. If they form dependence bases they are Markov equivalent to very special types of parent graphs but they
 are rarely of interest in statistics when studying sequences of regressions.

 \section{Some early results on graphs and Markov equivalence}

  In the past,  results concerning graphs and  Markov equivalence have been obtained  quite independently in
 the mathematical literature on characterizing different types of graph, in the statistical  literature on
 specifying types of multivariate  statistical models,  and in the computer science literature on
 deciding on special properties of  a given graph or on designing fast algorithms for transforming  graphs.

 For instance, following the simple enumeration result for {\bf \em labeled trees} in $d$ nodes, $d^{d-2}$,  by Karl-Wilhelm Borchardt   (1817-1880),  it could be shown that these trees are in  one-to-one correspondence to  distinct strings of size $d-2$; see \cite{Cay889}.  Much later,  labeled trees were recognized to form the subclass of directed acyclic graphs
 with exclusively  source {\sf V}s  and  therefore to be  also Markov equivalent to chordal concentration graphs that are without chordless paths in four nodes; see \cite{CasSie03}.

In the literature on graphical Markov models, a  number of different names have been in use for  a sink {\sf V,} for instance `two arrows meeting head-on'  by \cite{Pea88},  `unshielded collider'  by \cite{RichSpir02}, and  `Wermuth-configuration'   by  \cite{Whit90},   after it had been recognized that, for Gaussian distributions, the parameters of  a  directed acyclic graph model without sink {\sf V}s are
in one-to-one correspondence to the parameters in its skeleton concentration graph model.

\begin{prop}\label{prop40} {\em \citep{Wer80}, \citep{WerLau83}, \citep{Fryd90}.}
A directed acyclic graph is  Markov equivalent  to a concentration graph of the same skeleton if and only if  it has no  collision $\sf V$.
\end{prop}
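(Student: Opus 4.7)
For the $(\Rightarrow)$ direction I would argue by contraposition. Suppose the DAG $D$ contains a collision $\sf V$, say $i\to j\leftarrow k$ with $i$ and $k$ non-adjacent. In the skeleton concentration graph $G$, the pair $(i,k)$ is also non-adjacent, so applying Proposition \ref{prop:31} with conditioning set $c=N\setminus\{i,k\}$ gives the pairwise statement $i\ci k\mid N\setminus\{i,k\}$. In $D$, however, this same conditioning set contains $j$; since $j$ is a collider on the path $i\to j\leftarrow k$, conditioning on it \emph{activates} the path rather than blocking it, so $D$ does not imply $i\ci k\mid N\setminus\{i,k\}$. The two graphs therefore differ on at least one independence statement and cannot be Markov equivalent.

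For the $(\Leftarrow)$ direction, assume $D$ has no collision $\sf V$. The goal is to show that d-separation in $D$ coincides with ordinary separation in the skeleton, since by Proposition \ref{prop:31} the latter is precisely the independence criterion for the concentration graph $G$ on that skeleton. The central tool is a shortcut lemma: whenever three consecutive nodes $v_{s-1},v_s,v_{s+1}$ on a path in $D$ form the collider configuration $v_{s-1}\to v_s\leftarrow v_{s+1}$, the no-collision-$\sf V$ hypothesis forces $v_{s-1}$ and $v_{s+1}$ to be adjacent, yielding a strictly shorter $a$-$b$ path that bypasses $v_s$. Iterating, every $a$-$b$ path in $D$ admits a reduction to an $a$-$b$ path whose inner nodes are all transmitting (non-colliders).

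With this lemma, both directions of the coincidence follow. If every $a$-$b$ path in the skeleton intersects $c$, then in particular the reduced path does; since all its inner nodes are transmitting, any inner node lying in $c$ blocks it, so $a$ and $b$ are d-separated by $c$ in $D$. Conversely, if some $a$-$b$ path in the skeleton avoids $c$, orient it as in $D$ and shortcut away its colliders; the resulting $a$-$b$ path still has all nodes outside $c$, every inner node is transmitting and not in $c$, hence it is d-connecting in $D$. Combining with Proposition \ref{prop:31} applied to $G$ gives the matching of independence structures.

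The main obstacle is in the $(\Leftarrow)$ direction: one must check that the shortcut construction is compatible with the conditioning set, because shortcutting a collider $v_s$ removes it from the path but not from the graph, and the bypass edge may itself create a new collider configuration elsewhere that needs handling. I would discharge this by strong induction on path length, observing that each shortcut strictly decreases the number of inner nodes, so the process terminates. A conceptually cleaner alternative is to invoke the Lauritzen-Dawid-Larsen-Leimer moralization criterion: under the no-collision-$\sf V$ hypothesis, no ancestral induced subgraph $D_A$ contains a collision $\sf V$, so its moral graph equals its skeleton; a short acyclicity argument then shows that separation in the skeleton of $D_{\mathrm{An}(a\cup b\cup c)}$ coincides with separation in the full skeleton of $D$, completing the proof.
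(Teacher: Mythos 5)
Your argument is correct in substance, but it follows a genuinely different route from the paper. The paper never proves Proposition~\ref{prop40} directly: it is quoted as a known result and then, in Section~7, observed to drop out as a special case of Theorem~\ref{thm:1} --- a concentration graph is a regression graph in which every {\sf V} is transmitting (cf.\ Proposition~\ref{MEQtoCON}), so by the skeleton-plus-collision-{\sf V} criterion a directed acyclic graph of the same skeleton is Markov equivalent to it exactly when it, too, has no collision {\sf V}. You instead give a self-contained, first-principles proof: contraposition via the pairwise statement $i\ci k\,|\,N\setminus\{i,k\}$ for the ``only if'' half, and a shortcut lemma reducing every path to one with only transmitting inner nodes for the ``if'' half, matched against the separation criterion of Proposition~\ref{prop:31}. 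What your approach buys is independence from the heavy machinery of Theorem~\ref{thm:1} (whose proof itself uses a closely related shortest-active-path and triangle analysis); what the paper's approach buys is uniformity, since Propositions 3--8 all fall out of one criterion. One caution on your ``if'' direction as written: showing that the \emph{reduced} path is blocked does not by itself block the \emph{original} path; you need the contrapositive form --- every path that is active given $c$ reduces to an active all-transmitting path, whose inner nodes must then all avoid $c$, contradicting the separation hypothesis. That is exactly the ``compatibility with the conditioning set'' issue you flag, and it is discharged by checking that when a collider $v_s$ is shortcut, a neighbour $v_{s\pm1}$ that becomes a collider on the new path inherits a descendant in $c$ through $v_s$; with that case analysis made explicit (or with the moralization route you sketch), the proof is complete.
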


Efficient algorithms to decide whether an undirected graph can be oriented  into
a directed acyclic graph, became  available  in the computer science literature under the name of  perfect elimination schemes; see \cite{TarYan84}.
When algorithms were designed later to decide which arrows may be flipped in a given \Gdag, keeping  the same skeleton and  the same set of sink {\sf V}s,  to get to a list of all Markov equivalent \Gdag s, these early results by Tarjan and Yanakakis  appear are not referred to directly; see \cite{Chi95}.

The number of equivalent characterizations of concentration graphs that have perfect elimination schemes has increased steadily, since they were introduced   as rigid circuit graphs by \cite{Dir61}. These graphs are not only   named  `chordal graphs',  but also `triangulated graphs'', `graphs with the running intersection property'  or  `graphs with only complete prime graph separators'; see \cite{CoxWer99}.

By contrast, for a covariance graph that can be oriented to be Markov equivalent to a \Gdag of the  same skeleton,  chordless  paths  are relevant.
\begin{prop}\label{cono}{\em \citep{PeaWer94}.}
A covariance graph with a chordless  path in four nodes is not Markov equivalent to a directed acyclic graph in the same node set.
\end{prop}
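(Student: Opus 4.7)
My plan is to argue by contradiction. Let $C$ be the covariance graph and suppose that $D$ is a DAG on the same node set that is Markov equivalent to $C$. Let $i,j,k,l$ denote the four nodes of the chordless path, with edges $(i,j)$, $(j,k)$, $(k,l)$ present in $C$ and $(i,k)$, $(j,l)$, $(i,l)$ absent.

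First I would observe that any DAG Markov equivalent to $C$ must share its skeleton. If $D$ has an edge $(a,b)$ that is missing in $C$, then $D$ implies some CI statement $a\ci b|c$ (obtained by conditioning on parents), whereas Corollary \ref{ImplicCov} combined with Proposition \ref{prop:32} forbids $C$ from implying $a\ci b|c$ for \emph{any} $c$, since the edge $a\dashed b$ is itself a path with no inner node and so cannot intersect $m$. Conversely, if $(a,b)$ is present in $D$ but missing in $C$, then $C$ implies the marginal statement $a\ci b$ by Corollary \ref{ImplicCov}, while $D$ implies no CI between adjacent nodes. Hence $C$ and $D$ have identical skeletons.

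The core of the argument is then local. From the missing edges $(i,k)$ and $(j,l)$ in $C$, Corollary \ref{ImplicCov} yields $i\ci k$ and $j\ci l$ \emph{marginally}. Since $D$ is Markov equivalent to $C$, both marginal independences must hold in $D$ as well, i.e.\ every $i$--$k$ path and every $j$--$l$ path in $D$ must be d-separated by the empty set. With an empty conditioning set a path is blocked if and only if it contains a collider, so in particular the path $i$--$j$--$k$ of $D$ (which exists, since the skeleton is preserved) must have $j$ as a collider, forcing $i\to j\leftarrow k$; analogously the path $j$--$k$--$l$ forces $j\to k\leftarrow l$. These two orientations are incompatible on the single edge between $j$ and $k$, yielding the contradiction.

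The step I expect to be the main pitfall is subtle: one might worry that in a larger graph the DAG could have additional $i$--$k$ paths whose blocking compensates for $j$ not being a collider on $i$--$j$--$k$. The point to stress is that d-separation requires \emph{every} path to be blocked; as soon as the specific path $i$--$j$--$k$ is active (which happens unless $j$ is a collider), $i\ci k$ fails in $D$. So the local orientations are forced independently of what else happens in the graph, and the clash at the edge $(j,k)$ is unavoidable.
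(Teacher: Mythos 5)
Your argument is correct and reaches the proposition by essentially the route the paper indicates: the two {\sf V}s on the chordless four-node path are both collision {\sf V}s in the covariance graph, and any Markov equivalent directed acyclic graph would have to orient both $j$ and $k$ as sink (collision) nodes of the shared $jk$-edge, which is impossible; the paper obtains this as a special case of Theorem \ref{thm:1}, whereas you derive the forced orientations directly from the marginal independences $i\ci k$ and $j\ci l$ via d-separation with an empty conditioning set, which is a perfectly sound and slightly more elementary derivation of the same obstruction. One wording slip: your first skeleton case says ``if $D$ has an edge $(a,b)$ that is missing in $C$'' but then argues as if the edge were present in $C$ and absent in $D$ (conditioning on parents in $D$, the edge $a\dashed b$ as a path in $C$); the intended case is clearly the latter, and together with your ``conversely'' clause both directions of skeleton equality are in fact covered, so this is a misstatement rather than a gap.
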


For distributions generated over directed acyclic graphs,  sink {\sf V}s are needed again.

\begin{prop}\label{prop:40} {\em \citep{Fryd90},  \citep{VerPea90}.}
Directed acyclic graphs of the same skeleton are Markov equivalent if and only if they have the same sink {\sf V}s.
\label{MEQdag}\end{prop}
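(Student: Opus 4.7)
The plan is to prove both implications via the d-separation criterion, which is the form taken by the global Markov property for directed acyclic graphs.

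For necessity, I argue by contrapositive. If $G_1$ and $G_2$ have different skeletons --- say the $ij$-edge is present in $G_1$ but absent from $G_2$ --- then by the local Markov property applied in $G_2$ some conditioning set separates $i$ from $j$, whereas in $G_1$ no separating set exists because $i$ and $j$ are adjacent. If the skeletons coincide but some sink ${\sf V}$ $i \to k \leftarrow j$ of $G_1$ fails to be a sink ${\sf V}$ of $G_2$, then in $G_2$ the inner node $k$ is a non-collider on the path $(i,k,j)$. Swapping the roles of $i$ and $j$ if needed (they cannot both be ancestors of one another), fix a topological order of $G_1$ in which $j$ precedes $i$ and set $S = \pa_{G_1}(i)\setminus\{j\}$. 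Since $k$ is a child of $i$ in $G_1$, neither $k$ nor any descendant of $k$ lies in $S$, so the local Markov property in $G_1$ yields $i \ci j|S$. In $G_2$, however, the path $(i,k,j)$ is active given $S$ because its only inner node $k$ is a non-collider outside $S$; hence $i$ and $j$ are d-connected in $G_2$ given $S$, contradicting Markov equivalence.

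For sufficiency, assume $G_1$ and $G_2$ share skeleton and sink ${\sf V}$s, and I would show that every d-separation query $a \ci b|c$ receives the same verdict in both graphs. The cleanest operational route is via covered-edge reversals: two DAGs with identical skeletons and identical unshielded colliders are connected by a finite sequence of reversals of covered edges $u \to v$ (those with $\pa(v)=\pa(u)\cup\{u\}$), and each such reversal leaves the skeleton and every ${\sf V}$-configuration intact while preserving the joint factorization, hence the implied independences. A more direct route is path surgery: given an active $ab$-path in $G_1$ with respect to $c$, produce an active $ab$-path in $G_2$ by locally replacing every shielded collision ${\sf V}$ whose orientation disagrees in $G_2$ by the alternative two-edge subpath through its chord, which inherits activeness from the endpoints of the shielded triple.

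The main obstacle is the sufficiency direction. While unshielded colliders are Markov invariants by the necessity part, shielded colliders need not be, and one must verify that such local disagreements never create new implied independences. Packaging the argument through covered-edge reversals avoids explicit case analysis but relies on the Chickering-style structural lemma that supplies the reversal sequence; the direct path-surgery argument is more elementary but demands careful bookkeeping on descendants that may enter the conditioning set $c$ and thereby activate or deactivate the substituted subpath.
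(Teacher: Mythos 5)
Your necessity argument is correct and complete: the separating set $S=\mathrm{pa}_{G_1}(i)$ together with the local Markov property in $G_1$ and the single d-connecting path $(i,k,j)$ in $G_2$ is exactly the classical Verma--Pearl construction. It is also more explicit than the route the paper takes, since the paper does not prove this proposition directly at all but obtains it as the special case of Theorem 1 in which every collision {\sf V} is a sink {\sf V}; necessity there comes from Lemmas 1--3, i.e.\ from the fact that the inner node of a collision {\sf V} is excluded from, and that of a transmitting {\sf V} included in, every implied independence statement for its endpoints.

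The genuine gap is in sufficiency, which you flag as the main obstacle but do not close. The covered-edge-reversal route rests entirely on the assertion that two directed acyclic graphs with the same skeleton and the same sink {\sf V}s are linked by a finite sequence of covered-edge reversals; that is Chickering's transformational characterization, a structural lemma of essentially the same depth as the proposition itself, and importing it unproved reduces your argument to a citation. The path-surgery route is where the real work lies and it is left as a sketch: when you replace a shielded triple $(a,k,b)$ whose orientation disagrees between the two graphs by the chord $ab$, the collider status of $a$ and $b$ on the shortened path can change, so whether that path remains active given $c$ depends on membership of $a$, $b$ and their descendants in $c$ --- precisely the bookkeeping you acknowledge but do not carry out, and nothing in the sketch guarantees the substituted subpath ``inherits activeness.'' The paper's proof of Theorem 1 resolves exactly this difficulty by normalizing to a \emph{shortest} active $ik$-path with respect to $c$: on such a path any triangle must be of one of only two types (all others would yield a shorter active path through the chord), the collision or transmitting status of each node is then forced by the shared skeleton and collision {\sf V}s, and the very same path is shown to be active in the second graph, so no surgery is needed. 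Adopting that shortest-path normalization is the missing idea that would turn your second route into a proof.
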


Markov equivalence of a concentration graph and a covariance graph model  is for regular joint Gaussian distributions equivalent to {\bf \em parameter equivalence}, which means that there is a one-to-one relation between the  two sets parameters. Therefore, an early  result on parameter equivalence for joint Gaussian distributions  implies the following Markov equivalence result for distributions satisfying both the composition and the intersection property.
\begin{prop}\label{conc}{\em \citep{Jen88}, \citep{DrtonRich08}.}
A  covariance graph is Markov equivalent to a  concentration  graph   if and only if  both
 consist of  the same complete, disconnected  subgraphs. \end{prop}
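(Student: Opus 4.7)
The plan is to prove both implications directly from the separation criteria given in Propositions \ref{prop:31} and \ref{prop:32} together with Corollary \ref{ImplicCov}. The \emph{if} direction is short: if both \Gcov and \Gcon equal the same disjoint union of complete subgraphs $C_1,\dots,C_r$, then for any pair $i,k$ lying in the same $C_t$ the two nodes are coupled in both graphs, so no separation criterion can produce $i\ci k\mid c$; for any pair $i,k$ lying in different components there is no path between them at all, so vacuously every path intersects $c$ and every path intersects $m$, giving $i\ci k\mid c$ in both graphs. Hence both graphs imply exactly the same independence statements.

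For the \emph{only if} direction, I would first show that the two graphs have the same connected components. Applying Corollary \ref{ImplicCov} with $a=\{i\}$, $b=\{k\}$ yields $i\ci k$ in \Gcov iff no $ik$-edge is present in \Gcov. On the other hand, applying Proposition \ref{prop:31} with $c=\emptyset$ gives $i\ci k$ in \Gcon iff there is no path at all from $i$ to $k$ in \Gcon, i.e.\ iff $i$ and $k$ lie in different connected components of \Gcon. Markov equivalence therefore forces
\[
ik\text{ is an edge in }G^N_{\mathrm{cov}} \iff i\text{ and }k\text{ lie in the same connected component of }G^N_{\mathrm{con}}.
\]
This immediately shows that each connected component of \Gcon induces a complete subgraph in \Gcov, and, reading the equivalence from right to left, that the two graphs have the same vertex partition into connected components.

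It remains to show that within each such component $C$, the graph \Gcon is also complete. I would argue by contradiction: suppose $i,k\in C$ but the $ik$-edge is absent from \Gcon. Every $ik$-path in \Gcon then has length at least two, so taking $c=N\setminus\{i,k\}$ every path contains an inner node in $c$, whence Proposition \ref{prop:31} yields $i\ci k\mid N\setminus\{i,k\}$ in \Gcon. By Markov equivalence the same statement must follow from \Gcov, but Proposition \ref{prop:32} with $m=\emptyset$ requires every $ik$-path in \Gcov to contain an inner node in $\emptyset$, which is impossible for the direct $ik$-edge, which is present in \Gcov by the previous paragraph. This contradiction gives the missing $ik$-edge in \Gcon, so \Gcon is complete on each component as well.

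The main obstacle, as usual with this kind of equivalence, is keeping the roles of the two separation rules straight: the covariance graph treats uncoupled pairs \emph{marginally} while the concentration graph treats them \emph{conditionally on everything else}, and the argument turns on coupling these two extreme conditioning sets. Once the equivalence $\{ik\text{ edge in }G^N_{\mathrm{cov}}\}\leftrightarrow\{i,k\text{ in same component of }G^N_{\mathrm{con}}\}$ is in hand, the completeness step is a short contradiction, and no separate composition/intersection properties are needed because only statements of the form $a\ci b$ and $i\ci k\mid N\setminus\{i,k\}$ are invoked.
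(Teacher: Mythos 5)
Your argument is correct, but it takes a genuinely different route from the paper. The paper does not prove Proposition \ref{conc} from the separation criteria at all: it presents the statement as a known result (Jensen 1988; Drton and Richardson 2008b), motivates it by parameter equivalence for Gaussian distributions, and then observes after the proof of Theorem \ref{thm:1} that Propositions 3 to 8 follow as special cases of that theorem. In that framework the argument is a one-liner: every {\sf V} of a covariance graph is a collision {\sf V} and every {\sf V} of a concentration graph is a transmitting {\sf V} (cf.\ Proposition \ref{MEQtoCON}), so equality of skeletons and of collision-{\sf V} sets forces both graphs to contain no {\sf V} at all, i.e.\ to be the same disjoint union of complete subgraphs. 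Your proof instead works directly from the two global Markov properties, playing the two extreme conditioning sets $c=\emptyset$ and $c=N\setminus\{i,k\}$ against each other; this is more elementary and self-contained (it does not presuppose Theorem \ref{thm:1}) and it isolates exactly which implied statements separate the two graph types, at the cost of not exhibiting the result as an instance of the general equivalence criterion. Two small points of rigor: your \emph{if} direction is phrased only for pairwise statements $i\ci k\mid c$, whereas Markov equivalence concerns all statements $a\ci b\mid c$ for disjoint subsets; the extension is immediate, since an edge between some $i\in a$ and $k\in b$ is a path with no inner nodes and hence blocks both separation criteria, while the absence of any shared component leaves no $a$-to-$b$ paths, but you should say so. Also, for the concentration graph you correctly invoke Proposition \ref{prop:31} with $c=\emptyset$ rather than Corollary \ref{ImplicCov}, which is the right choice, since marginal independence in \Gcon is governed by connectivity rather than by adjacency.
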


Fast  ways of  inserting an edge for every transition {\sf V}, of deciding on connectivity and on blocking flows  have been available in the corresponding Russian literature since 1970; see \cite{Din06}, but these results appear to have not  not been exploited  for the so-called lattice conditional independence models, recognized   as  distributions generated over $G^{N}_{\rm dag}$s without any transition {\sf V}s by \cite{AndMadPerTr97}.

Markov equivalence of other than multivariate regression chain graphs,   have been given by \cite{Rov05},
\cite{AndPer06} and \cite{RovStud06}.

With the so-called  global Markov property of a graph in node set $N$ and any disjoint subsets
$a,b,c$ of $N$, one can decide  whether the graph implies $a\ci b |c$. To give this property for a regression graph,
we use special types of path that have been called active;  see  \cite{Wer10}.  For this, let again $\{a,b,c,m\}$  partition the node set $N$ of \Greg.
\begin{defn}\label{def:33}
{\bf \em A path from $\bm  a$ to $\bm b$ in  $\bm{Greg}$ is active given $\bm c$} if its inner collision nodes are in $c$ or have a
descendant in $c$ and its inner transmitting nodes are in $m=N\setminus (a\cup b \cup c)$. Otherwise, {\bf \em  the path is said to break  given $\bm c$}
or, equivalently,  to break with $m$. \end{defn}
 Thus,  a path breaks  when $c$ includes an inner  transmitting node or  when $m$ includes an inner  collision node and all its descendants; see also Figure 4 of \cite{MarWer09}.

For directed acyclic graphs, an active path of Definition 1 reduces to the  d-connecting path of \cite{GeiVerPea90}.
Similarly, the following proposition coincides in that special case with those of  their so-called  d-separation.
Let   node set $N$ of \Greg be partitioned as above by $\{a,b,c,m\}$.
\begin{prop}\label{prop:1} {\em  \citep{CoxWer96},  \citep{Sadeghi09}.}
A regression graph, \Greg,  implies $a\ci b| c$ if and only if every  path between $a$ and $b$ breaks given $c$.
\label{globalreg}
\end{prop}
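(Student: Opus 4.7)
The plan is to prove the two directions by reducing to d-separation in a larger directed acyclic parent graph. The key structural fact is that every regression graph arises as the corresponding graph of some $G^N_{\mathrm{dag}}$ after marginalising over a node set $m$ and conditioning on a disjoint set $c'$; this is already invoked implicitly by the independence-preserving property mentioned earlier. Since the global Markov property for directed acyclic graphs is precisely d-separation \citep{GeiVerPea90}, it suffices to prove that breaking of every $a$-$b$ path in \Greg given $c$ is equivalent to d-separation of $a$ and $b$ given $c\cup c'$ in the parent graph, after adjusting for the marginalised nodes.

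First I would spell out the path correspondence. Every edge in \Greg encodes a specific type of connecting structure in the parent DAG: an arrow $i \leftarrow k$ lifts to a directed path through nodes that are neither in $c'$ nor in $m$ in a controlled way; a dashed $ik$-line corresponds to a common-source path in the parent DAG through a node marginalised over in $m$; a full $ik$-line corresponds to a common-sink path through a node in $c'$. Then I would verify edge-by-edge that the three types of collision {\sf V}s of Figure \ref{collV} lift to sink {\sf V}s in the parent DAG, and that the five transmitting {\sf V}s lift to transition or source configurations. This ensures that an inner collision node in \Greg is d-connecting given $c\cup c'$ precisely when it (or one of its descendants) lies in $c$, and that an inner transmitting node is d-connecting precisely when it lies in $m=N\setminus(a\cup b\cup c)$. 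These are the two conditions in Definition \ref{def:33}.

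For the sufficiency direction (breaking $\Rightarrow$ independence), I would take any $a$-$b$ path in the parent DAG that is d-connecting given $c\cup c'$, and argue by the above correspondence that it projects to an $a$-$b$ path in \Greg whose inner collision nodes are in $c$ or have a descendant in $c$ and whose inner transmitting nodes are in $m$; contrapositively, if every $a$-$b$ path in \Greg breaks given $c$, then no $a$-$b$ path in the parent DAG is d-connecting given $c\cup c'$, so $a\ci b|c$ follows from the global Markov property of the DAG together with the composition and intersection properties that hold by assumption. For the necessity direction, given an active path, I would invoke the pairwise Markov statements \eqref{pairw} along the active path and combine them using properties $(i)$–$(vi)$ to show that no independence statement $a\ci b|c$ can be derived; equivalently, one can construct a distribution generated over the parent DAG in which the relevant parameters along the lifted d-connecting path are nonzero, demonstrating that the graph does not imply $a\ci b|c$.

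The main obstacle is the precise path correspondence between \Greg and its parent DAG, especially the case analysis around semi-directed collision {\sf V}s and the handling of descendants of collision nodes, since descendants in \Greg may be reached through undirected edges that are not present in the parent DAG in the same form; keeping track of how ``descendant in $c$'' in the regression graph translates to ``descendant in $c\cup c'$'' in the parent DAG requires care. Once this bookkeeping is done, the proof itself reduces to the already-established equivalence between d-separation in the DAG and the path-breaking criterion in the regression graph, essentially following the strategies of \cite{CoxWer96} and \cite{Sadeghi09}.
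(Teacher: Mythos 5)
The paper does not actually prove Proposition~\ref{prop:1}: it is stated in Section~6 as a quoted result, with the proofs delegated to \cite{CoxWer96} and \cite{Sadeghi09}. So there is no internal proof to compare against; I can only judge your argument on its own terms. Your completeness half (an active $a$--$b$ path given $c$ means the graph does not imply $a\ci b|c$) is sound in outline: a regular Gaussian distribution generated over a parent \Gdag with generic parameters satisfies the pairwise independences \eqref{pairw} and properties $(i)$--$(vi)$, and the lifted d-connecting path witnesses the failure of $a\ci b|c$, so exhibiting one such distribution suffices. The combinatorial lifting you describe (dashed lines from marginalised common sources in $m$, full lines from conditioned common sinks in $c'$, collision {\sf V}s lifting to sink {\sf V}s) is the right picture, and the descendant bookkeeping you flag is real but surmountable.

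The genuine gap is in the soundness direction. You propose to deduce $a\ci b|c$ from d-separation in the parent \Gdag via its global Markov property. But that argument only covers distributions of $Y_N$ that actually arise by marginalising over $m$ and conditioning on $c'$ in some distribution generated over the parent graph, and the paper itself records (end of Section~5, citing \cite{RichSpir02}, Sections 6 and 8.6) that while every regression graph is the corresponding graph of a larger \Gdag, not every statistical regression graph model is generated in this way. Proposition~\ref{prop:1} asserts the implication for \emph{every} distribution that factorises as \eqref{factdens}, satisfies the pairwise independences \eqref{pairw}, and has the composition and intersection properties; such a distribution need not sit underneath any parent-DAG distribution, so d-separation in the parent graph tells you nothing about it. This direction has to be argued directly from the pairwise statements using $(i)$--$(vi)$ --- that is, by establishing the equivalence of the pairwise and global Markov properties for regression graphs, which is exactly what Sadeghi and Lauritzen (2011) do for the larger class of mixed loopless graphs and what \cite{CoxWer96} do for the special cases. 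As written, your sufficiency argument proves the proposition only for the subclass of distributions with a parent-DAG representation.
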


Thus, whenever \Greg implies   $a\ci b|c$, this independence  statement holds in the  corresponding sequence of regressions
for which the density $f_N$  factorizes as \eqref{factdens},  provided  that $f_N$ satisfies the  same properties of independences, $(i) $ to $(vi)$ of Section 5,  just like  a regular Gaussian joint density.
For example, in the graphs of Figure \ref{fig:2ex0}, node  $2$ is an ancestor of node $1$ so that \Greg  does not imply $3\ci 4|2$.

\begin{figure}[H] \centering
           {\includegraphics[scale=.47]{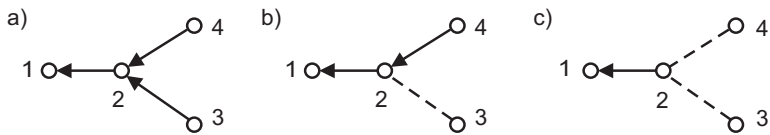}}
            \caption[]{\small{Three regression graphs, which imply $3\ci 4$ but not  $3 \ci 4|1 $.}}
        \label{fig:2ex0}
\end{figure}

Since covariance and concentration graphs consist only of one type of edge, the restricted versions in Propositions 1 and 2  of the defined path can be used for  their global Markov property.

\section{The main new results and proofs}

We now treat connected regression graphs in node set $N$ and corresponding distributions defined by sequences of regressions with joint discrete or continuous  responses, ordered in  connected components  $g_1, \ldots, g_r$ of the graph, and with
context variables in connected components, $g_{r+1}, \ldots, g_J$, which factorize
as in \eqref{factdens},  satisfy the pairwise independences of \eqref{pairw} as well as properties of  independence statements,
given as $(i)$ to $(vi)$ in  Section 5.

For the main result of Markov equivalence for regression graphs, we consider
 distinct nodes $i$ and $k$, node subsets $c$  of $N\setminus\{i,k\}$ and the notion of shortest active paths. \begin{defn}\label{def:shortestpi}
An  $ik$-path in \Greg  is a shortest active path $\pi$ with respect to $c$
if  every  $ik$-path  of  \Greg with fewer inner nodes breaks given $c$.
 \end{defn}
 Every chordless $\pi$  is such a shortest  path.  If the consecutive nodes  $(k_{n-1}, k_n, k_{n+1})$  on $\pi=(i= k_0,k_1,\dots,k_m=k)$
 induce a complete subgraph in \Greg,  we say that there is {\bf \em a triangle on the path}. In Figure \ref{shortest}a)  nodes 2,3,4
 form a triangle on the path $(1,2,4,3,5)$.

\begin{figure}[H] \centering
           {\includegraphics[scale=.47]{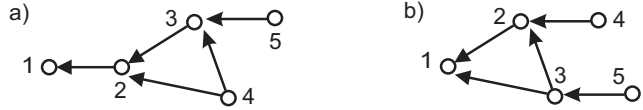}}
            \caption[]{\small{Graphs of active five-node  paths a) with path $ (1,2,4,3,5)$ the shortest active path, where $3$ is in $c$, b) active path $(4,2,1,3,5) $,  where $1$ is in $c$, and  a shorter active path $(4,2,3,5)$.}}
        \label{shortest}
\end{figure}

  If this path is an active path connecting the uncoupled node pair (1,5), then nodes 2 and 4 are  inner transmitting nodes outside $c$ and  the inner collision node 3 is in $c$. This path is then also the shortest active path connecting (1,5). The shorter path
 $(1,2,3,5)$ has   nodes 2 and 3 as inner transmitting nodes,  but is inactive since node 3 is in  $c$.

 By contrast  in Figure \ref{shortest}b), when path $(4,2,1,3,5)$ is an active path connecting the uncoupled node pair  (4,5),  then
 path (4,2,3,5) is a shorter active path.  To see this, notice that on an active  $(4,2,1,3,5)$ path, the inner collision node 1 is in $c$ and the inner transmitting nodes 2 and 3 are outside $c$. In  this case, the inner collision node 2   on the path $(4,2,3,5)$  has node 1 as a descendant in $c$, so that  this shorter path is also active.

We also use the following
results for proving Theorem \ref{thm:1}. The first two are direct consequences of Proposition \ref{prop:1} and imply the pairwise independences of equation \eqref{pairw}. Lemma \ref{lem:1}  results with  the independence form of \eqref{pairw}. Let $h,i,k$ be distinct nodes of $N$.
\begin{lemma}\label{lem:21}
For  $(h, i, k)$ a collision V in \Greg, the inner node $i$ is excluded from $c$ in every
independence statement for  $h,k$  implied by \Greg.
\end{lemma}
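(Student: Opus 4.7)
The plan is to apply the global Markov property for regression graphs (Proposition \ref{prop:1}) directly to the two-edge path $(h,i,k)$ itself. Since any independence statement $h \ci k \mid c$ implied by \Greg requires \emph{every} $hk$-path to break given $c$, it suffices to exhibit a single active path whenever $i \in c$, and the obvious candidate is the collision V itself.

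More concretely, I would argue as follows. Suppose, for contradiction, that some subset $c \subseteq N \setminus \{h,k\}$ with $i \in c$ were such that \Greg implies $h \ci k \mid c$. Consider the path $\pi = (h,i,k)$, which exists in \Greg because $(h,i,k)$ is a V-configuration. Its only inner node is $i$, which is a collision node on $\pi$ by hypothesis, and there are no inner transmitting nodes on $\pi$. Applying Definition \ref{def:33}: the condition on transmitting nodes is vacuously satisfied (the relevant set $m$ places no constraint), while the condition on collision nodes is satisfied because $i \in c$ (note that every node is trivially a descendant of itself, but here the direct membership $i \in c$ already suffices). Hence $\pi$ is active given $c$, that is, $\pi$ does not break given $c$. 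By Proposition \ref{prop:1}, this contradicts the assumption that \Greg implies $h \ci k \mid c$.

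By contraposition, every independence statement $h \ci k \mid c$ implied by \Greg must satisfy $i \notin c$, which is exactly the claim of the lemma. No case distinction among the three types of collision V in Figure \ref{collV} is needed, since Definition \ref{def:33} treats a collision node identically regardless of whether the two incident edges are lines, arrows, or a mixture.

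The argument is genuinely short, and I do not anticipate any real obstacle: the only point requiring mild care is to verify that the definition of ``active'' treats a direct membership $i \in c$ as sufficient for a collision inner node (it does, as ``in $c$ or has a descendant in $c$''), and that $\pi$ having no transmitting inner nodes places no constraint via $m$. Both are immediate from Definition \ref{def:33}.
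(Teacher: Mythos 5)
Your proof is correct and follows exactly the route the paper intends: the paper presents Lemma 1 as a ``direct consequence'' of the global Markov property (Proposition \ref{prop:1}), and your argument is precisely that consequence spelled out -- if $i \in c$, the two-edge path $(h,i,k)$ is itself active given $c$, so no statement $h \ci k \mid c$ can be implied. The only cosmetic remark is that the aside about a node being its own descendant is unnecessary (and not supported by the paper's definition of descendant), but as you note, direct membership $i \in c$ already satisfies the first disjunct of Definition \ref{def:33}.
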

\begin{lemma}\label{lem:22}
For  $(h, i, k)$ a transmitting V in \Greg, the inner node $i$ is included in $c$ in every
independence statement for  $h,k$ implied by \Greg.
\end{lemma}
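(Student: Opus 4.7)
The plan is to apply Proposition \ref{prop:1} (the global Markov property for regression graphs) to the single path constituted by the V itself. Since the statement says that $i$ must lie in the conditioning set $c$ of every independence $h\ci k\mid c$ that \Greg implies, I will argue by contrapositive: if $i\notin c$, then \Greg does not imply $h\ci k\mid c$.

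First I would recall Definition \ref{def:33}: an $hk$-path is active given $c$ iff every inner transmitting node lies in $m=N\setminus(h\cup k\cup c)$ and every inner collision node is in $c$ or has a descendant in $c$. The two-edge path $(h,i,k)$ itself is an $hk$-path whose unique inner node $i$ is, by hypothesis, a transmitting node of the V (this covers all five types of transmitting V's enumerated in the paper: the two directed three-node configurations with a transition or source inner node, and the three further mixed configurations obtained by replacing one or both arrows with an appropriate undirected edge). Hence, on this path, the activity condition reduces to the single requirement that $i\in m$, i.e., $i\notin c$.

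The main step is then immediate: if $i\notin c$, the path $(h,i,k)$ is active given $c$, so by Proposition \ref{prop:1} \Greg does \emph{not} imply $h\ci k\mid c$. Contrapositively, whenever \Greg implies $h\ci k\mid c$, we must have $i\in c$, which is the claim.

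I do not anticipate any genuine obstacle; the only point requiring care is that Definition \ref{def:33} concerns inner nodes on the separating path under consideration, so one must check that the V, viewed as a two-edge $hk$-path, is itself an admissible path in the sense of the definition (i.e., its endpoints are the nodes for which the independence is stated, and $i$ is its sole inner node). This is true by construction, so the application of Proposition \ref{prop:1} is direct. The dual result for collision V's (Lemma \ref{lem:21}) follows by the same template, using that a collision inner node makes the path break precisely when neither it nor any of its descendants lies in $c$.
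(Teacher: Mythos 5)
Your proof is correct and matches the paper's approach: the paper simply states that Lemmas \ref{lem:21} and \ref{lem:22} are direct consequences of Proposition \ref{prop:1}, and your argument---applying the active-path criterion of Definition \ref{def:33} to the two-edge path $(h,i,k)$ itself and taking the contrapositive---is exactly the intended filling-in of that claim.
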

\begin{lemma}\label{lem:1}
A missing $ik$-edge in \Greg implies at least one independence statement $i\ci k|c$ for
$c$ a subset of $N\setminus\{i,k\}$.
\end{lemma}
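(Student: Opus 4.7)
The plan is to exhibit a specific conditioning set $c \subseteq N\setminus\{i,k\}$ for each missing $ik$-edge and to invoke the global Markov property of Proposition \ref{prop:1} to conclude that the graph implies $i\ci k \mid c$. The three candidate sets are the ones already suggested by the pairwise independence formulation \eqref{pairw}, namely
\begin{itemize}
\item[(i)] $c = g_{>j}$ when $i,k$ both lie in the same response component $g_j \subseteq u$;
\item[(ii)] $c = g_{>j}\setminus\{k\}$ when $i\in g_j\subseteq u$ and $k\in g_{>j}$;
\item[(iii)] $c = v\setminus\{i,k\}$ when $i,k$ both lie in a context component $g_j \subseteq v$.
\end{itemize}
In each case it suffices, by Proposition \ref{prop:1}, to verify that every $ik$-path $\pi$ in \Greg breaks given $c$. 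The argument in each case follows the same pattern: one walks along $\pi$ from $i$ toward $k$ and locates the first inner node $h$ that sits in the forbidden region (a transmitting node in $c$, or a collision node with no descendant in $c$), using the block ordering and the absence of (semi-)directed cycles.

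First I would carry out case (iii), which is the cleanest. Any inner node of $\pi$ that lies in $v$ belongs to $c$; a transmitting such node blocks $\pi$ by Definition \ref{def:33}. If $\pi$ leaves $v$ into a response block, then because every arrow points from right to left (from $v$ toward $u$), the first node of $\pi$ in $u$ is reached along an arrow emanating from $v$, and re-entering $v$ requires a second such arrow with head in $u$; this produces a collision V at a node $h\in u$. By absence of directed cycles between response blocks and context blocks, $h$ has no descendant in $v=c\cup\{i,k\}$, and since $h\notin c$ the path breaks at $h$.

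Next I would handle case (ii) and then case (i) by a parallel argument: in case (ii), any deviation of $\pi$ into $g_{>j}$ at a transmitting node immediately produces a break (that node lies in $c$), while deviations into the past $g_{<j}$ must be closed by an arrow returning to $g_{\geq j}$, producing a collision V whose descendants, by the compatible ordering, lie strictly to the left and thus cannot reach $c\subseteq g_{>j}$; in case (i), the path either stays inside $g_j$ (where any transmitting inner node is fine but the absence of a direct $ik$-edge forces at least one collision V within $g_j$, which has no descendants in $g_{>j}=c$) or exits $g_j$ and is handled as in case (ii).

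The main obstacle will be the bookkeeping for case (i): one must check that no composite path, alternating between the dashed-line subgraph of $g_j$ and arrows into and out of the past, can produce an active configuration. This reduces to checking two local facts at every turning point, namely that (a) transmitting nodes falling into $c=g_{>j}$ always break the path, and (b) collision nodes created inside $g_j$ have all descendants in $g_{\leq j}$ and hence outside $c$, so they too break the path. Both facts follow from the compatible ordering of connected components together with the fact that \Greg contains no directed or semi-directed cycles, so once these two local lemmas are verified the case analysis closes.
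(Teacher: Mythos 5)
Your proposal is correct in outline, but it takes a much longer route than the paper does. The paper disposes of this lemma in one line: the statement ``results with the independence form of \eqref{pairw}'', i.e.\ the three pairwise independences in equation \eqref{pairw} are the \emph{defining} Markov property of \Greg, so every missing $ik$-edge carries one of the three conditioning sets you list by construction, and nothing further needs to be checked. You instead treat Proposition \ref{prop:1} as the primitive notion of what the graph implies and verify by hand that every $ik$-path breaks given the candidate $c$; in effect you are re-proving, for a single node pair, the implication from the global to the pairwise Markov property, which the paper outsources to Sadeghi and Lauritzen (2011). Your case analysis is the right one: in case (iii) every excursion of the path into $u$ must contain a collision node (enter and exit both place arrowheads inside $u$, and a dashed line meeting an arrowhead is also a collision), and such a node has no descendants in $v\supseteq c$ since no arrows point into $v$; in cases (i) and (ii) your local facts (a) and (b) are exactly what is needed, and (b) holds because there are no arrows within a connected component, so all proper descendants of a node in $g_j$ lie in $g_{<j}$, disjoint from $c\subseteq g_{>j}$. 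One caution on terminology: you call $g_{<j}$ ``the past'', whereas the paper defines the past of $g_j$ to be $g_{>j}$ (arrows point from right to left, from $g_{>j}$ into $g_j$ and from $g_j$ into $g_{<j}$); your mathematics is consistent with the paper's convention, but the label is flipped and would confuse a reader of this paper. The trade-off is clear: the paper's proof is free but leans on taking \eqref{pairw} as definitional, while yours is self-contained relative to the separation criterion at the cost of the path bookkeeping you acknowledge in case (i).
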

We can now derive the first of the main new results in this paper.
\setcounter{repeat}{0}
\begin{repeatthm}
Two  regression   graphs are Markov equivalent if and only if they have the same skeleton and the same sets of collision {\sf V}s,
irrespective of the type of edge.
\end{repeatthm}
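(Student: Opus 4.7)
For necessity, I work with the three preparatory lemmas. If an edge $ik$ lies in $G_2$ but not in $G_1$, Lemma~\ref{lem:1} yields some $c \subseteq N\setminus\{i,k\}$ for which $G_1$ implies $i \ci k \mid c$; Markov equivalence transfers this to $G_2$, but the single edge $ik$ is an $ik$-path with no inner nodes and is therefore vacuously active with respect to $c$, contradicting Proposition~\ref{prop:1} applied in $G_2$. Once the skeletons coincide, every $\sf V$-configuration $(h,i,k)$ appears in both graphs. Suppose such a $\sf V$ is a collision $\sf V$ in $G_1$ and a transmitting $\sf V$ in $G_2$. Lemma~\ref{lem:1} produces an independence $h \ci k \mid c$ implied by $G_1$; Lemma~\ref{lem:21} forces $i \notin c$, while Markov equivalence combined with Lemma~\ref{lem:22} in $G_2$ forces $i \in c$. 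The resulting contradiction, together with its symmetric version, identifies the collision-$\sf V$ sets.

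For sufficiency, assume $G_1$ and $G_2$ share both skeleton and collision $\sf V$s. By symmetry it suffices to show that every $a \ci b \mid c$ implied by $G_1$ is also implied by $G_2$. Suppose not; then by Proposition~\ref{prop:1} some path in $G_2$ from $a$ to $b$ is active given $c$, and I take $\pi = (i=k_0,k_1,\ldots,k_n=k)$ to be a shortest such active $ik$-path in $G_2$ in the sense of Definition~\ref{def:shortestpi}. The same sequence of nodes is a path in $G_1$ because the skeletons agree, and the plan is to show $\pi$ is active in $G_1$ given $c$, contradicting the assumed independence.

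At each inner node $k_j$ I distinguish two cases. When $(k_{j-1},k_j,k_{j+1})$ is chordless in the skeleton, it is an induced $\sf V$ in both graphs, so by hypothesis the collision-versus-transmitting status of $k_j$ coincides in $G_1$ and $G_2$; the conditions $k_j \in c$ and $k_j \notin c$ are graph-independent, and when $k_j$ is an inner collision node I transfer the relation ``$k_j$ has a descendant in $c$'' from $G_2$ to $G_1$ by iterating the same minimality argument along a descendant sequence terminating in $c$. When $(k_{j-1},k_j,k_{j+1})$ is a triangle, the shortcut $\pi'$ obtained by bypassing $k_j$ via the chord is a strictly shorter $ik$-path; by minimality of $\pi$ the path $\pi'$ cannot be active in $G_2$, and the precise way $\pi'$ fails to be active pins down the edge types in the triangle sharply enough to guarantee that the role of $k_j$ on $\pi$ also satisfies the activity condition in $G_1$.

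\textbf{Main obstacle.} The delicate step is the triangle case of the sufficiency direction. A shared skeleton and shared collision-$\sf V$ set do not by themselves fix every arrow orientation, so descendant relations and the collision/transmitting roles of triangle endpoints on $\pi$ can \emph{a priori} differ between $G_1$ and $G_2$. Extracting from the mere non-activity of every strictly shorter detour enough information about the triangle's edge types to force agreement on $\pi$ in $G_1$ is the crux of the proof, and it is exactly why one must reason with \emph{shortest} active paths rather than arbitrary ones.
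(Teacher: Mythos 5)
Your necessity argument and your overall sufficiency strategy (shortest active paths in the sense of Definition~\ref{def:shortestpi}, transferring the path edge by edge via the shared skeleton and shared collision $\sf V$s) coincide with the paper's proof. The difficulty is that the step you yourself flag as the \emph{main obstacle} --- the triangle case --- is precisely where the paper does its real work, and your proposal stops at announcing that the non-activity of shorter detours ``pins down the edge types sharply enough,'' without carrying out the case analysis that makes this true. Concretely, the paper (i) enumerates the triangles $(k_{n-1},k_n,k_{n+1})$ that a regression graph permits on $\pi$ and checks that all but the two configurations of Figure~\ref{trian} automatically yield a shorter active detour through the chord, so only those two survive; (ii) for those two, uses minimality of $\pi$ to fix each node's role: $k_n$ is a transmitting node and hence outside $c$, and $k_{n-1}$ is a collision node that must lie \emph{in} $c$ --- were it transmitting, or merely an ancestor of $c$, the chord $k_{n-1}k_{n+1}$ would give a shorter active path; and (iii) shows that $k_{n-2}$ and $k_n$ are uncoupled (otherwise another shortest active path would exist), so $(k_{n-2},k_{n-1},k_n)$ is a genuine collision $\sf V$ and $(k_{n-2},k_{n-1},k_{n+1})$ a genuine transmitting $\sf V$ shared by both graphs; these shared $\sf V$s then force $k_{n-1}$ to be a parent of $k_{n+1}$ in the second graph and force the $k_nk_{n+1}$ edge to be an arrow into $k_{n+1}$, which is what keeps $\pi$ active there. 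None of this follows from the generic observation that every strictly shorter detour is inactive; the enumeration and the orientation-forcing must actually be run.

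A second, smaller gap: your transfer of the condition ``$k_j$ has a descendant in $c$'' for inner collision nodes is asserted by ``iterating the same minimality argument along a descendant sequence,'' but ancestral relations depend on arrow orientations that a shared skeleton and shared collision-$\sf V$ set do not by themselves determine (a dashed line in one graph may be an arrow in the other), so a directed path to $c$ in one graph need not be directed in the other. This point needs an explicit argument rather than an appeal to the same minimality device; in the paper's triangle analysis it is resolved by showing that the relevant collision node on a shortest active path must in fact lie in $c$ itself. As it stands, your proposal is a correct plan whose two hardest verifications are left open.
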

\begin{proof}
Regression graphs \Gregone and \Gregtwo are Markov equivalent if and only if for every disjoint subsets $a$, $b$, and $c$ of the node set of $N$,  where only $c$ can be empty,
\begin{equation}
 \text{(\Gregone}   \implies  a\ci b| c) \iff  \text{(\Gregtwo} \implies a\ci b| c) .  \label{p}
\end{equation}

Suppose first  that (\ref{p}) holds. By Lemma \ref{lem:1},  \Gregone and \Gregtwo have the same skeleton, and by Lemma \ref{lem:21} and Lemma  \ref{lem:22},  \Gregone and \Gregtwo have the same collision {\sf V}s.

Suppose next that  \Gregone and \Gregtwo have the same skeleton and  the same collision  {\sf V}s and
consider two arbitrary  distinct nodes $i$ and $k$ and any node subset $c$  of $N\setminus\{i,k\}$.
By Proposition \ref{prop:1},  (\ref{p})   is equivalent to stating that
 for every  uncoupled node pair $i, k$, there is an active
 path with respect to $c$ in \Gregone if and only if there is an active $ik$-path with respect to $c$ in \Gregtwo.

 Suppose further that path $\pi$  is  in \Gregone a shortest active $ik$-path with respect to $c$. Since \Gregone and \Gregtwo   have the same skeleton, the
path $\pi$ exists in \Gregtwo. We need to show that it is active. If all consecutive two-edge-subpaths of $\pi$
  are {\sf V}s then  $\pi$ is active in \Gregtwo.
Therefore, suppose that nodes $(k_{n-1},k_n,k_{n+1})$ on $\pi$  form a triangle instead of a $\sf V$.
It may be checked first, that in all other possible triangles  in regression graphs that can appear on $\pi$ than the two of Figure \ref{trian}, there is as in Figure \ref{shortest}b)
a shorter active path.
To complete the proof, we show that for the two types of  triangles shown in Figure \ref{trian}a) and
Figure \ref{trian}b)  path $\pi$  is also in \Gregtwo an active $ik$-path with respect to $c$.

\begin{figure}[H]
 \centering
           {\includegraphics[scale=.47]{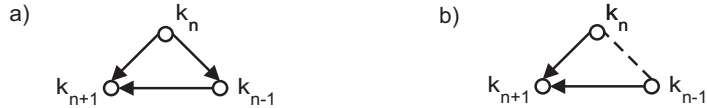}}
             \caption[]{\small The two types of triangles in regression graphs without a shorter active path whenever the path with inner nodes  $(k_{n+1}, k_n, k_{n-1}) $  is active.}
\label{trian} \end{figure}

In \Gregone containing the triangle of Figure \ref{trian}a) on a shortest  active path $\pi$,  node $k_n$ is   a transmitting node, which  is  by Lemma \ref{lem:22} outside  $c$.
By   Lemma \ref{lem:21},  node  $k_{n-1}$ is a collision node  inside $c$.
 If  instead $k_{n-1}$ were a transmitting node on $\pi$ in
 \Gregone, it would also be a transmitting node on $(k_{n-2},k_{n-1},k_{n+1})$ and give a shorter active path via the $k_{n-1}k_{n+1}$-edge, contradicting the assumption of $\pi$ being a shortest path.  Similarly, if collision node $k_{n-1}$ on $\pi$ were only an ancestor of  $c$, then there were a shorter active path via the $k_{n-1}k_{n+1}$-edge.

In addition,  node pair  $k_{n}, k_{n-2}$ is  uncoupled in \Gregone since by inserting any such edge, that is permissible in a regression graph, another shortest  path via the $k_{n-2}k_{n}$-edge would result. Therefore,
since  \Gregone and \Gregtwo have the same collision {\sf V}s, the subpath $(k_{n-2},k_{n-1},k_{n})$ forms also a collision $\sf V$ in \Gregtwo. Similarly, $(k_{n-2},k_{n-1},k_{n+1})$ is a transmitting  {\sf V} and $(k_{n+2},k_{n+1},k_{n})$ is a $\sf V$ of either type. Hence $k_{n-1}$ is a parent of $k_{n+1}$ in \Gregtwo and the only permissible edge between $k_n$ and $k_{n+1}$ is an arrow pointing to $k_{n+1}$. Therefore, $\pi$ forms an active path also in \Gregtwo.

The proof for Figure \ref{trian}b) is the same as for Figure \ref{trian}a) since the type of nodes along $\pi$, i.e.\ as collision  or transmitting nodes,  are unchanged.
\end{proof}
In the example of Figure \ref{fig:4}, all three regression graphs have the same skeleton.

\begin{figure}[H] \centering
           {\includegraphics[scale=.47]{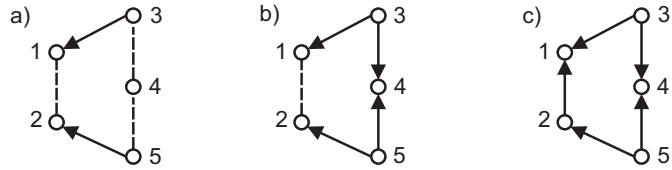}}
            \caption[]{\small{a) Regression graph $G^N_{\mathrm{reg1}}\,$, b) a Markov equivalent regression graph $G^N_{\mathrm{reg2}}\,$ to $G^N_{\mathrm{reg1}}\,$,
  c) a regression graph  $G^N_{\mathrm{reg3}}\,$ that is directed acyclic  and not Markov equivalent to $G^N_{\mathrm{reg1}}\,$.}}
        \label{fig:4}
\end{figure}

 In $G^N_{\mathrm{reg1}}\,$ there are three collision {\sf V}s $(3,4,5)$, $(1,2,5)$, and $(2,1,3)$. In $G^N_{\mathrm{reg2}}\,$ there are the same collision {\sf V}s. Therefore, these two graphs are Markov equivalent. However, there are only two  collision {\sf V}s in $G^N_{\mathrm{reg3}}\, $these are $(3,4,5)$,  and $(2,1,3)$. Hence this graph is not Markov equivalent to $G^N_{\mathrm{reg1}}\,$ and $G^N_{\mathrm{reg2}}\,$. The Markov equivalence of the graphs in Figure \ref{MEQch} to the subgraph induced by $\{b,c\}$ in
Figure \ref{childadv} are further applications of Theorem \ref{thm:1}.
Notice that Propositions 3 to 8 of Section 6 result as special cases of Theorem  \ref{thm:1}.

The following algorithm generates a directed acyclic graph from  a given \Greg that fulfills its  known necessary conditions
for Markov equivalence to a directed acyclic graph; see Proposition 2 of Wermuth (2010). We refer to these connected components  as the blocks of \Greg.\\


\noindent{\bf Algorithm 1.} (Obtaining a Markov equivalent directed acyclic graph from a regression graph).
 \emph{Start from any given \Greg that has  a  chordal concentration graph and no
 chordless collision path in four nodes.
\begin{enumerate}
 \item Apply the maximum cardinality search algorithm on the block consisting of full lines to order the nodes of the block.
 \item Orient the edges of the block from a higher number to a lower one.
 \item Replace collision {\sf V}s by sink {\sf V}s, i.e.\ replace $i\dal\circ\dal k$ and $i\dal\circ\fla k$ by\\ $i\fra\circ\fla k$ when $i$ and $k$ are uncoupled. When a dashed line in a block  is replaced by an arrow, label the endpoints such that the arrow is from a higher number to a lower one if the labels do not already exist.
 \item Replace dashed lines  $i\dal\circ\dal k$ of  triangles  by a sink path $i\fra\circ\fla k$. When a dashed line in a block  is replaced by an arrow, label the endpoints such that the arrow is from a higher number to a lower one if the labels do not already exist.
 \item Replace dashed lines by arrows from a higher number to a lower one.
\end{enumerate}
Continually apply each step until it is not possible to continue applying it further.  Then move to the next step.}

\begin{lemma}\label{leml}
For  a regression graph  with a  chordal concentration graph and  without chordless collision paths  in four nodes,  Algorithm 1 generates a  directed acyclic graph that is Markov equivalent to \Greg.
\end{lemma}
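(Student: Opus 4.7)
The plan is to invoke Theorem \ref{thm:1}: it suffices to verify that the output $G^*$ of Algorithm 1 is (a) a directed acyclic graph, (b) has the same skeleton as the input \Greg, and (c) has the same set of collision $\sf V$s as \Greg. Assertion (b) is immediate, since every step of the algorithm only reorients existing edges without inserting or deleting any edge.

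For (a), I argue block by block. The context block is a chordal concentration graph by hypothesis, and the maximum cardinality search in Step 1 is known to produce a perfect elimination ordering on any chordal graph, so orienting edges from higher to lower label in Step 2 yields a DAG on that block. Arrows between blocks of \Greg already point from a later to an earlier block in the compatible ordering of connected components, and no subsequent step reverses them. Within each response block, Steps 3--5 orient dashed lines using one common higher-to-lower numbering on that block; an induction on the order in which edges become fixed shows that no directed cycle can arise, because every new within-block arrow points to a node of strictly smaller label, while inter-block arrows always point to an earlier block.

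For (c), the hypothesis that \Greg contains no chordless collision path in four nodes is essential. Step 3 directly turns every collision $\sf V$ containing an undirected edge into a sink $\sf V$ on the same three nodes, so the original collision $\sf V$s survive. Step 4 treats triangles: any dashed line whose other two edges already force it to lie on a collision $\sf V$ is converted to arrows meeting at the appropriate apex, so no collision $\sf V$ is created or destroyed. The real danger lies in Step 5, where orienting a remaining dashed $ik$-edge could a priori create a spurious sink $\sf V$ of the form $i \to k \leftarrow j$ at a node $k$ that was not a collision node on the path $(i,k,j)$ of \Greg. Tracing back the forced orientations from such a would-be sink $\sf V$ through Steps 3 and 4 yields a chordless subpath of length three in \Greg both of whose inner nodes are collision nodes with a dashed middle edge, i.e.\ a chordless collision path in four nodes, contradicting the assumption.

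The main obstacle is precisely this last case analysis: one must show that after Steps 3 and 4 have propagated every ``forced'' orientation, each dashed line surviving into Step 5 sits in a configuration where the uniform higher-to-lower-label orientation of its block cannot spoil any collision-$\sf V$ structure. The chordal hypothesis on the context block together with the absence of chordless collision paths in four nodes limits the ways such a spoiling configuration could arise, but the careful bookkeeping of labels and orientations across the five steps of the algorithm is the delicate part of the proof.
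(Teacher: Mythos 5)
Your overall strategy---reduce to Theorem \ref{thm:1} by checking that the output is acyclic, has the same skeleton, and has the same collision $\sf V$s---is exactly the paper's, and your treatment of the skeleton and of the context block (chordality plus maximum cardinality search) matches the published proof. The gap is in the acyclicity argument for the response blocks. You assert that ``every new within-block arrow points to a node of strictly smaller label,'' but labels are assigned only \emph{when they do not already exist}: Step 3 forces the orientation of an edge by the collision-$\sf V$ structure, not by the labels, so if an edge between $i$ and $j$ lies on one collision $\sf V$ with inner node $j$ and on another with inner node $i$, the two forced orientations conflict and at least one forced arrow must run from a lower label to a higher one, which is what would create a directed cycle. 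This is precisely where the hypothesis of no chordless collision path in four nodes must enter: such a conflict exhibits a four-node path $(l,i,j,k)$ whose two inner nodes are both collision nodes and which is chordless (the pairs $l,j$ and $i,k$ are uncoupled because the two triples are $\sf V$s), contradicting the assumption. The paper makes exactly this point (``otherwise, there would have been at step 3 of the algorithm a chordless collision path with four nodes''); without it your induction on the order in which edges become fixed does not go through.

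By contrast, you spend the four-node-path hypothesis on Step 5, guarding against spurious sink $\sf V$s, where it is not needed. Since no arrow is ever reversed and only dashed (or, in the context block, full) lines become arrows, a sink $\sf V$ $i\fra j\fla k$ in the output with $i,k$ uncoupled can only arise from a $\sf V$ whose two edges were each originally an arrow into $j$ or an undirected edge; in a response block such a $\sf V$ is already a collision $\sf V$ of \Greg and is converted to a sink $\sf V$ by the exhaustive application of Step 3, while in the context block the perfect elimination order rules it out. So preservation of collision $\sf V$s is immediate from the edge types, and the ``careful bookkeeping'' you defer in your closing paragraph is exactly the consistency-of-forced-orientations argument that the proof cannot do without; as written, the proposal leaves that step unproved.
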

\begin{proof}
The generated graph is directed since by  Algorithm 1, all edges are turned into arrows. Since the block containing full lines is  chordal, the graph generated by the perfect elimination order of the maximal cardinality search does not have a directed cycle; see \cite{Bla93} Section 2.4 and \cite{TarYan84}.

In addition, the arrows present in  the graph do not change by the algorithm. Thus,  to generate a cycle
containing an arrow of  the original graph, there should have been a cycle in the directed graph  generated
by replacing blocks by nodes.  But,  this is impossible  in a regression graph.  Therefore in the
generated graph,  there is no cycle containing arrows that have been between the blocks of the original graph.

Within a block, all arrows point from nodes with  higher numbers to nodes with lower ones. Otherwise,  there would have been  at step 3 of the algorithm a chordless collision path with four  nodes in the graph. Hence no directed cycle can be generated.

Theorem \ref{thm:1} gives Markov equivalence to \Greg  since Algorithm 1 preserves the skeleton of \Greg and no additional  collision {\sf V} is generated because
 sink oriented {\sf V}s remain, only dashed lines are turned into arrows and no arrows are changed to dashed lines.
\end{proof}

Notice that this algorithm does not generate a unique directed acyclic graph, but every generated directed acyclic graph is Markov equivalent to the given regression graph.
To obtain the overall complexity of Algorithm 1, we denote by $n$ the number of nodes in the graph and by  $e$ the number of edges in the graph.

\begin{coro}  The overall complexity of Algorithm 1 is $O(e^3)$. \end{coro}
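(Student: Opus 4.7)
The plan is to walk through the five steps of Algorithm 1 and bound the cost of each separately, then add the bounds. Throughout, since \Greg is connected we have $n\le e+1$, so $n=O(e)$ and any bound expressed in $n$ can be re-expressed in $e$.

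First, step 1 applies maximum cardinality search to the (chordal) concentration block; by Tarjan and Yannakakis this runs in $O(n+e)$ time. Step 2 traverses the edges of that block once, orienting each from the higher to the lower label at cost $O(e)$. Step 5 analogously re-directs each remaining dashed edge once, again $O(e)$. None of these three steps dominates.

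The potentially expensive steps are 3 and 4. One pass of step 3 can be implemented by scanning, for each node $v$, every pair $(i,k)$ of neighbours of $v$ and testing in $O(1)$ whether $(i,v,k)$ is a collision V with $i,k$ uncoupled; this costs $\sum_v d_v^2 \le 2e\cdot\Delta = O(e\cdot n) = O(e^2)$ per pass. Every successful firing of the step converts at least one dashed line into an arrow, and no edge is converted twice, so the outer loop terminates after at most $e$ passes, giving $O(e^3)$ for step 3. Step 4 is bounded identically: enumerating triangles via the same neighbour-pair scan costs $O(e^2)$ per pass, and the loop fires at most $e$ times, yielding $O(e^3)$.

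Summing the bounds gives $O(e^3)$, as required. The main point requiring care is the outer-loop accounting: once an edge has been reoriented by step 3 or step 4 it is never revisited by the same step, so both ``continually apply'' loops terminate after at most $e$ iterations. Given that observation, the remaining cost analysis is routine bookkeeping.
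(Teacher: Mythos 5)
Your bound is correct, but you arrive at $O(e^3)$ by a genuinely different accounting than the paper. The paper fixes the edge-list input representation and charges the cubic cost to a single comprehensive enumeration: all $e(e+1)/2$ pairs of edges sharing an inner node are examined, and for each candidate collision the uncoupledness of the endpoints is verified by a further $O(e)$ scan of the edge list, giving $e(e+1)(e-2)/2$ for step 3 (and $ne(e+1)/2$ for step 4, where the extra factor is a scan of the node ordering rather than a third pass over the edges). You instead assume $O(1)$ adjacency queries -- which tacitly presupposes an adjacency-matrix or hash-set structure built in preprocessing, a detail worth stating since the paper explicitly takes the input to be a bare sequence of triples -- so that one pass costs only $\sum_v d_v^2=O(e\cdot n)=O(e^2)$, and you recover the third factor of $e$ from the ``continually apply'' outer loop, terminating it by the observation that each productive pass irreversibly converts at least one dashed line into an arrow. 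That termination argument is a real contribution of your version: the paper's proof does not explicitly account for the repeated application of steps 3 and 4, whereas you make the monotonicity of the dashed-to-arrow conversions do that work. What the paper's route buys is independence from any auxiliary data structure; what yours buys is an explicit handling of the iteration, and it would in fact yield the sharper bound $O(e^2\cdot\min\{e,\text{number of dashed lines}\})$ if one cared to state it.
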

\begin{proof}
Suppose that the input of Algorithm 1 is a sequence of triples, each of which consists of the two endpoints of an edge and  of the type of  edge. The length of this sequence is equal to $e$ and the highest number appearing  in the sequence is $n$. For example, the  sequence to the graph of Figure \ref{fig:4}a) is $((1,2,d),(3,1,a),(5,2,a),(4,3,d),(4,5,d))$, where `d' corresponds to a dashed line and   `a' corresponds to an arrow pointing from the first entry to the second one. Notice that this labeling is in general not  the same as the ordering of  nodes  given by  Algorithm 1.

The first two steps of Algorithm  1 can be performed in $O(e+n)$ time; see \cite{Bla93}. Step 3 of  Algorithm 1 may be performed in $e(e+1)(e-2)/2$ steps since for each edge, one can go through the edge set  to find the edges that give  a three node path with an  inner collision node.  This needs $e(e+1)/2$ steps. For each collision node, one goes again through the edge set, excluding the two edges involved in the collision path,  to check if the collision is a {\sf V}. Other actions can be done in constant time.

Step 4 may require $ne(e+1)/2$ steps since  paths considered $\circ \dal\circ\dal \circ $ which do   not form a {\sf V}. Therefore, there is no reason to go through the edge set for the third time, but one might need to go through the node ordering to decide on the direction of the generated arrow. The last step may be  performed with $ne$ steps by going through the edge set changing 'd's to 'a's appropriately by looking at the node ordering. Therefore, the overall complexity of  Algorithm 1 is $O(e^3)$.
\end{proof}

Corollary 2 and Propositions 4 to 8 can now be derived as special cases  of Theorem 1 and Lemma 4. In addition by using Lemma 1, Lemma 2 and  pairwise independences, subclasses of regression graphs can be identified,  which intersect with directed acyclic graphs, with other types of chain graphs, with concentration  graphs or with covariance graphs.

\setcounter{repeat}{1}
\begin{repeatthm}\label{MEQtoDAG}
A regression graph with a chordal graph  for the context  variables  can be oriented to be   Markov equivalent to a directed acyclic graph in the same skeleton,  if and only if it does not contain any  chordless collision path in four nodes.

\end{repeatthm}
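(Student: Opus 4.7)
The plan is to prove the two directions separately, leveraging the work already in place.

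Sufficiency I would obtain immediately from Lemma~\ref{leml}: under the two stated hypotheses --- chordality of the subgraph on the context variables and absence of chordless collision paths in four nodes --- Algorithm~1 explicitly constructs a directed acyclic graph in the same skeleton that is Markov equivalent to \Greg. So this direction requires no further argument.

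For necessity, I would argue that a chordless collision path in four nodes is a hard obstruction to any orientation into a Markov equivalent DAG with the same skeleton. Suppose $\pi = (i_0, i_1, i_2, i_3)$ is such a path in \Greg. By definition of collision path, both inner nodes $i_1$ and $i_2$ are collision nodes on $\pi$, and chordlessness guarantees that the two endpoints of each inner consecutive triple are uncoupled (no $i_0 i_2$-edge and no $i_1 i_3$-edge), so $(i_0, i_1, i_2)$ and $(i_1, i_2, i_3)$ are genuine collision {\sf V}s of \Greg. Assume for contradiction that $D$ is a DAG with the same skeleton as \Greg that is Markov equivalent to \Greg. By Theorem~\ref{thm:1}, $D$ has exactly the same collision {\sf V}s. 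In a DAG, a collision {\sf V} at an inner node $x$ on path $(y,x,z)$ forces $y \to x$ and $z \to x$. Applied to the two inherited collision {\sf V}s along $\pi$, this forces $i_2 \to i_1$ and simultaneously $i_1 \to i_2$, contradicting acyclicity of $D$.

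I do not expect any genuine obstacle here, since Theorem~\ref{thm:1} already reduces Markov equivalence to a combinatorial condition on skeletons and collision {\sf V}s, and Lemma~\ref{leml} already carries out the constructive half. The one small point I would verify is that a possible $i_0 i_3$-edge --- whose presence is not excluded by the definition of a chordless four-node path --- plays no role in the obstruction, because the contradictory orientations of the $i_1 i_2$-edge are forced purely by the two inner collision {\sf V}s and are independent of anything happening between the outer endpoints.
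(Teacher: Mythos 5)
Your proof is correct, and its necessity half takes a genuinely different route from the paper's. The sufficiency direction is the same in both: Lemma~\ref{leml} shows that Algorithm~1 produces a Markov equivalent directed acyclic graph in the same skeleton whenever the concentration graph is chordal and no chordless collision path in four nodes is present, so nothing more is needed there. For necessity, the paper argues by cases on the edge types of the offending path: a semi-directed chordless collision path in four nodes is excluded by appeal to the pairwise independences \eqref{pairw} defining the regression graph, while an undirected such path cannot be fully oriented without turning a collision {\sf V} into a transmitting {\sf V}. You instead give one uniform argument via Theorem~\ref{thm:1}: a Markov equivalent directed acyclic graph on the same skeleton must retain the two inner collision {\sf V}s $(i_0,i_1,i_2)$ and $(i_1,i_2,i_3)$, and since the only collision {\sf V} available in a directed acyclic graph is the sink {\sf V}, the middle edge would have to be oriented simultaneously as $i_2\rightarrow i_1$ and as $i_1\rightarrow i_2$. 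This handles all admissible edge-type combinations of the path at once (recall the paper's observation that such a path must contain at least one dashed line, which is what pushes the paper into its case split), at the price of invoking the full strength of Theorem~\ref{thm:1} where the paper partly argues from the pairwise Markov property directly. Your closing remark is also right: a possible $i_0i_3$-edge belongs to neither of the two inner consecutive triples, so it cannot disturb either forced orientation and plays no role in the obstruction.
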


\begin{proof} Every chordal concentration graph can be oriented to be equivalent to a directed acyclic graph;
see \cite{TarYan84}.
A missing edge for node pair $i<k$ in a directed acyclic  graph means $i\ci k|>i\setminus k$, which would contradict \ref{pairw}$(iii)$ if the graph contained a semi-directed chordless collision path in four nodes. No
undirected chordless collision path in four nodes can be  fully oriented without changing a collision $\sf V$ into a transmitting $\sf V$, but \Greg can be oriented using Algorithm 1 if it contains no such path.
\end{proof}

Notice that for joint Gaussian distributions, Theorem  2  excludes Zellner's seemingly unrelated regressions
and it excludes covariance graphs that cannot be made Markov equivalent to fully directed acyclic graphs; see Proposition
\ref{cono}.

\begin{prop}\label{MEQtoAMP}
A multivariate regression graph  with connected  components $g_1, \dots g_J$ is an AMP chain graph  in the same connected components  if  and only if  the covariance graph of every connected component of responses  is complete.
\end{prop}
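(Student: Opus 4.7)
The plan is to reduce the question to the pairwise Markov properties of the two graph types and then invoke the global Markov criterion of Proposition \ref{prop:1}.

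I would first recall that for an MR chain graph and an AMP chain graph with the same skeleton and the same block partition, the defining pairwise statements $(ii)$ (arrows between blocks) and $(iii)$ (full lines within context blocks) are literally the same, so the two interpretations can only disagree on $(i)$ versus $(i')$: for nodes $i,k$ in the same response component $g_j$, a missing edge yields
\[
i\ci k \mid g_{>j} \text{ (regression)} \quad\text{vs.}\quad i\ci k \mid g_{>j-1}\setminus\{i,k\}=(g_j\setminus\{i,k\})\cup g_{>j} \text{ (AMP).}
\]
Hence the two graphs define the same pairwise independences precisely when no missing dashed edge occurs inside any response component, i.e., when the covariance graph of each response component is complete.

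For the ($\Leftarrow$) direction I would argue as follows. If every response component is a complete covariance graph, then the ``disagreement'' case $(i)$ vs.\ $(i')$ is vacuous: there simply is no pair $i,k$ in a common $g_j$ with a missing edge. All remaining missing edges are arrows between blocks or full lines within context components, and for these $(ii)$ and $(iii)$ coincide across the two graph types. Since both graph types share the factorization \eqref{factdens} and have pairwise Markov properties that are equivalent to their global Markov properties under the composition and intersection axioms, the two graphs encode the same independence structure, so the regression graph is also an AMP chain graph with the same components.

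For ($\Rightarrow$) I would argue contrapositively. Assume some response component $g_j$ has a missing dashed $ik$-edge. Since $g_j$ is a connected component of the covariance subgraph, there exists a shortest $ik$-path inside $g_j$; this path has at least one inner node $l\in g_j\setminus\{i,k\}$ forming an undirected collision $\sf V$ at $l$ of Figure \ref{collV}a). Under the AMP interpretation, the missing edge forces $i\ci k \mid c^*$ with $c^*=(g_j\setminus\{i,k\})\cup g_{>j}$, and this statement is among the (pairwise, hence global) independences of the AMP model. Applying Proposition \ref{prop:1} to the regression graph, however, the subpath $i\dashed l\dashed k$ is active given $c^*$: its inner collision node $l$ lies in $c^*$ (since $l\in g_j\setminus\{i,k\}$), so the activation condition is met; no breaking occurs. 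Consequently the regression graph does not imply $i\ci k\mid c^*$, and a distribution faithful to the regression graph violates this AMP pairwise statement. Thus the two models differ, contradicting the assumption.

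The main obstacle I anticipate is the final step of the ($\Rightarrow$) direction: ensuring that a missing dashed edge inside a connected response component really does produce a witness subpath whose collision inner node activates under the AMP conditioning set. Once the existence of such an $l$ is secured from the connectedness of $g_j$, Proposition \ref{prop:1} delivers the failure of the AMP-style pairwise statement in the regression graph, and the rest is bookkeeping.
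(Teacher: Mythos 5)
Your proposal is correct and rests on the same decomposition as the paper's proof: the pairwise statements $(ii)$ and $(iii)$ coincide for the two graph types, so the comparison reduces to the within-response-component subgraphs, which are covariance graphs given $g_{>j}$ under the regression reading and concentration graphs given $g_{>j-1}$ under the AMP reading. The difference is in how the crucial step is settled. The paper simply invokes Proposition \ref{conc} (a covariance graph is Markov equivalent to a concentration graph only if it consists of complete disconnected subgraphs, so a connected component must be complete), whereas you re-derive the forward direction from first principles: for a missing dashed $ik$-edge you exhibit an undirected path inside $g_j$ that is active given the AMP conditioning set $c^*=(g_j\setminus\{i,k\})\cup g_{>j}$ and conclude via Proposition \ref{prop:1} that the regression graph cannot imply the AMP pairwise statement. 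Your route is more self-contained (it does not lean on the Jensen/Drton--Richardson result) at the cost of length. One small repair is needed: when the shortest $ik$-path inside $g_j$ has more than one inner node, $i\dashed l\dashed k$ is generally not a path of the graph, so the ``witness subpath'' as you phrase it does not exist; instead argue with the whole undirected $ik$-path inside $g_j$, all of whose inner nodes are undirected collision nodes of Figure \ref{collV}a) lying in $g_j\setminus\{i,k\}\subseteq c^*$, so that the entire path is active given $c^*$ by Definition \ref{def:33}. With that adjustment the argument is complete; note also that the non-implication at the level of graphs already refutes Markov equivalence, so the appeal to a faithful distribution in your last step is not needed.
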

\begin{proof}
The  conditional relations of the  joint response nodes in an AMP chain graph coincide with those of the  regression graph with the same connected components. Furthermore,
the subgraph induced by each connected component $g_j$ of an AMP chain graph is a concentration graph given $g_{>j}$ while  in  \Greg it is a covariance graph given $g_{>j}$. By Proposition  \ref{conc}, these have to be complete for Markov equivalence.
\end{proof}

\begin{prop}\label{MEQtoLWF}
A multivariate regression graph  with connected  components $g_1, \dots g_J$ is a LWF chain graph in the same connected components   if   and only if it contains no  semi-directed chordless  collision  path in four nodes and the covariance graph of every connected  component of responses   is complete.
\end{prop}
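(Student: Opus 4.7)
The plan is to mirror the structure of the proof of Proposition \ref{MEQtoAMP} and then add the extra ingredient that handles the discrepancy between (ii) in \Greg and (ii$'$) in LWF. As in that earlier proof, I would verify Markov equivalence by comparing the pairwise independences implied by the two interpretations of the graph — since both satisfy a global Markov property that is equivalent to the pairwise one under the composition and intersection properties — and then invoke Proposition \ref{conc} to handle the within-component mismatch between dashed (covariance-type) and solid (concentration-type) edges.

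For the necessity direction, I would argue in two steps. First, within each response component $g_j$, the regression graph carries a covariance graph given $g_{>j}$ whereas the LWF chain graph carries a concentration graph given $g_{>j}$; exactly as in the proof of Proposition \ref{MEQtoAMP}, Proposition \ref{conc} forces each such induced subgraph to be a complete, disconnected union. Second, the pairwise independence (ii$'$) for a missing edge between $i \in g_j$ and $k \in g_{>j}$ conditions on $g_{>j-1}\setminus\{i,k\}$, which adds $g_j\setminus\{i\}$ to the conditioning set compared to (ii). A semi-directed chordless collision path in four nodes, say $k_1 \,\text{--}\, k_2 \dal k_3 \,\text{--}\, k_4$ with at least one of the outer edges an arrow, produces a missing $k_1k_4$-edge whose pairwise statement under (ii) and under (ii$'$) differ, with the extra conditioned-on node acting as a collision node in \Greg; by Lemma \ref{lem:21} this blocks Markov equivalence. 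So the absence of such paths is necessary.

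For the sufficiency direction, I would assume both conditions and show the pairwise statements of \Greg and the LWF chain graph coincide. Within response components there are no missing edges by completeness, so (i) and (i$'$) are vacuously in agreement. For a missing edge between $i \in g_j$ and $k \in g_{>j}$, I would use Proposition \ref{prop:1} to argue that every active path between $i$ and $k$ under conditioning $g_{>j}\setminus\{k\}$ remains active, and vice versa, under the enlarged conditioning $g_{>j-1}\setminus\{i,k\}$; the key point is that the absence of a chordless four-node collision path ensures that no node in $g_j\setminus\{i\}$ can serve as a collision node whose inclusion in the conditioning set opens a new active $ik$-path (the completeness of the component forces any such potential path to contain a chord). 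Context components are unchanged since both formulations use (iii).

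The hardest step will be the sufficiency argument in the backward direction. Naively the additional conditioning in (ii$'$) could either open new active paths through inner collision nodes in $g_j\setminus\{i\}$ or close transmitting paths. I expect to show that the first possibility is precisely ruled out by the absence of semi-directed chordless collision paths in four nodes, while the second cannot happen because completeness of $g_j$ ensures that any transmitting configuration inside $g_j$ admits a chord that provides an alternative active path. Once the pairwise statements are seen to match, Markov equivalence follows from the equivalence of pairwise and global Markov properties established for this class of graphs, in the same way as in the proof of Proposition \ref{MEQtoAMP}.
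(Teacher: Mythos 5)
Your overall strategy---reduce the within-component part to the argument for Proposition~\ref{MEQtoAMP} via Proposition~\ref{conc}, then account separately for the change from $(ii)$ to $(ii')$---is the same as the paper's, whose proof is only a two-sentence sketch. But your sufficiency argument breaks down at exactly the step you single out as hardest. The claim that the absence of a chordless four-node collision path ensures that no node of $g_j\setminus\{i\}$ can act as a collision node whose inclusion in the conditioning set opens a new active $ik$-path is false, because the problematic configuration needs only \emph{three} nodes. Take $g_j=\{i,i'\}$ with $i\dal i'$, a node $k$ in $g_{>j}$ with $k\fra i'$, and no $ik$-edge: the component is complete and there is no four-node path at all, so both stated hypotheses hold. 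Yet $i\dal i'\fla k$ is a semi-directed collision {\sf V}, so by Lemma~\ref{lem:21} and Proposition~\ref{prop:1} the regression graph implies $i\ci k$ but not $i\ci k|i'$, whereas the LWF reading of the same picture, whose defining statement for the missing edge is precisely $i\ci k|\,g_{>j-1}\setminus\{i,k\}\ni i'$, implies $i\ci k|i'$ but not $i\ci k$. The two interpretations are therefore not Markov equivalent, and completeness of $g_j$ cannot rescue the argument because the only chord that could bypass $i'$ is the $ik$-edge itself, which is missing by assumption.

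What is actually needed---and what the paper's own proof quietly derives in its final clause about ``these bipartite graphs being complete''---is that all nodes of a response component $g_j$ have identical neighbours in $g_{>j}$; equivalently, that there is no semi-directed chordless collision path in \emph{three} nodes, a condition which, given complete components, also excludes the four-node ones. So part of the difficulty is inherited from the statement of the proposition rather than created by you, but a proof that commits, as yours does, to deducing sufficiency from the four-node condition alone cannot be completed. A smaller point on your necessity step: the pairwise statements that $(ii)$ and $(ii')$ assign differently belong to the missing $k_1k_3$-edge of the offending path (one endpoint a response node, the other in its past, with the inner collision node $k_2\in g_j$ entering the conditioning set), not to the $k_1k_4$-edge you examine.
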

\begin{proof}
The proof for the connected components of a LWF chain graph is the same as for an AMP chain graph since they both have
concentration graphs for $g_j$ given $g_{>j}$.
 The  dependences of  joint responses
$g_j$ on $g_{>j}$
coincide in a LWF chain graph with the bipartite part of  the concentration graph  in $g_j \cup g_{>j}$ so that Markov equivalent independence statements can only hold with these bipartite  graphs being complete.\end{proof}

Figure \ref{intersect} illustrates Propositions  \ref{MEQtoDAG}  to \ref{MEQtoLWF} with  modified graphs of Figure \ref{hypfigaft}.  \begin{figure}[H] \centering
           {\includegraphics[scale=.47]{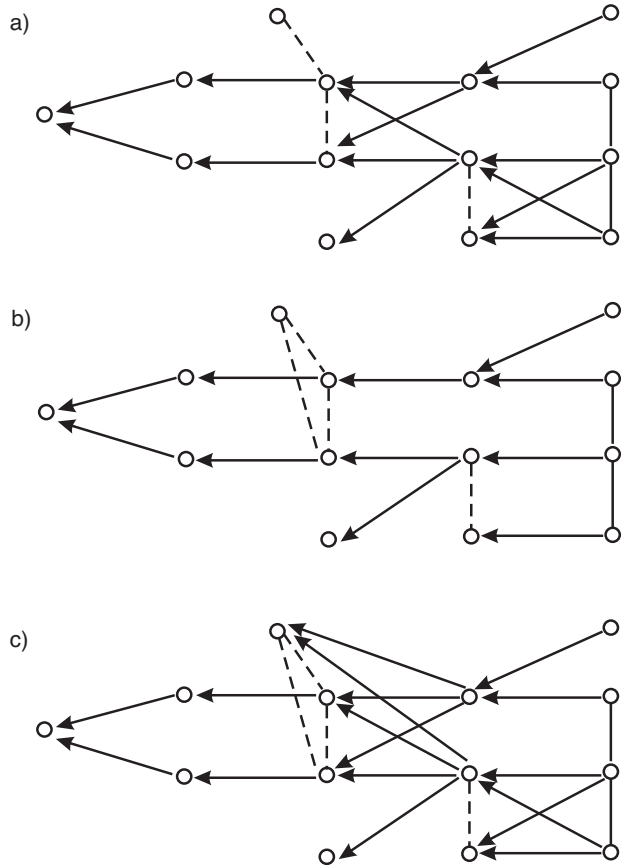}}
            \caption[]{\small{The graph of Figure \ref{hypfigaft} modified by adding edges to obtain a graph that is Markov equivalent  to a) a directed acyclic graph b) an  AMP chain graph in the same connected components c)  a LWF chain graph in the same connected components. }}
        \label{intersect}
\end{figure}

The graphs in Figure  \ref{intersect} are Markov equivalent to a) a directed acyclic graph with the same skeleton obtainable  by Algorithm 1, b) an AMP chain graph in the same connected components and  c) a LWF chain graph in the same connected components.

In general, by inserting some edges, a regression graph model  can be turned into a model in one of the  intersecting classes used  in Propositions \ref{MEQtoDAG}  to \ref{MEQtoLWF}, just as a non-chordal graph may be turned into chordal one by adding edges. When the independence structure of interest is captured by  an edge-minimal
regression graph,  then the resulting graph after adding edges will no longer be an edge-minimal graph and hence will not give the most compact graphical description possible.

However, the graph  with some added edges may define
a covering model that is easier to fit than the reduced model corresponding to the edge-minimal graph, just as an unconstrained Gaussian bivariate response regression on two regressors may be fitted in closed form, while the maximum-likelihood fitting  in the reduced model of Zellner's seemingly unrelated regression  requires iterative fitting algorithms. Any well-fitting covering model in the three  intersecting classes  will show week dependences for the edges that are to be  removed to obtain an edge-minimal graph.

Notice that sequences of regressions in the intersecting class with LWF chain graphs  correspond for Gaussian distributions
to sequences of the general linear models of \cite{And58}, Chapter 8, that is to models in which each joint response  has the same set of
regressor variables. This shows in   \Greg by  identical sets of nodes from which  arrows point to each node
within a connected component.

In contrast, the models in the intersecting classes with the two types of undirected graph may be quite complex in the sense of including many merely generated chordless cycles of size four or larger.

\begin{prop}\label{MEQtoCON}
A multivariate regression graph  has the skeleton concentration graph  if  and only if it contains  no  collision $\sf V$ and it has
the skeleton covariance graph if and only if it contains no transmitting $\sf V$.
\end{prop}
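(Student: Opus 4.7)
The plan is to derive both equivalences as direct corollaries of Theorem \ref{thm:1}, since both the skeleton concentration graph and the skeleton covariance graph share, by construction, the same skeleton as the given regression graph \Greg. Under Theorem \ref{thm:1}, Markov equivalence therefore reduces entirely to comparing the sets of collision {\sf V}s in the two graphs.

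First I would handle the concentration case. The skeleton concentration graph contains exclusively full lines, so every {\sf V} in it is of the form line--node--line. Inspecting the catalogue of collision {\sf V}s in Figure \ref{collV}, none of the three admissible types has two full lines as its edges: (a) requires two dashed lines, (b) requires two arrowheads at the inner node, and (c) requires one arrowhead and one dashed line. Consequently the skeleton concentration graph has an empty set of collision {\sf V}s, and Theorem \ref{thm:1} gives Markov equivalence to \Greg if and only if \Greg also has no collision {\sf V}.

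Next I would dispatch the covariance case symmetrically. The skeleton covariance graph contains exclusively dashed lines, so every {\sf V} in it has the form dashed--node--dashed, which is precisely the undirected collision {\sf V} of Figure \ref{collV}a. Hence every {\sf V} of the skeleton covariance graph is a collision {\sf V}. By Theorem \ref{thm:1}, Markov equivalence to \Greg holds if and only if the sets of collision {\sf V}s match, equivalently every {\sf V} of \Greg is a collision {\sf V}, equivalently \Greg contains no transmitting {\sf V}.

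The argument is essentially a look-up in Figure \ref{collV} combined with Theorem \ref{thm:1}; there is no substantial obstacle, the only thing to be careful about is that one reads off the admissible V-types in the two pure undirected skeletons correctly and notes that the skeletons automatically agree.
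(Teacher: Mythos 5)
Your proposal is correct and rests on the same key observation as the paper's proof, namely that every {\sf V} in a pure covariance graph is a collision {\sf V} and every {\sf V} in a pure concentration graph is a transmitting {\sf V}, so that the comparison reduces to the collision-{\sf V} sets. The paper phrases the conclusion in terms of Lemma \ref{lem:21}, Lemma \ref{lem:22} and the uniqueness of the defining pairwise independences, which really only spells out the necessity direction, whereas your explicit appeal to Theorem \ref{thm:1} (which the paper itself says yields these propositions as special cases) handles both directions cleanly; this is a presentational rather than a substantive difference.
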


\begin{proof}
Every $\sf V$  is a collision $\sf V$  in a covariance graph  and a transmitting $\sf V$ in a concentration graph; see Lemma 1 and Lemma 2. The first includes, the second excludes the inner node from the defining independence
statement. Thus, in the presence of a {\sf V}, one would contradict the uniqueness of the defining pairwise independences.
\end{proof}

Lastly,  Figure \ref{indconc} shows the overall concentration graph induced by \Greg of Figure \ref{hypfigaft}.
It may  be obtained from
the given \Greg by finding first the smallest  covering LWF chain graph in the same connected components, then closing every sink {\sf V} by an edge, i.e.\ adding an edge between its endpoints, and finally
changing all edges to full lines.

In such a graph,  several chordless cycles in four or more nodes may be induced and the connected components of \Greg
may no longer show. In such a case, much  of the important structure of  the generating regression graph is lost.   In addition, merely induced chordless cycles  require iterative algorithms for maximum-likelihood estimation, even for  Gaussian distributions. Thus, in the case of connected  joint responses,  it may be unwise  to use a model search within the class of concentration graph models.

\begin{figure}[H] \centering
           {\includegraphics[scale=.47]{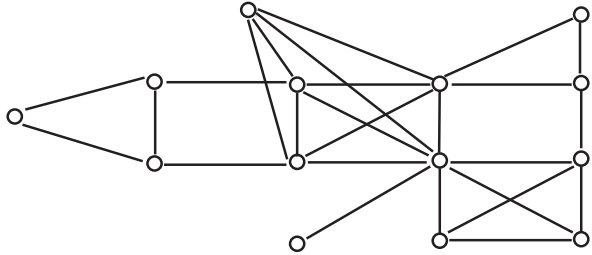}}
            \caption[]{\small{The overall concentration graph induced by the regression graph in Figure \ref{hypfigaft}.}}        \label{indconc}
\end{figure}

This  contrasts with LWF chain graphs that coincide with  regression graphs, such as in Figure \ref{intersect}c).  These preserve the available prior knowledge about the connected components and give Markov equivalence to  directed acyclic graphs so that model fitting  is possible in terms of single response regressions, that is by using just univariate conditional densities.  In addition, the simplified  criteria for Markov equivalence of  directed acyclic graphs apply.

On the other hand, sequences of regressions that coincide with  LWF chains, permit us to model simultaneous intervention on a set of variables since the corresponding independence graphs are directed and acyclic  in   nodes representing vector variables. This represents a conceptually much needed
extension of distributions generated over directed acyclic graphs in  nodes representing single variables, but excludes the more specialized seemingly unrelated regressions and incomplete covariance graphs. \\

\noindent {\bf \large Appendix: Details of regressions for the chronic pain data\\}

 The following tables show the results of  linear least-squares regressions or logistic regressions, one at a time, for each of the response variables and for each component of a joint response separately. At first, each response is regressed on all its potentially explanatory variables given by their  first ordering. The tables give the estimated constant term and for each variable in the regression, its estimated coefficient
(coeff), the estimated  standard deviation of the coefficient ($s_{\rm coeff})$, as well as the ratio $z_{\rm obs}=$coeff/$s_{\rm coeff}$. These ratios are  compared with 2.58, the  0.995 quantile of a random variable $Z$ having a standard Gaussian distribution,  for which $\Pr(|Z|>2.58)=0.01$. In backward selection steps, the variable
with the smallest observed value $|z_{obs}|$ is deleted from a regression equation, one at a time, until the threshold is reached.

\begin{table}[H]
\centering
\vspace{2mm}
\small
\setlength{\tabcolsep}{1mm}
\begin{tabular}{l P{-2,2} P{1,2} P{-1,2} c P{-2,2} P{1,2} P{-1,2} c P{-1,2}}
\toprule
\multicolumn{10}{l}{Response: $Y$, success of treatment; linear regression including a quadratic term}\\
\midrule
& \multicolumn{3}{c}{starting model} && \multicolumn{3}{c}{selected} && \ccolhd{excluded}\\
\cmidrule{2-4} \cmidrule{6-8}
explanatory variables & \ccolhd{coeff} & \ccolhd{$s_{\rm coeff}$} & \ccolhd{$z_{\rm obs}$} &&
\ccolhd{coeff} & \ccolhd{$s_{\rm coeff}$} & \ccolhd{$z_{\rm obs}$} && \ccolhd{$z'_{\rm obs}$}\\
\midrule
constant & 23.40  & -\nn&-\nn&& 20.50 &-\nn & -\nn &&-\nn\\
$Z_a$, pain intensity after&  -1.73& 0.15 & -11.19 && -1.89 & 0.15 & -12.77 && -\nn\\
$X_a$, depression after &  -0.16 & 0.05 &  -3.04&& - \nn &  -\nn&  -\nn && -1.86\\
$Z_b$, pain intensity before&  0.04 & 0.16 & 0.26 && - \nn &  -\nn&  -\nn && 0.65\\
$X_b$, depression before &   0.10 & 0.05 &  1.82 && - \nn &  -\nn&  -\nn &&  0.33\\
$U$, pain chronicity& -0.15 & 0.30 & -0.51  && - \nn & - \nn &  -\nn && -0.99\\
$A$, site of pain& -2.27 & 0.91 & -2.48 && - \nn& - \nn &  -\nn&& -2.33\\
$V$, previous illnesses& 0.19& 0.11&1.76&& - \nn& - \nn &  -\nn&&1.24\\
$B$, level of schooling&  -0.50 & 0.78 & -0.64&& - \nn &   -\nn& -\nn && -0.22\\[1mm]
\hdashline\\[-3mm]
$(Z_a-{\rm mean}(Z_a))^2$& 0.18 & 0.23 & 3.41 && 0.23 & 0.05 & 4.28 && -\nn\\
\midrule
\multicolumn{10}{l}{$R^2_{\rm full}=0.54$\n\nn Selected model$\n Y: Z_a+Z_a^2$ \n \nn$R^2_{\rm sel}=0.49$}\\
\bottomrule \\[-3mm]
\end{tabular}
\label{respY}
\end{table}

\begin{table}[H]
\centering
\vspace{2mm}
\small
\setlength{\tabcolsep}{1mm}
\begin{tabular}{l P{-2,2} P{1,2} P{-1,2} c P{-2,2} P{1,2} P{-1,2} c P{-1,2}}
\toprule
\multicolumn{10}{l}{Response: $Z_a$,  intensity of pain after treatment; linear regression}\\
\midrule
& \multicolumn{3}{c}{starting model} && \multicolumn{3}{c}{selected} && \ccolhd{excluded}\\
\cmidrule{2-4} \cmidrule{6-8}
explanatory variables & \ccolhd{coeff} & \ccolhd{$s_{\rm coeff}$} & \ccolhd{$z_{\rm obs}$} &&
\ccolhd{coeff} & \ccolhd{$s_{\rm coeff}$} & \ccolhd{$z_{\rm obs}$} && \ccolhd{$z'_{\rm obs}$}\\
\midrule
constant & 2.74  & -\nn&-\nn&& 2.98 &-\nn & -\nn &&-\nn\\
$Z_b$, pain intensity before&  0.12 & 0.08 & 1.60 &&  0.16 &  0.07& 2.16$*$ && -\nn\\
$X_b$, depression before &   0.03 & 0.02 &  1.28 && - \nn &  -\nn&  -\nn &&  1.76\\
$U$, pain chronicity& 0.11& 0.14 & 0.75  && - \nn & - \nn &  -\nn && 1.43\\
$A$, site of pain& 1.07 & 0.42 & 2.51 &&  1.27 &  0.39& 3.26\n&& -\nn\\
$V$, previous illnesses& 0.00& 0.05&0.03&& - \nn& - \nn &  -\nn&&0.83\\
$B$, level of schooling&  -0.19 & 0.37 & -0.52&& - \nn &   -\nn& -\nn &&-0.70 \\[1mm]
\midrule
\multicolumn{10}{l}{$R^2_{\rm full}=0.09$ \n  \nn Selected model$\n Z_a: Z_b+A$ \n  \nn$R^2_{\rm sel}=0.07$}\\
\multicolumn{10}{l}{$*$: depression before treatment needed because of  the repeated measurement design;}\\
\multicolumn{10}{l}{the low correlation for $Z_a, Z_b$ is due to a change in measuring, before and after treatment}\\
\bottomrule \\[-3mm]
\end{tabular}
\label{respZa}
\end{table}

The procedure defines a  selected model, unless one of the excluded variables has a contribution of
$|z^{'}_{\rm obs}|>2.58$ when added alone to the selected directly explanatory variables, then such a variable
needs also to be included as an important directly explanatory variable. This did not happen in the given data set.

 The tables show for  linear models also $R^2$, the coefficient of determination, both for the full and for the selected model. Multiplied by 100, it gives the percentage of the variation in the response explained by the model.

\begin{table}[H]
\centering
\vspace{2mm}
\small
\setlength{\tabcolsep}{1mm}
\begin{tabular}{l P{-2,2} P{1,2} P{-1,2} c P{-2,2} P{1,2} P{-1,2} c P{-1,2}}
\toprule
\multicolumn{10}{l}{Response: $X_a$,  depression after treatment; linear regression}\\
\midrule
& \multicolumn{3}{c}{starting model} && \multicolumn{3}{c}{selected} && \ccolhd{excluded}\\
\cmidrule{2-4} \cmidrule{6-8}
explanatory variables & \ccolhd{coeff} & \ccolhd{$s_{\rm coeff}$} & \ccolhd{$z_{\rm obs}$} &&
\ccolhd{coeff} & \ccolhd{$s_{\rm coeff}$} & \ccolhd{$z_{\rm obs}$} && \ccolhd{$z'_{\rm obs}$}\\
\midrule
constant & 2.54  & -\nn&-\nn&& 4.55 &-\nn & -\nn &&-\nn\\
$Z_b$, pain intensity before& -0.05& 0.22 & -0.23 && -\nn &  -\nn& -\nn && -0.21\\
$X_b$, depression before &  0.62 & 0.06 & 10.43 && 0.68 &  0.05&  12.68 && -\nn\\
$U$, pain chronicity& 0.96& 0.42 & 2.28  && - \nn & - \nn &  -\nn &&2.31 \\
$A$, site of pain& -1.19 & 1.25 & -0.95 &&  -\nn & -\nn& -\nn&& -0.10\\
$V$, previous illnesses& 0.05& 0.15&0.35&& - \nn& - \nn &  -\nn&&1.08\\
$B$, level of schooling&  0.15 & 1.09 & 0.14&& - \nn &   -\nn& -\nn && -0.01\\[1mm]
\midrule
\multicolumn{10}{l}{$R^2_{\rm full}=0.46$ \n  \nn Selected model$\n X_a: X_b$ \n  \nn$R^2_{\rm sel}=0.45$}\\
\bottomrule \\[-3mm]
\end{tabular}
\label{respXa}
\end{table}

\begin{table}[H]
\centering
\vspace{2mm}
\small
\setlength{\tabcolsep}{1mm}
\begin{tabular}{l P{-2,2} P{1,2} P{-1,2} c P{-2,2} P{1,2} P{-1,2} c P{-1,2}}
\toprule
\multicolumn{10}{l}{Response: $Z_b$, intensity of pain before; linear regression}\\
\midrule
& \multicolumn{3}{c}{starting model} && \multicolumn{3}{c}{selected} && \ccolhd{excluded}\\
\cmidrule{2-4} \cmidrule{6-8}
explanatory variables & \ccolhd{coeff} & \ccolhd{$s_{\rm coeff}$} & \ccolhd{$z_{\rm obs}$} &&
\ccolhd{coeff} & \ccolhd{$s_{\rm coeff}$} & \ccolhd{$z_{\rm obs}$} && \ccolhd{$z'_{\rm obs}$}\\
\midrule
constant & 7.60  & -\nn&-\nn&& 7.38 &-\nn & -\nn &&-\nn\\
$U$, pain chronicity& 0.10& 0.13 & 0.77  && - \nn & - \nn &  -\nn &&0.59 \\
$A$, site of pain& -0.58 & 0.40 & -1.44 && -\nn &  -\nn& -\nn&& -1.20\\
$V$, previous illnesses& 0.02& 0.05&0.46&& - \nn& - \nn &  -\nn&&0.72\\
$B$, level of schooling&  -0.94 & 0.35 & -2.70&& -0.89 &   0.33& -2.65 && -\nn\\[1mm]
\midrule
\multicolumn{10}{l}{$R^2_{\rm full}=0.05$ \n  \nn Selected model$\n Z_a:  B$ \n  \nn$R^2_{\rm sel}=0.03$}\\
\bottomrule \\[-3mm]
\end{tabular}
\label{respZb}
\end{table}

In the linear regression of $Z_a$ on  $X_a$ and on  the directly explanatory variables of both $Z_a$ and $X_a$,
that is on $Z_b,   X_b,A$,  the contribution of $X_a$  leads to $z_{\rm obs}=3.51$, which  coincides -- by definition --  with  $z_{\rm obs}$ computed for the  contribution of $Z_a$ in the linear regression of $X_a$ on $Z_a$ and on $Z_b,    X_b,A$.   Hence the two responses are correlated
even after considering the directly explanatory variables and a dashed line joining  $Z_a$ and $Z_b$ is added to  the well-fitting regression graph in Figure \ref{figregpain}.

In the linear regression of $Z_b$ on  $X_b$ and on the directly explanatory variables of both $Z_b$ and $X_b$,
that is on  $U,  A,  V, B$,  the contribution of $X_b$  leads to $z_{\rm obs}=2.64$.
  Hence the two responses are associated after considering their directly explanatory variables and there is a dashed line joining $Z_b$ and $X_b$ in the regression graph of Figure \ref{figregpain}.\\[-7mm]

The relatively strict criterion, for excluding variables, assures that all edges in the derived regression graph
correspond to dependences and dependences that are considered to be substantive in the given context.
Had instead a 0.975 quantile been chosen as  threshold, then one arrow from $A$ to $Y$ and another
from $U$ to $X_a$ would have been added to the regression graph. Though this would correspond to a better  goodness-of-fit, such weak dependences are less likely to become  confirmed as being important  in follow-up studies.

\begin{table}[H]
\centering
\vspace{2mm}
\small
\setlength{\tabcolsep}{1mm}
\begin{tabular}{l P{-2,2} P{1,2} P{-1,2} c P{-2,2} P{1,2} P{-1,2} c P{-1,2}}
\toprule
\multicolumn{10}{l}{Response: $X_b$, depression  before; linear regression}\\
\midrule
& \multicolumn{3}{c}{starting model} && \multicolumn{3}{c}{selected} && \ccolhd{excluded}\\
\cmidrule{2-4} \cmidrule{6-8}
explanatory variables & \ccolhd{coeff} & \ccolhd{$s_{\rm coeff}$} & \ccolhd{$z_{\rm obs}$} &&
\ccolhd{coeff} & \ccolhd{$s_{\rm coeff}$} & \ccolhd{$z_{\rm obs}$} && \ccolhd{$z'_{\rm obs}$}\\
\midrule
constant &10.96 & -\nn&-\nn&& 7.31 &-\nn & -\nn &&-\nn\\
$U$, pain chronicity& 1.97& 0.49 & 4.02  && 1.78 & 0.46 &  3.87 &&-\nn \\
$A$, site of pain& -2.33 & 1.50 & -1.55 && -\nn &  -\nn& -\nn&& -1.42\\
$V$, previous illnesses& 0.54& 0.18&2.99&& 0.55& 0.18&  3.06&&-\nn\\
$B$, level of schooling&  -1.10 & 1.31 & -0.84&& -\nn &  -\nn & -\nn && -0.57\\[1mm]
\midrule
\multicolumn{10}{l}{$R^2_{\rm full}=0.18$ \n  \nn Selected model$\n X_b:  U+V$ \n  \nn$R^2_{\rm sel}=0.17$}\\
\bottomrule \\[-6mm]
\end{tabular}
\label{respXb}
\end{table}

\begin{table}[H]
\centering
\vspace{2mm}
\small
\setlength{\tabcolsep}{1mm}
\begin{tabular}{l P{-2,2} P{1,2} P{-1,2} c P{-2,2} P{1,2} P{-1,2} c P{-1,2}}
\toprule
\multicolumn{10}{l}{Response: $U$, chronicity of pain; linear regression}\\
\midrule
& \multicolumn{3}{c}{starting model} && \multicolumn{3}{c}{selected} && \ccolhd{excluded}\\
\cmidrule{2-4} \cmidrule{6-8}
explanatory variables & \ccolhd{coeff} & \ccolhd{$s_{\rm coeff}$} & \ccolhd{$z_{\rm obs}$} &&
\ccolhd{coeff} & \ccolhd{$s_{\rm coeff}$} & \ccolhd{$z_{\rm obs}$} && \ccolhd{$z'_{\rm obs}$}\\
\midrule
constant &2.93 & -\nn&-\nn&& 2.47&-\nn & -\nn &&-\nn\\
$A$, site of pain& 0.95 & 0.21 & 4.58 && 1.02 &  0.20& 5.02&& -\nn\\
$V$, previous illnesses& 0.14& 0.02&5.83&& 0.14& 0.02&  5.92&&-\nn\\
$B$, level of schooling&  -0.27& 0.19 & -1.43&& -\nn &  -\nn & -\nn && -1.43\\[1mm]
\midrule
\multicolumn{10}{l}{$R^2_{\rm full}=0.26$ \n  \nn Selected model$\n X_b:  A+V$ \n  \nn$R^2_{\rm sel}=0.25$}\\
\bottomrule \\[-6mm]
\end{tabular}
\label{respU}
\end{table}

\begin{table}[H]
\centering
\vspace{2mm}
\small
\setlength{\tabcolsep}{1mm}
\begin{tabular}{l P{-2,2} P{1,2} P{-1,2} c P{-2,2} P{1,2} P{-1,2} c P{-1,2}}
\toprule
\multicolumn{10}{l}{Response: $A$,  site of pain; logistic regression}\\
\midrule
& \multicolumn{3}{c}{starting model} && \multicolumn{3}{c}{selected} && \ccolhd{excluded}\\
\cmidrule{2-4} \cmidrule{6-8}
explanatory variables & \ccolhd{coeff} & \ccolhd{$s_{\rm coeff}$} & \ccolhd{$z_{\rm obs}$} &&
\ccolhd{coeff} & \ccolhd{$s_{\rm coeff}$} & \ccolhd{$z_{\rm obs}$} && \ccolhd{$z'_{\rm obs}$}\\
\midrule
constant &0.26 & -\nn&-\nn&& 0.60 &-\nn & -\nn &&-\nn\\
$V$, previous illnesses& 0.05& 0.04&1.22&& -\nn& -\nn &  -\nn &&1.22\\
$B$, level of schooling&  -1.25 & 0.40& -3.11&& -1.28&  0.40 & -3.18 && -\nn\\[1mm]
\midrule
\multicolumn{10}{l}{Selected model\n $A:  B$; response recoded to (0,1)  instead of (1,2) }\\
\bottomrule \\[-3mm]
\end{tabular}
\label{respA}
\end{table}

\begin{table}[H]
\centering
\vspace{2mm}
\small
\setlength{\tabcolsep}{1mm}
\begin{tabular}{l P{-2,2} P{1,2} P{-1,2} c P{-2,2} P{1,2} P{-1,2} c P{-1,2}}
\toprule
\multicolumn{10}{l}{Response: $V$,  previous illnesses; linear regression}\\
\midrule
& \multicolumn{3}{c}{starting model} && \multicolumn{3}{c}{selected} && \ccolhd{excluded}\\
\cmidrule{2-4} \cmidrule{6-8}
explanatory variables & \ccolhd{coeff} & \ccolhd{$s_{\rm coeff}$} & \ccolhd{$z_{\rm obs}$} &&
\ccolhd{coeff} & \ccolhd{$s_{\rm coeff}$} & \ccolhd{$z_{\rm obs}$} && \ccolhd{$z'_{\rm obs}$}\\
\midrule
constant &6.41& -\nn&-\nn&& 5.53 &-\nn & -\nn &&-\nn\\
$B$, level of schooling&  -0.65 & 0.54 & -1.20&& -\nn &  -\nn & -\nn && -\nn\\[1mm]
\midrule
\multicolumn{10}{l}{Selected model\n $V:  -$ }\\
\bottomrule \\[-3mm]
\end{tabular}
\label{respV}
\end{table}

 The subgraph induced by $Z_a, Z_b, X_a, X_b$ of the regression graph in Figure \ref{figregpain} corresponds to two seemingly unrelated regressions. Even though separate least-squares estimates can in principle be severely distorted,
 for the present data, the structure is   so well-fitting in the unconstrained multivariate regression of   $Z_a$ and $X_a$ on $Z_b$, $X_b$, $U,V,A,B$, that is in a simple covering model, that none of these potential problems are relevant.

  With $C=\{U,V,A,B\}$, this is evident from the observed covariance matrix of $Z_a, X_a$ given
  $Z_b, X_b, C$, denoted here by $\tilde{\Sigma}_{aa|bC}$ and the observed regression coefficient matrix
  $\tilde{\Pi}_{a|b.C}$ being almost identical  to the corresponding m.l.e  $\hat{\Sigma}_{aa|bC}$ and $\hat{\Pi}_{a|b.C}$.

  The former can be obtained by sweeping or partially inverting  the observed covariance matrix of the eight variables with respect to $Z_b, X_b, C$ and the latter by using an adaption of the EM-algorithm,
  due to Kiiveri (1989),  on the observed covariance matrix of the four symptoms,  corrected for linear regression on $C$. In this way, one gets
  $$ \tilde{\Sigma}_{aa|bC}= \left(\begin{array}{rr}5.61& 3.91\\3.91& 48.37 \end{array} \right), \nn \nn
  \hat{\Sigma}_{aa|bC}= \left(\begin{array}{rr}5.66& 3.94\\3.94& 48.41 \end{array} \right), $$

   $$ \tilde{\Pi}_{a|b.C}= \left(\begin{array}{rr}0.12 & 0.03\\-0.05& 0.62 \end{array} \right), \nn \nn
  \hat{\Pi}_{a|bC}= \left(\begin{array}{rr}0.14& 0.00\\0.00 & 0.60 \end{array} \right).$$

  The assumed definition of  the joint distribution in terms of  univariate and multivariate regressions assures that  the overall fit of the model can be judged locally in two steps. First, one   compares each unconstrained, full regression
  of a single response with regressions constrained by some independences, that is by
  selecting  a  subset of directly  explanatory variables from the list of the potentially explanatory variables.
  Next, one decides for each component pair of a joint response whether this pair is  conditionally  independent given their  directly explanatory variables considered jointly. This can again be achieved by single univariate regressions, as illustrated above  for the joint responses $Z_a$ and $X_a$.  \\

\noindent{\bf Acknowledgement.} The work of the first author has been supported  in part by the Swedish Research Society via the Gothenburg Stochastic Center and by the Swedish Strategic  Fund via  the Gothenburg Mathematical  Modeling Center.  We thank R. Castelo, D.R. Cox,  G. Marchetti and the referees for their most helpful comments.\\

\renewcommand{\baselinestretch}{1.2}

\end{document}